\newtheorem{theorem}{Theorem}[section]
\newtheorem{lemma}[theorem]{Lemma}
\newtheorem{proposition}[theorem]{Proposition} 
\newtheorem{corollary}{Corollary}[theorem]
\newtheorem{definition}{Definition}[section]
\newenvironment{remark}{\par\noindent{\bf Remark.\ } \it}
\DeclareMathAlphabet{\mathpzc}{OT1}{pzc}{m}{it}
\newcommand{\mbf}{\mathbf}
\newcommand{\mbb}{\mathbb}
\newcommand{\bs}{\boldsymbol}
\newcommand{\mcr}{\mathcal}
\newcommand{\Norm}{\Big|\Big|}
\newcommand{\cond}{\,\lvert\,}
\newcommand{\lcond}{\,\,\big\lvert\,\,}
\newcommand{\norm}[1]{\left\lVert#1\right\rVert}
\DeclareMathOperator{\sign}{sgn}
\DeclareMathOperator{\proj}{proj}
\DeclareMathOperator{\argmin}{argmin}
\DeclareMathOperator{\ext}{ext}
\DeclareMathOperator{\interior}{int}
\begin{document}

\title{Efficient Solution of Boolean Satisfiability Problems with Digital MemComputing}

\author{Sean R.B. Bearden}
\email{email: sbearden@ucsd.edu}
\affiliation{Department of Physics, University of California, San Diego, La Jolla, CA 92093}

\author{Yan Ru Pei}
\email{email: yrpei@ucsd.edu}
\affiliation{Department of Physics, University of California, San Diego, La Jolla, CA 92093}

\author{Massimiliano Di Ventra}
\email{email: diventra@physics.ucsd.edu}
\affiliation{Department of Physics, University of California, San Diego, La Jolla, CA 92093}

\maketitle

{\bf 
	Boolean satisfiability is a propositional logic problem of interest in multiple fields, e.g., physics, mathematics, and computer science. Beyond a field of research, instances of the SAT problem, as it is known, require efficient solution methods in a variety of applications.
	It is the decision problem of determining whether a Boolean formula has a satisfying assignment, believed to require exponentially growing time for an algorithm to solve for the worst-case instances.
	Yet, the efficient solution of many classes of Boolean formulae eludes even the most successful algorithms, not only for the worst-case scenarios, but also for typical-case instances.
	Here, we introduce a memory-assisted physical system (a digital memcomputing machine) that, when its non-linear ordinary differential equations are
	 integrated numerically, shows evidence for polynomially-bounded scalability while solving ``hard'' planted-solution instances of SAT, known to require exponential time to solve in the typical case for both complete and incomplete algorithms. 
	Furthermore, we analytically 
	demonstrate that the physical system can efficiently solve the SAT
	problem in {\it continuous} time, without the need to introduce chaos or an exponentially growing energy.
	The efficiency of the simulations is related to the collective dynamical properties of the original physical system that persist in the numerical integration to robustly guide the solution search even in the presence of numerical errors.
	We anticipate our results to broaden research directions in physics-inspired computing paradigms 
	 ranging from theory to application, from simulation to hardware implementation.
}

\newpage

The Boolean satisfiability problem~\cite{petke2015bridging} (SAT) is an important decision problem solved by determining if a solution exists to a Boolean formula.
A SAT instance is satisfiable when there exists an assignment of Boolean variables (each either TRUE or FALSE) that results in the Boolean formula returning TRUE.
Apart from its academic interest, the solution of SAT instances is required in a wide range of practical applications, including, travel, logistics,  software/hardware design, etc.~\cite{practicalSAT}.\\

The SAT problem has been studied for decades, and has an important role in the history of computational complexity.
Computer scientists, while categorizing the efficiency of algorithms, defined the NP class for difficult decision problems~\cite{cook1971complexity,complexity_bible}.
Some are known as {\it intractable} problems, meaning they are ``hard'' in the sense that all known algorithms cannot be bounded in polynomial time when determining if a solution exists in the worst-case scenario.
The SAT problem was the first to be shown to belong to the class of NP-complete problems~\cite{cook1971complexity}, implying that any decision problem in NP is reducible to a SAT problem in polynomial time.  
There are no known polynomial time algorithms for solving an NP-complete problem, though there are exponential time algorithms that are efficient for special cases of problem structure~\cite{complexity_bible}.
There is a ``widespread belief''~\cite{complexity_bible} that creation of a polynomial time algorithm is impossible, but this belief does not limit the realization of a polynomial {\it continuous-time} physical system.\\

NP-completeness is not exclusive to SAT, with hundreds of other NP-complete problems ranging from those of academic interest (graph theory, algebra and number theory, mathematical programming) to industry application (network design, data storage and retrieval, program optimization)~\cite{complexity_bible}.
If a polynomial time algorithm can solve any NP-complete problem class, then all NP problems can be computed efficiently.
The 3-SAT problem is NP-complete and a special case of SAT~\cite{complexity_bible}.
Randomly-generated 3-SAT instances are known to be difficult to many solution methods because they lack an exploitable problem structure.
For instance, one lauded algorithm, survey inspired decimation (SID), performs well on large instances of uniform random 3-SAT in the ``hard regime''~\cite{mezard2002sid}, but performs poorly in what is known as the ``easy regime''~\cite{parisi2003remarks}. We focus on the 3-SAT problem in the following due to it being a subclass of SAT with a consistent formulaic representation (three literals per clause).\\

\noindent {\bf Physics-inspired approach to computing}

A research direction that has been far less explored concerns the solution of SAT using non-quantum {\it dynamical systems}~\cite{HavaFish,zoltan2011,LPNN,DMM2}. The idea behind this approach is that the solutions of the SAT instance are mapped into the equilibrium points of a dynamical system. If the initial conditions of the dynamics belong to the basin of attraction of the equilibrium points, then the dynamical system will have to ``fall'' into these points. 
The approach is {\it fundamentally} different from the standard algorithms because dynamical systems perform computation in {\it continuous time}. Numerical simulation of continuous-time physical systems, an algorithm, requires the discretization of time to integrate the ordinary differential equations (ODEs) 
representing the physical system. As such, the dynamical-systems approach is ideally suited for a hardware implementation.\\

The authors of Ref.~\onlinecite{zoltan2011} have shown that an appropriately designed dynamical system can find the solutions of hard 3-SAT instances in continuous polynomial time, however, at a cost of {\it exponential} energy fluctuations. 
The reason for this exponential energy cost can be traced to the transient {\it chaotic} dynamics of the dynamical systems proposed in Ref.~\onlinecite{zoltan2011}.
As the problem size grows, the chaotic behavior translates into an exponentially increasing number of integration steps required to find the equilibrium points of the corresponding ODEs. \\

\begin{figure} [t]
	\centering
	\includegraphics[width=0.38\textwidth]{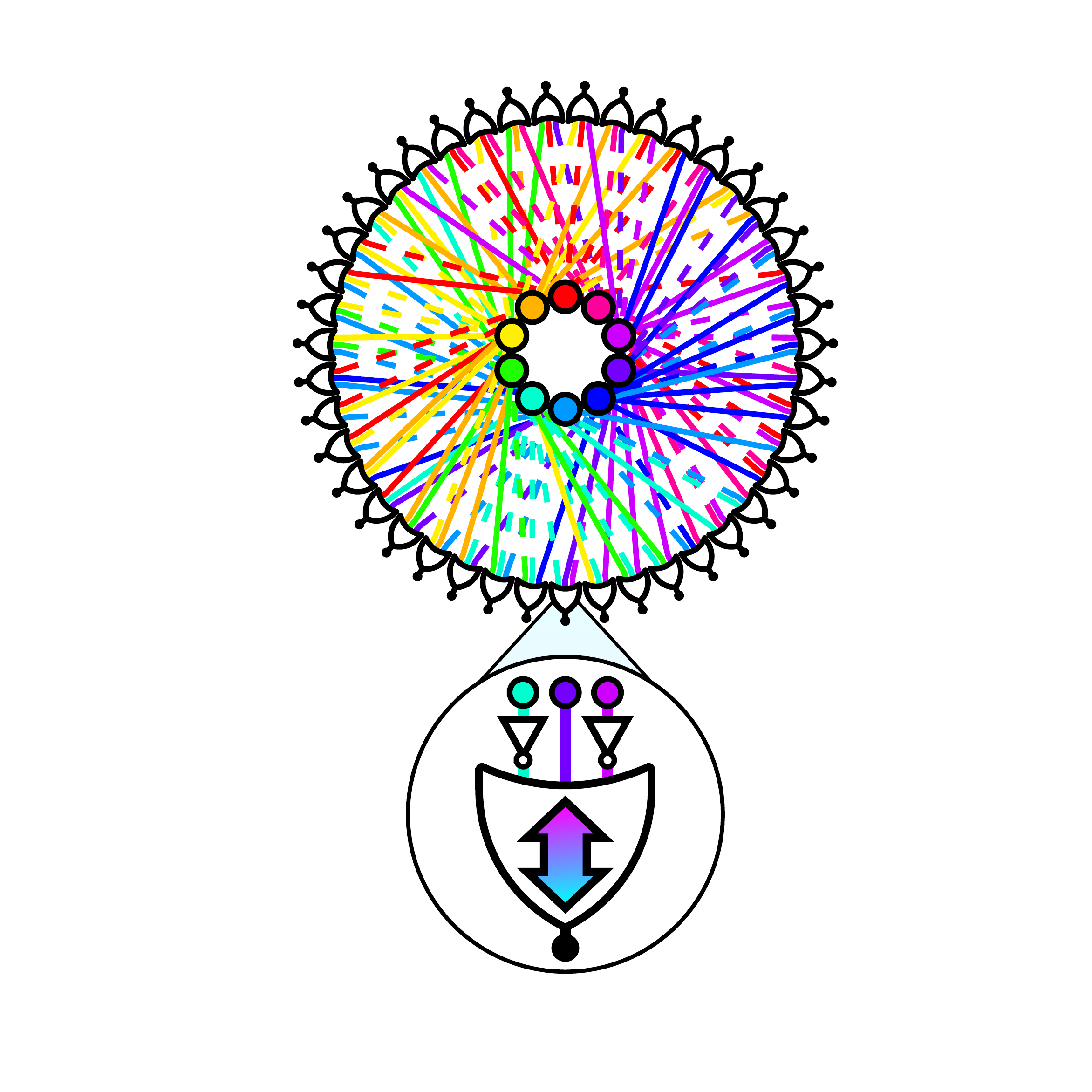}
	\caption{{\bf Schematic of a self-organizing logic circuit representing a 3-SAT instance.} The circuit is created from the constraints of a 3-SAT formula consisting of $N=10$ variables, and $M=43$ clauses. The formula is converted into $10$ voltage nodes (inner nodes) and $43$ self-organizing OR gates~\cite{bearden2018}. The black nodes traditionally associated with the output of the OR gates are fixed to TRUE to enforce the constraints. Dashed lines in the circuit represent NOT gates on the OR gate terminals.
		Ignoring the black nodes, the circuit can be interpreted as a factor graph with the gates becoming function nodes (see also Fig. \ref{fig:factor_graph}). The clause represented by the highlighted self-organizing OR gate is $(\bar{y}_i \vee y_j \vee \bar{y}_k)$, where NOT gates invert the polarity of the voltages. The double-headed arrow indicates this is a self-organzing logic gate with no distinction between an input and an output (terminal agnosticism).  The circular representation of the linear circuit is a reminder that the ordering of gates is irrelevant to the solution search.}
	\label{fig:circle_circuit}
\end{figure}

\noindent{\bf The digital memcomputing approach}

In recent years, a different physics-inspired computational paradigm has been introduced, known as {\it digital memcomputing}~\cite{DMM2,DMMperspective}. 
Digital memcomputing machines (DMMs) are non-linear dynamical systems specifically designed to solve constraint satisfaction problems, {\it e.g.},  3-SAT, with the assistance of memory~\cite{DMM2} (Fig. \ref{fig:circle_circuit}). The only equilibrium point(s) of the DMM is the solution(s) of the original problem. However, unlike previous work, DMMs are designed so that they have no other equilibrium points; see Sec. VI.D of the supplementary material (SM). Additionally, the dynamics will never enter a periodic orbit or a state of chaos~\cite{no-chaos} (see Sec. IX of SM). \\

The ability of continuous time dynamics to perform the solution search without resorting to chaotic dynamics results in efficient {\it simulations} (an algorithmic implementation) of DMMs using computationally-inexpensive integration schemes and modern computers. 
In addition, it was shown that DMMs find the solution of a given problem by employing topological objects, known as instantons, that connect critical points of increasing stability in the phase space~\cite{topo,topo1} (see Sec. XI of SM). 
Simulations found the DMMs then self-tune into a critical ({\it collective}) state which persists for the whole transient dynamics until a solution is found~\cite{bearden2019critical}. It is this critical branching behavior that allows DMMs to explore {\it collective} updates of variables during the solution search, without the need to check an exponentially-growing number of states. This is in contrast to local-search algorithms which are characterized by a ``small'' (not collective) number of variable updates at each step of the computation~\cite{NOAP}. \\

Here, we introduce a physical DMM to find solutions of the 3-SAT problem. 
So as to facilitate the reading of our paper, we have contained the mathematical description of our physical DMM to Box 1.  
We then perform numerical simulations of the ODEs (discretized time) of the DMM to solve random 3-SAT instances with planted solutions.
These instances are generated with a clause distribution control (CDC) procedure, known to require exponentially growing time to solve in the typical case for both 
complete and incomplete algorithms~\cite{barthel2002}. The CDC instances have found use as benchmarks in recent years of SAT competitions  (satcompetition.org)~\cite{cdc2016,satcomp2017benchmarks,satcomp2018benchmarks}. The simulations have been performed using a forward-Euler integration scheme~\cite{sauer2006numerical} with an adaptive time step, implemented in MATLAB R2019b with each solution attempt run on a single logic core of an AMD EPYC 7401 server (see also Sec. II of SM).\\

We compare our results with those obtained from two well-known algorithms: WalkSAT, a stochastic local-search procedure~\cite{selman1993walksat}, and survey-inspired decimation (SID), a message-passing procedure utilizing the cavity method from statistical physics~\cite{mezard2002sid}. (in Sec. II of the SM we also compare with the winner of a recent SAT competition and AnalogSAT~\cite{zoltangpu}). Comparison is achieved via scalability of some indicator vs. the problem size. 
As expected, both algorithms show an exponential scaling for the CDC instances (Fig. \ref{fig:varying_ratio}). 
Our simulations instead show a power-law scalability of integration steps ($\sim\!N^{a}$) for typical cases, where the typical case is inferred from the median
number of integration steps.\\

Finally, we show that the dynamics is capable of finding satisfying variable assignments for 3-SAT in polynomially-bounded (linear or sub-linear) {\it continuous} time without the need of an exponentially increasing energy cost demonstrated via certain dissipative and topological properties of the system (see Secs. X-XI of SM). \\

While the reported numerical and analytical results do not resolve the famous P vs. NP debate,
(which, incidentally, is formulated for Turing machines, that compute in {\it discrete}, not continuous time)
they show the tremendous advantage of physics-based approaches to computation over traditional algorithmic approaches.\\

\begin{mdframed}
\hfill\\
\noindent{ Box 1. \bf DMM for 3-SAT}

The 3-SAT formula is constructed by applying conjunction (AND), disjunction (OR), and negation (NOT) operations to Boolean variables (TRUE or FALSE), with parentheses used to indicate the order of operations~\cite{satreview}. A formula contains $N$ Boolean variables ($y_i$), $M$ clauses, and $3M$ literals. Each clause (constraint) consists of three literals connected by logical OR operations, i.e., $(l_i\vee l_j\vee l_k)$, where a literal, $l_i$, is simply one of the Boolean variables ($l_i=y_i$) or its negation ($l_i=\bar{y_i}$). A clause is satisfied if at least one literal is TRUE (OR operations), and the formula is satisfiable if all clauses (AND operations) are simultaneously satisfied. The complexity of the problem emerges from the interaction among constraints, and is observed in the well-studied easy-hard-easy transition in 3-SAT, where easy and hard regimes are identified by the ratio $\alpha_r=M/N$, with the complexity peak (hardest instances) occurring around $\alpha_r=4.27$~\cite{easy-hard-easy}.\\

To construct a DMM that finds a satisfying assignment for 3-SAT we follow the general procedure outlined in Ref.~\citenum{DMMperspective}. To begin, the Boolean variables, $y_i$, are transformed into continuous variables for use in the DMM. The continuous variables can be realized in practice as voltages on the terminals of a self-organizing OR gate~\cite{DMM2}. Such a gate can
influence its terminals to push voltages towards a configuration satisfying its OR logic {\it regardless} of whether the signal received by the gate originates from the traditional input or the traditional output (see Fig.~\ref{fig:circle_circuit}). The voltages are bounded, $v_i\in [-1,1]$, with Boolean values recovered by thresholding: TRUE if $v_i>0$, FALSE if $v_i<0$, and ambiguous if $v_i=0$. To perform the logical negation operation on the continuous variable, one trivially multiplies that quantity by $-1$. The self-organizing logic circuit that comprises the DMM is built by connecting all of the self-organizing OR gates (see Fig.~\ref{fig:circle_circuit}). See Sec. III.A of SM for an extended discussion of the thresholding procedure for the voltages. \\

Next, we interpret a Boolean clause as a dynamical constraint function, with its state of satisfaction determined by the voltages. The $m$-{\it th} Boolean clause, $(l_{i,m} \vee l_{j,m} \vee l_{k,m})$, becomes a constraint function, 
\begin{equation}
C_{m}(v_{i},v_{j},v_{k}) = \frac{1}{2}\min[(1- q_{i,m}v_i),(1 - q_{j,m}v_j),(1 - q_{k,m}v_k)],
\end{equation}
where $q_{i,m}=1$ if $l_{i,m}=y_i$, and $q_{i,m}=-1$ if $l_{i,m}=\bar{y}_i$. The function is bounded, $C_m \in [0,1]$, and a clause is necessarily satisfied when $C_m<1/2$. The instance is solved when  $C_m<1/2$ for all clauses. By thresholding the clause function we avoid the ambiguity associated with $v_i=0$. If some voltage is ambiguous ($v_j=0$) and all clauses are satisfied, then any Boolean assignment to $y_j$ will be valid in that configuration.
The use of a minimum function in $C_{m}$ preserves an important property of 3-SAT. A clause is a constraint, and, by itself, a clause can only constrain one variable (via its literal). (Note that the minimum operation introduces some form of discontinuity to the dynamical system, for which we develop the formalism to study in Secs. IV and V of SM.)
The values of two literals are irrelevant to the state of the clause
if the third literal results in a satisfied clause. \\

Finally, a DMM employs memory variables to assist with the computation~\cite{DMM2,DMMperspective}. The memory variables transform equilibrium points that do not correspond to solutions of the Boolean formula into unstable points in the voltage space (see Sec. VIII of SM), leaving the solutions of the 3-SAT problem as the only minima. 
We choose to introduce two memory variables per clause: short-term memory, $x_{s,m}$, and long-term memory, $x_{l,m}$. The terminology intuitively describes the behavior of their dynamics. For the short-term memory, $x_{s,m}$ lags $C_{m}$, acting as an indicator of the recent history of the clause. For the long-term memory, $x_{l,m}$ collects information so it can ``remember'' the most frustrated clauses, weighting their dynamics more than clauses that are ``historically'' easily satisfied. Both the number and type of memory variables, as well as the form of the resulting dynamical equations, are not unique provided neither chaotic dynamics nor periodic orbits are introduced~\cite{DMMperspective}. \\

We choose for the dynamics of voltages and memory variables the following,
\begin{eqnarray}
&&\dot{v}_n = \sum_{m} x_{l,m}x_{s,m}G_{n,m}(v_{n},v_{j},v_{k}) + (1+\zeta x_{l,m})(1-x_{s,m})R_{n,m}(v_{n},v_{j},v_{k}),
\label{eq:voltages}\\
&&\dot{x}_{s,m} = \beta (x_{s,m}+\epsilon)(C_m(v_{n},v_{j},v_{k})-\gamma), 
\label{eq:xshort}\\
&&\dot{x}_{l,m} = \alpha (C_m(v_{n},v_{j},v_{k})-\delta),
\label{eq:xlong}\\
&&G_{n, m}(v_{n},v_{j},v_{k}) = \frac{1}{2} q_{n,m} \min[(1 -q_{j,m}v_j),(1 - q_{k,m}v_k)],
\label{eq:G}\\
&&\begin{split}
R_{n,m}(v_{n},v_{j},v_{k}) &= 
&\begin{cases} 
\frac{1}{2}( q_{n,m}-v_n), & C_m(v_{n},v_{j},v_{k})=\frac{1}{2}(1- q_{n,m}v_n), \\
0, & \textrm{otherwise},
\end{cases}
\end{split}
\label{eq:rigid}
\end{eqnarray}\\
where $G_{n,m}$ and $R_{n,m}$ equal 0 when variable $n$ does not appear in clause $m$, and the summation is taken over all constraints in which the voltage appears. The memory variables are {\it bounded}, with $x_{s,m}\in [0,1]$ and $x_{l,m}\in [1,10^4M]$. The boundedness of voltage and memory variables implies that there are no diverging terms  
in the above equations (see Sec. VI.B of SM).\\

The parameters $\alpha$ and $\beta$ are the rates of growth for the long-term and short-term memory variables, respectively.
Each memory variable has a threshold parameter 
used for evaluating the state of $C_m$, and the two parameters are restricted to obey $\delta<\gamma <1/2$. (This also guarantees that there is a sufficiently large basin of attraction for the solutions. See Sec. VII of SM for a detailed explanation.).
Eq. (\ref{eq:xshort}) has a small, strictly-positive parameter, $0<\epsilon \ll 1$, to remove the spurious solution ($x_{s,m}=0$). However, $\epsilon$ additionally serves as a trapping rate in the sense that smaller values of $\epsilon$ make it more difficult for the system to flip voltages when some $C_m$ begins to grow larger than $\gamma$. \\

In Eq.~(\ref{eq:voltages}), the first term in the summation is a ``gradient-like'' term, the second term is a ``rigidity'' term~\cite{bearden2019critical}. The gradient-like term attempts to influence the voltage in a clause based on the value of the other two voltages in the associated clause. Consider the two extremes: if the minimum results is $G_{i,m}=1$, then $v_i$ needs to be influenced to satisfy the clause. Conversely, if the minimum gives $G_{i,m}=0$, then $ v_i$ does not need to influence the clause state (see Sec. II.A of SM).\\ 

The purpose of the three rigidity terms for a constraint is to attempt to hold one voltage at a value satisfying the associated $m$-{\it th} clause, while doing nothing to influence the evolution of the other two voltages in the constraint. 
Again, this aligns with the 3-SAT interpretation that a clause can only constrain one variable. The short-term memory variable acts as a switch between gradient-like dynamics and rigid dynamics. During the solution search, $G_m$ will seek to influence three voltages until clause $m$ has been satisfied. Then, as $x_{s,m}$ decays to zero, $R_m$ takes over. 
The long-term memory variables weight the gradient-like dynamics, giving greater influence to clauses that have been more frustrated during the solution search. The rigidity is also weighted by $x_{l,m}$, but reduced by $\zeta$.\\

\end{mdframed}

 \begin{figure*} [t]
	\centering
	\includegraphics[width=1 \textwidth]{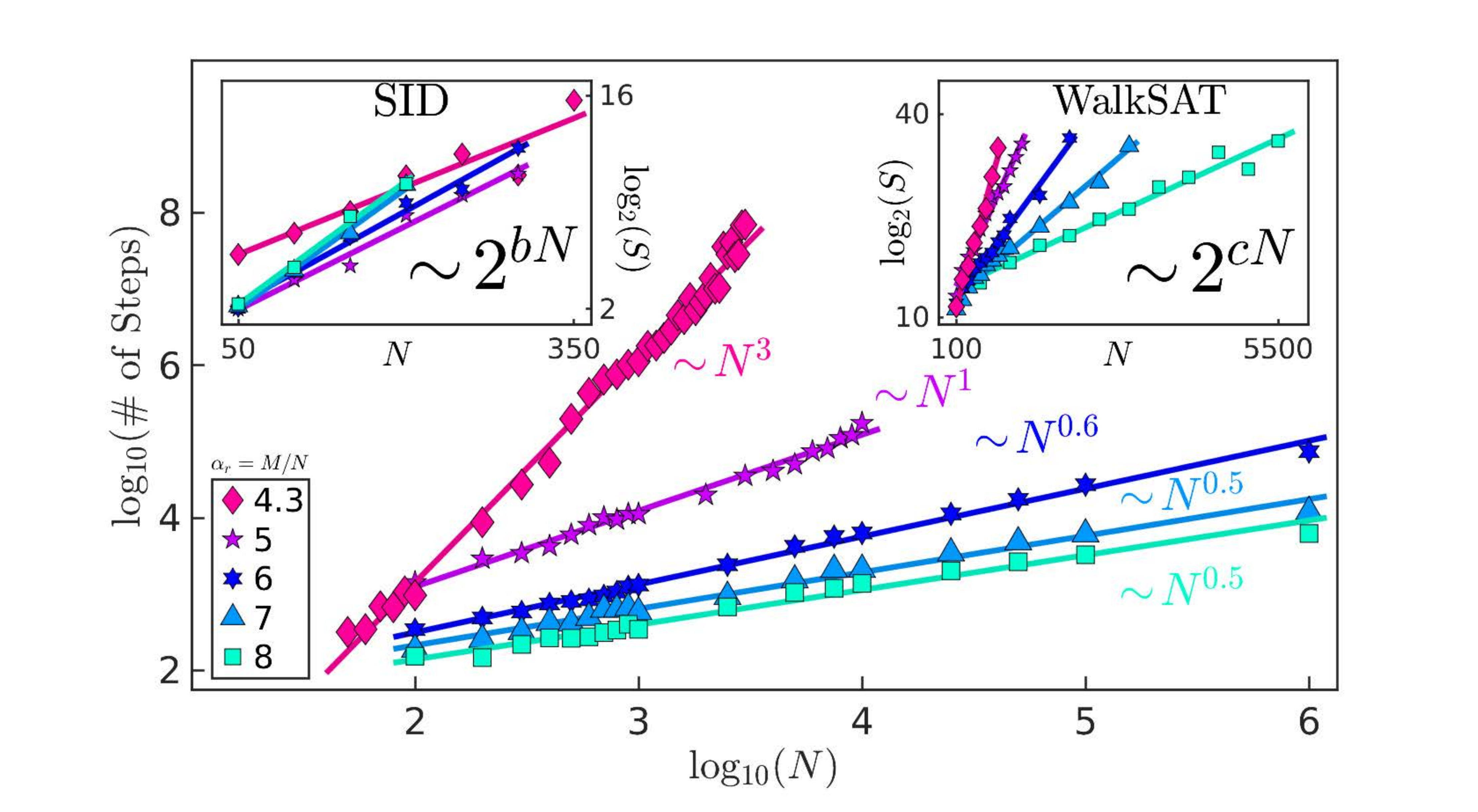}
	\caption{{\bf Typical case scalability of 3-SAT instances at fixed clause-to-variable ratio.} In the main panel, we use our DMM algorithm to attempt to solve 100 planted-solution instances of 3-SAT per pair of $\alpha_r$ (clause-to-variable ratio) and $N$ (number of variables). When we achieve more than 50 instances solved, we find power-law scalability of the median number of integration steps (typical case) as the number of variables, $N$, grows. (In the SM, we show many data points are comprised of 90 or more instances solved within the allotted time.)
 The exponent values ($\sim\!N^a$) are
		$a_{4.3}=3.0\pm0.1$, 
		$a_{5}=1.00\pm0.05$,
	$a_{6}=0.63\pm0.03$,
$a_{7}=0.48\pm0.03$, and
$a_{8}=0.46\pm0.04$.
		The insets show exponential scalability for a stochastic local-search algorithm (WalkSAT) and a survey-inspired decimation procedure (SID) on the same instances. (S is for number of steps.) Notice the scalability for SID has a trend opposite that seen in the DMM and WalkSAT. This is expected when one considers the increase in factor graph loops as $\alpha_r$ grows. For the SID scaling of $\alpha_r=4.3$, the $N=350$ did not achieve a median number of solutions, and is thus a lower bound. Parameters of the scaling for SID: $b_{4.3}=(3\pm1)\times 10^{-2}$,
	$b_{5}=(3.7\pm0.7)\times 10^{-2}$,
	$b_{6}=(4.1\pm0.6)\times 10^{-2}$,
	$b_{7}=(5\pm1)\times 10^{-2}$, and
	$b_{8}=(5\pm1)\times 10^{-2}$; for WalkSAT:
	$c_{4.3}=(3.2\pm0.3)\times 10^{-2}$,
	$c_{5}=(1.9\pm0.2)\times 10^{-2}$,
	$c_{6}=(1.2\pm0.1)\times 10^{-2}$,
	$c_{7}=(7.5\pm0.6)\times 10^{-3}$, and
	$c_{8}=(4.1\pm0.5)\times 10^{-3}$.}

	\label{fig:varying_ratio}
\end{figure*}

\begin{figure*}
	\centering
	\includegraphics[width=1 \textwidth]{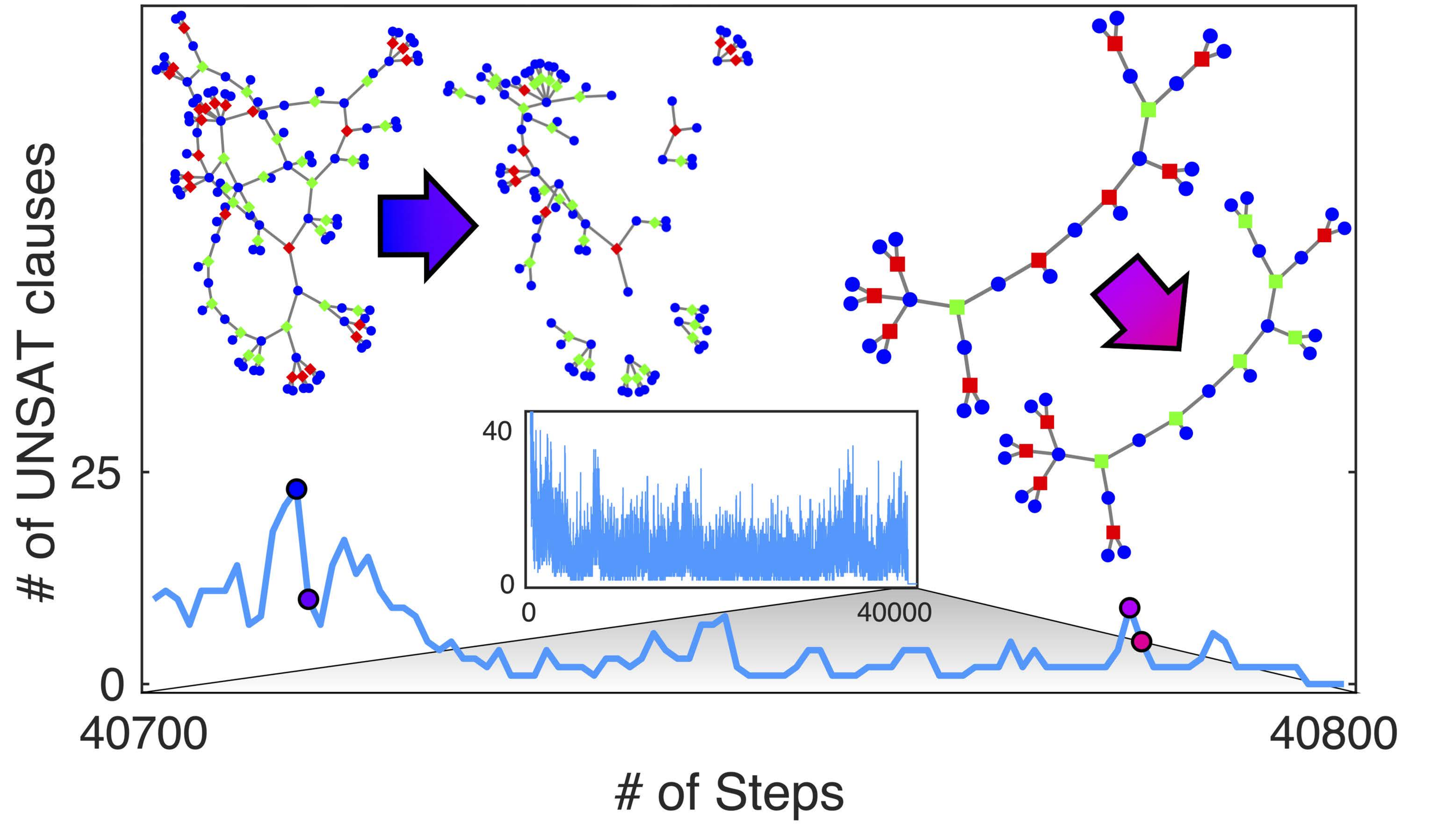}
	\caption{{\bf Time evolution of a typical DMM simulation showing collective updates to the solution search.} The figure highlights one solution attempt of a CDC instance of size $N=500$ at $\alpha_r=4.3$. The inset shows the number of unsatisfied clauses during the entire solution search. The main panel zooms in on the search as the solution is approached. We choose two single integration step transitions and explore the local factor graph. The circles are the variable nodes (blue), and squares are function nodes (red if unsatisfied, green if recently unsatisfied). The transition at left is characterized by 13 clauses becoming satisfied, the transition at right results in 4 clauses becoming satisfied. Neither transition results in satisfied clauses becoming unsatisfied.}
	\label{fig:factor_graph}
\end{figure*}

\noindent{\bf Numerical results and discussion}

It is important to realize that any simulation of a dynamical system is an algorithm because the continuous-time dynamics of the system must be discretized. Identifying our simulation as an algorithm invites a method to compare our results with those of popular algorithms, specifically, WalkSAT~\cite{selman1993walksat} and survey inspired decimation (SID)~\cite{mezard2002sid}. Before we compare results, we then need a general definition of a step.\\


We define an algorithmic step to be all the computation that occurs between checks of satisfiability. The WalkSAT algorithm flips one variable at a time then checks the satisfiability of the formula. Therefore, a WalkSAT step is a single variable flip. SID uses WalkSAT as part of its solution search, so the interpretation of steps is the same when SID uses WalkSAT. Prior to entering into WalkSAT, SID performs a message-passing procedure known as survey propagation~\cite{mezard2002sid}. 
In the SID implementation we used~\cite{grover2018streamlining} there is no check for satisfiability during the decimation procedure, so we generously identify the entire survey propagation with decimation as a single step. Our DMM algorithm checks the satisfiability of the formula after each time step of the integration. 
Of course, the amount of computation within a step may vary greatly based on the algorithm, but this does not affect comparison of the scalability. In fact, if an algorithm is exponential in the number of steps, then the amount of computation within a step cannot improve its scalability. For our DMM, each step has a constant amount of computation per time step of integration. With  this definition of an algorithmic step, we have a method to meaningfully compare the different algorithms. \\

We can now test these approaches on CDC instances with planted solutions. In Sec. III.C of the SM, we give an account 
of how these instances are generated, and why they are difficult to solve. Here, we just note that difficult CDC instances are created when $\alpha_r>4.25$ and $0.077<p_0<0.25$, where $p_0$ is the probability that the planted solution results in a clause with zero false literals~\cite{barthel2002}.
We have performed no preprocessing on the 3-SAT instances to reduce their size, not even the removal of pure literals (those appearing wholly negated or unnegated)~\cite{gu1999algorithms}. \\

We numerically integrated Eqs.~(\ref{eq:voltages}),~(\ref{eq:xshort}), and~(\ref{eq:xlong}) with the forward-Euler method using an adaptive time step, $\Delta t\in[2^{-7},10^3]$. For parameters, we have used $\alpha = 5$, $\beta = 20$, $\gamma=1/4$, $\delta = 1/20$, and $\epsilon = 10^{-3}$. For high ratio, $\alpha_r\geq6$, we find $\zeta=10^{-1}$ to provide better scaling results. For ratios that approach the complexity peak, we used $\zeta=10^{-2}$ for $\alpha_r=5$, and $\zeta=10^{-3}$ for $\alpha_r=4.3$. 
In Fig. \ref{fig:varying_ratio}, we report the results for CDC instances generated with $p_0=0.08$.
In our simulations, we expectedly find the difficulty of CDC instances increases with increasing $p_0$ (see Sec. II in SM). \\

In Fig. \ref{fig:varying_ratio}, for the problem sizes tested,
 we find a power-law scaling for the median number of integration steps for the simulations of DMMs. We also find that integration time variable ($t$), CPU time, and long-term memory ($x_l$) are bounded by a polynomial scaling, and the average step size shows power-law decay (see Sec. II.C of SM). The optimized WalkSAT algorithm~\cite{walksat56} we have used  instead exhibits an exponential scaling at relatively small problem sizes, confirming the previous results of Ref.~\citenum{barthel2002}. An exponential scaling is also observed for the SID algorithm~\cite{grover2018streamlining}.\\

The CDC instances are structured to confuse stochastic local-search algorithms, so the exponential scaling of WalkSAT is expected (right inset Fig.~\ref{fig:varying_ratio}). To understand the exponential performance of SID (left inset Fig.~\ref{fig:varying_ratio}), we need to understand the success of SID on random 3-SAT.
When generating uniform random 3-SAT at the complexity peak with a general method (no planted solutions), the typical case can be exploited by SID due to the existence of treelike structures in the factor graph~\cite{braunstein2005survey}.
(For those unfamiliar with factor graphs, if the factor graph was a tree, then one would be able to visually, thus easily, find the solution from the graph~\cite{Mezard}.) However, as demonstrated in Fig.~\ref{fig:varying_ratio}, SID performs poorly when given a 3-SAT instance with a factor graph that is not locally treelike. It is also known that SID performs poorly at high ratios ($\alpha_r \gtrsim 4.25$)~\cite{parisi2003remarks}, as loops in the factor graph become more common, explaining the opposite scaling trend seen in Fig.~\ref{fig:varying_ratio}.\\

To further confirm that the usefulness of our DMM algorithm on CDC instances is independent of our generation of formulae, we have solved generalized CDC instances~\cite{cdc2016} used in the 2017~\cite{satcomp2017benchmarks} and 2018~\cite{satcomp2018benchmarks} SAT competitions (satcompetition.org).
Our modified competition DMM solves {\it all} tested competition CDC instances on its first attempt with random initial conditions, and does so within the 5000-second timeout established by the competition (see Sec. II.E of SM). We find the overhead of numerical simulations of ODEs does not forbid our DMM from being competitive due to the use of the forward-Euler integration scheme.\\

\noindent{\bf Long-range order and analytical properties of DMMs for 3-SAT}

We finally show that collective behavior ({\it long-range order})~\cite{topo,topo1} in DMMs is responsible for the observed efficiency in the solution search. In order to do this, it is helpful to visualize subgraphs of the factor graph generated from a 3-SAT instance. In Fig.~\ref{fig:factor_graph}, we visualize the change in state of local factor graphs during a single time step of integration as our DMM approaches a solution.
It is apparent that the system explores many paths in the factor graph, collecting information as it does. 
However, unlike SID, when the DMM explores a path leading to contradiction it can {\it correct itself}.
The factor graphs shown in Fig.~\ref{fig:factor_graph} only include clauses (function nodes) that are unsatisfied (red) or recently unsatisfied (green), and all variable nodes connected to these clauses. A clause, $m$, is identified as recently unsatisfied if the short-term memory is $x_{s,m}>0$ but the clause is currently satisfied. The factor graph transitions show that {\it collective events} occur that satisfy multiple clauses. This is in agreement with many results on DMMs for different types of problems~\cite{topo,spinglass}. Additionally, the factor graph transition on the left of Fig.~\ref{fig:factor_graph} breaks up the graph into smaller, disconnected factor graphs, making the search exponentially more efficient. \\


As anticipated, to strengthen these numerical results, we have also analytically demonstrated that the dynamics described by Eqs.~(\ref{eq:voltages}),~(\ref{eq:xshort}), and~(\ref{eq:xlong}) terminate {\it only} when the system has found the solution to the 3-SAT problem (namely the phase space has only saddle points and 
the minima corresponding to the solution of the given problem; Secs. VI and VII of SM). In addition, neither periodic orbits nor chaos can coexist 
if solutions of the 3-SAT are present (Sec. IX of SM). Finally, using supersymmetric topological field theory, we have demonstrated that the continuous-time dynamics (physical implementation) reach the solution of a 3-SAT instance, for a fixed $\alpha_r$, in linear or sub-linear {\it continuous} time, irrespective of the difficulty of the instance (Sec. XI of SM). \\

However, note that such a scalability {\it does not} necessarily translate to the same scalability of the {\it numerical} integration of Eqs.~(\ref{eq:voltages}),~(\ref{eq:xshort}), and~(\ref{eq:xlong}), where the discretization of time is necessary.
Nevertheless, due to the absence of chaos, we empirically find that the scalability of our numerical simulations is still polynomially bounded for typical-case CDC instances.\\

\noindent{\bf Conclusions}

We have presented an efficient dynamical-system approach to solving Boolean satisfiability problems. Along with arguments for polynomial-time scalability in {\it continuous} time, we have found that the {\it numerical} integration of the corresponding ODEs show power-law scalability for typical cases of 3-SAT instances which required exponential time to solve with successful algorithms. 
The efficiency derives from {\it collective} updates to the variables during the solution search ({\it long-range order}). \\

In contrast to previous work~\cite{zoltan2011}, our dynamical systems do not suffer from exponential fluctuations in the energy function due to chaotic behavior. The dynamical systems we propose find the solution of a given problem without ever entering a chaotic regime, 
by virtue of the variables being bounded. 
The implication is that a hardware implementation 
of DMMs would only require a polynomially-growing energy cost. Our work then also serves as a counterexample to the claim of Ref.~\citenum{zoltan2011} that chaotic behavior is necessary for the solution search of hard optimization problems. In fact, we find chaos to be an undesirable feature for a scalable approach (See Sec. II.F of SM). \\

Although these analytical and numerical results do not settle the famous P vs. NP question,
they show that appropriately designed physical systems are very useful tools for new avenues of research in constraint satisfaction problems. \\

\noindent {\bf Data availability}\\
All instances used to generate all figures in this paper are available upon request from the authors.\\

\noindent {\bf Acknowledgments}\\
Work supported by DARPA under grant No.
HR00111990069. M.D., S.R.B.B., and Y.R.P. also acknowledge partial support from the Center for Memory and Recording Research at the University of California, San Diego. S.R.B.B. acknowledges partial support from the NSF Graduate Research Fellowship under Grant No. DGE-1650112, and from the Alfred P. Sloan Foundation's Minority Ph.D. Program.\\

\noindent {\bf Author contributions}\\
M.D. has supervised the project. S.R.B.B. has performed all simulations reported and designed the digital memcomputing machine employed in this work. Y.R.P. proved the theorems in the SM. All authors have discussed the results and contributed to the writing of the paper.\\

\noindent{\bf Competing Interests}\\
M.D. is the co-founder of MemComputing, Inc. (https://memcpu.com/) that is attempting to commercialize the memcomputing technology. All other authors declare no competing interests.

%

\newpage

\section*{Supplementary Material: Efficient Solution of Boolean Satisfiability Problems with Digital MemComputing}

\newpage
\section{Summary of Major Results}

For the benefit of the reader we summarize the major results presented in this Supplementary Material (SM). 

\begin{itemize}
	\item In Section \ref{Addsimulation} we describe the numerical method and implementation we used to solve Eqs. (2)-(4) in the main text. We also show several other numerical results on additional 3-SAT instances to support the ones reported in the main text. In particular, we show that the time variable of integration, CPU time, and slow memory variables all scale as a {\bf power law} in the size of the problem. We also show that the 
	average time step of the integration needs only to decrease as a {\bf power law} with increasing problem size. 
	\item In Sections \ref{lip} and \ref{cara}, we show that a {\bf unique} dynamical trajectory can be constructed for the discontinuous flow field governing our dynamics. For practical purposes, the analytic trajectory is constructed such that it is approximated by the numerical trajectory obtained with the forward Euler integration method used in our numerical analysis reported in the main text.
	\item In Section \ref{sec_mem}, we show that our dynamics are {\bf bounded} by a positive invariant compact set, and the dynamics terminate {\bf only} when the system has found the solution to the 3-SAT problem. This guarantees a correspondence between the fixed points of the dynamics and the solutions of the 
	3-SAT problem, and absence of local minima.
	\item In Section \ref{basin}, we show that the {\bf basin of attraction} of the solution for our flow field contains a large hypercube in the voltage space. In other words, once the trajectory has entered this region, the dynamics are guaranteed to converge to a solution.
	\item In Section \ref{no_period}, we show the {\bf absence of periodic orbits} in the voltage dynamics. This result, augmented by the fact that the memcomputing flow is {\bf not topologically transitive}, implies {\bf absence of chaos} (\`a la Devaney). 
	\item In Section \ref{diss}, we show that our system is {\bf dissipative}, in the sense that the volume of any initial set in the phase space contracts under the flow field.
	\item In Section \ref{TFT} we show using topological field theory that the {\bf continuous-time} dynamics reach the fixed points in a time that scales with problem size, $n$, as $O(n^{\alpha})$ with $\alpha \leq 1$. This result does not necessarily apply to the numerical solution of the dynamical equations due to integration overhead and numerical noise. 
	
\end{itemize}

\section{Numerical implementation and additional simulation results}\label{Addsimulation}

\subsection{Numerics}\label{Numerics}
For ease of discussion, the equations of motion for the digital memcomputing machine (DMM) are reproduced here. The $m$-{\it th} Boolean clause, $(l_{i,m} \vee l_{j,m} \vee l_{k,m})$, becomes a clause function, 
\begin{equation}
C_{m}(v_{i},v_{j},v_{k}) = \frac{1}{2}\min[(1- q_{i,m}v_i),(1 - q_{j,m}v_j),(1 - q_{k,m}v_k)],
\label{eq:C}
\end{equation}
where $q_{n,m}=1$ if $l_{n,m}=y_n$, and $q_{n,m}=-1$ if $l_{n,m}=\bar{y}_n$. The DMM's equations then read:

\begin{eqnarray}
&&\dot{v}_n = \sum_{m} x_{l,m}x_{s,m}G_{n,m} + (1+\zeta x_{l,m})(1-x_{s,m})R_{n,m},
\label{eq:voltages}\\
&&\dot{x}_{s,m} = \beta (x_{s,m}+\epsilon)(C_m-\gamma), 
\label{eq:xshort}\\
&&\dot{x}_{l,m} = \alpha (C_m-\delta),
\label{eq:xlong}\\
&&G_{n, m}(v_{n},v_{j},v_{k}) = \frac{1}{2} q_{n,m} \min[(1 -q_{j,m}v_j),(1 - q_{k,m}v_k)],
\label{eq:G}\\
&&\begin{split}
R_{n,m}(v_{n},v_{j},v_{k}) &= 
&\begin{cases} 
\frac{1}{2}( q_{n,m}-v_n), & C_m(v_{n},v_{j},v_{k})=\frac{1}{2}(1- q_{n,m}v_n), \\
0, & \textrm{otherwise},
\end{cases}
\end{split}
\label{eq:rigid}
\end{eqnarray}\\
where $G_{n,m}$ and $R_{n,m}$ equal 0 when variable $n$ does not appear in clause $m$.

In Eq.~(\ref{eq:voltages}), each of the $N$ voltages (variables) are guided by $M$ constraints (clauses). Each constraint influences three voltages simultaneously, while switching between two dynamical terms containing a gradient-like function, $G_{n,m}$, and a ``rigidity'' function, $R_{n,m}$. \\

In addition to the voltages, memcomputing utilizes memory variables to assist with the computation. The short-term memory, $x_{s,m}$, controls the switching between  $G_{n,m}$ and $R_{n,m}$. The long-term memory, $x_{l,m}$ collects information so it can ``remember'' the most frustrated constraints (unsatisfied clauses), weighting their dynamics more than clauses that are ``historically'' easily satisfied.\\

To understand the gradient-like function better, consider the two extremes: if $G_{n,m}=1$, then $v_n$ needs to be influenced to satisfy the clause.  (Recall, there are three voltages associated with the $m$-\emph{th} constraint, but, independent of information from other constraints, no determination can be made on which voltage needs to be influenced.) Conversely, if $G_{n,m}=0$, then $ v_n$ does not currently need to influence the $m$-{\emph{th}} constraint state. The purpose of the rigidity term, $R_{n,m}$, is to attempt to hold one voltage at a value satisfying the associated $m$-\emph{th} constraint, but to do nothing to influence the evolution of the other two voltages in the constraint.\\

The long-term memory variable weights the gradient-like dynamics, giving greater influence to constraints that have been more frustrated during the solution search. The rigidity term is also weighted by $x_{l,m}$, but reduced by $\zeta$. The parameter $\zeta$ can be thought of as a ``learning rate''. More difficult instances, as characterized by their clause-to-variable ratio,
require more time for $x_{l,m}$ to evolve (slower learning rate) so the phase space can be more efficiently explored. \\ 

Note that the memory dynamics generate a {\it dynamical energy landscape} under which the voltages evolve. This guarantees that the trajectory has the ability to escape any local minima of the original, {\it static} energy landscape of the Boolean satisfiability problem. Visually, whenever the voltages fall into a local minimum of the 
original problem, the memory variables ``deform'' the energy landscape in such a way that the local minimum is transformed into a saddle point, and the trajectory is allowed to continue exploring the energy landscape until it finds the global minimum, which is left invariant by the memory variables (see proposition \ref{fix_sol}). An extended discussion of such dynamical properties is given in Section \ref{v_flow}.\\

It is advised to avoid $\gamma=1/2$ or $\gamma=\delta$.
When $C_{m}=1/2$ and $\gamma=1/2$  we find $\dot{x}_{s,m}=0$ and the system has difficulty leaving the ambiguous state. 
To avoid this complication, assign $\gamma<1/2$.
Eq.~(\ref{eq:xlong}) gives $x_{l,m}$ the ability to decay and it aids dynamics to have $0<\delta<\gamma$. Assigning $\delta$ less than $\gamma$ allows the system an indirect means to influence $x_{s,m}$ when $\dot{x}_{s,m}=0$ and $C_m\neq 0$, by allowing $x_{l,m}$ to continue to grow ($\dot{x}_{s,m}=0$ and $\dot{x}_{l,m}>0\implies \gamma =C_m > \delta$). The parameter $0<\epsilon \ll 1$ is chosen as a small positive number to guarantee that the dynamics of the short-term memory does not terminate when it reaches $x_{s,m} = 0$.  \\

Equations~(\ref{eq:voltages})-(\ref{eq:xlong}) have been numerically integrated with the forward Euler method using an adaptive time step, $\Delta t \in [2^{-7}, 10^3]$, until all clauses have been satisfied, as determined from thresholding Eq. \ref{eq:C} for all $m$. The code has been written in interpreted MATLAB R2019b. Each attempt at solving a clause distribution control (CDC) instance was performed on a single core (no parallelization employed) of an AMD EPYC 7401 server. \\

Note that the above integration scheme is the most basic and, hence, the most unstable we could implement. We thus expect more 
refined integration schemes may provide both better stability and scaling. 

\subsection{Trends of several indicators}

\begin{figure}[t]
	\begin{center}
		\includegraphics[scale=0.425]{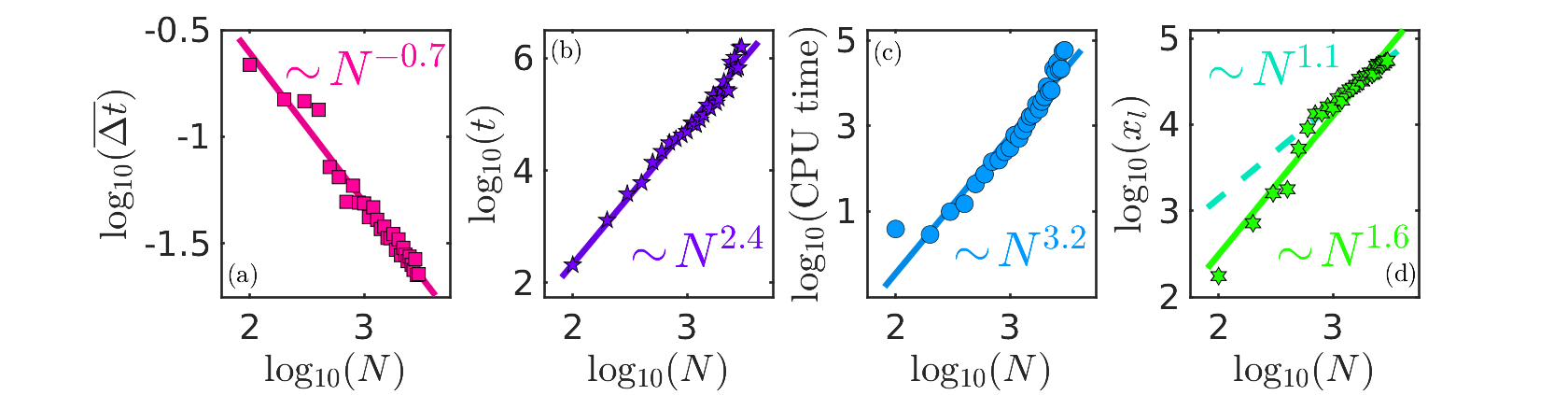}
		\caption{\small \label{stats} Typical-case analysis of other numerical indicators for the $p_0=0.08$, $\alpha_r=4.3$, CDC instances, with $N\in[100,200,...3000]$. Each data point is the median value of 100 instances. (a) The average time step, $\overline{\Delta t}$ (arb. units), showcases a power-law decay. (b) The median time-to-solution for the integration variable, $t$ (arb.units), scales with power-law growth. (c) The CPU time (seconds) scales with power-law growth (d) The median of the maximum values of $x_l$ (arb. units) when the solutions were found. There appears to be a transition in the data, so for a more informative fit (dashed line) we used data from $N\geq10^3$, which have nearly linear growth. }
	\end{center}
\end{figure}

In Fig. \ref{stats}, we show the typical-case behavior of other indicators in the DMM's dynamics as a function of problem size, $N\in[100,200,...3000]$, for difficult CDC instances, corresponding to $p_0=0.08$, $\alpha_r = 4.3$. Each data point is the median value of 100 instances, where 51 or more instances have been solved, with $N\leq2600$ having 90 or more instances solved before a timeout of $10^8$ steps. We observe a power-law growth ($\sim \!\!N^a$ with $a=2.4$ in Fig. \ref{stats}(b)) in the time variable, $t$ (arb. units), and in the CPU time ($a=3.2$ in Fig. \ref{stats}(c)), measured in seconds by MATLAB. \\

We also monitored the growth of $x_{l}$ to make sure there were no exponential ``energy'' costs. For each instance, we collect the maximum value of $x_{l}$, then find the median of those values. Figure \ref{stats}(d) confirms that 
the typical growth of the maximum value of $x_{l}$ follows a power law. 
Visually, we can see the fit ($a=1.6$) on the data from $N\in[100,200,...3000]$ is poor. However, when we fit data for $N\in[1000,1100,...,3000]$ the fit is almost linear ($a=1.1$), in agreement with the approximately linear growth in Eq.~(\ref{eq:xlong}) above (by taking $(C_m-\gamma)$ to be a positive constant).\\

Finally, we observe a power-law decay in the mean size of the time step of
our adaptive integration scheme as a function of problem size (Fig.~\ref{stats}(a)). 
In other words, as the problem size increases, the average time step is decreasing with a lower polynomial bound, rather than exponentially decaying. This observation invites modifications for speeding up solutions without introducing exponential growth into the system.

\subsection{Trends for different values of $p_0$}

In the generation of Barthel instances~\cite{barthel2002}, the parameter $p_0$ increases the backbone size as $p_0\rightarrow0.25$ (see also Sec.~\ref{barthelsection}). A large backbone implies, though not necessarily, a more difficult instance to solve because the solution space is smaller (less solutions). In Fig. \ref{p0}, we indeed see the exponent of the power-law scaling of the typical-case (median) CDC instances increases with increasing $p_0$.\\

The increase of backbone size also seems to cause issues with the forward-Euler integration scheme. We observe that our DMM algorithm encounters integration issues when attempting to extend these trends farther. This indicates that reducing the lower bound of the time step and/or a better integration scheme would be beneficial. Thus, we terminate simulations when the median number of steps is beyond $10^8$. \\

To effectively sample the distribution for typical-case analysis requires a larger sample size per data point. In Fig.~\ref{p0}, each data point represents the median number of steps for a sample of 500 instances.\\

For $p_0=0.08$ and $p_0=0.1$, $N\in[100,200,...1000]$. For $p_0=0.15$ and $p_0= 0.2$, the forward Euler integration scheme becomes unreliable before $N=1000$ could be reached, and such failures occur after $10^8$ steps. For $p_0=0.15$, $N\in[100,200,...600]$, and for $p_0=0.2$, $N\in[100,150,...350]$. This, again, indicates that decreasing the lower bound in the time step and/or a better integration scheme is needed for large $N$ instances. The power-law exponents calculated are $b_{0.08}=3.0\pm0.3$, $b_{0.10}=3.6\pm0.3$, $b_{0.15}=5.5\pm0.7$, and $b_{0.20}=6.6\pm1.1$. (Note that $b_{0.08}=3.0$ differs from the value reported in the main text because we are fitting data for $N\in[100,200,...,1000]$, with each data point being the median of 500 instances, rather than 100 as in the main text.) We compare these results to the 2017 Random Track competition winner, YalSAT~\cite{yalsat}, which clearly showcases exponential scaling.

\begin{figure}
	\begin{center}
		\includegraphics[scale=0.375]{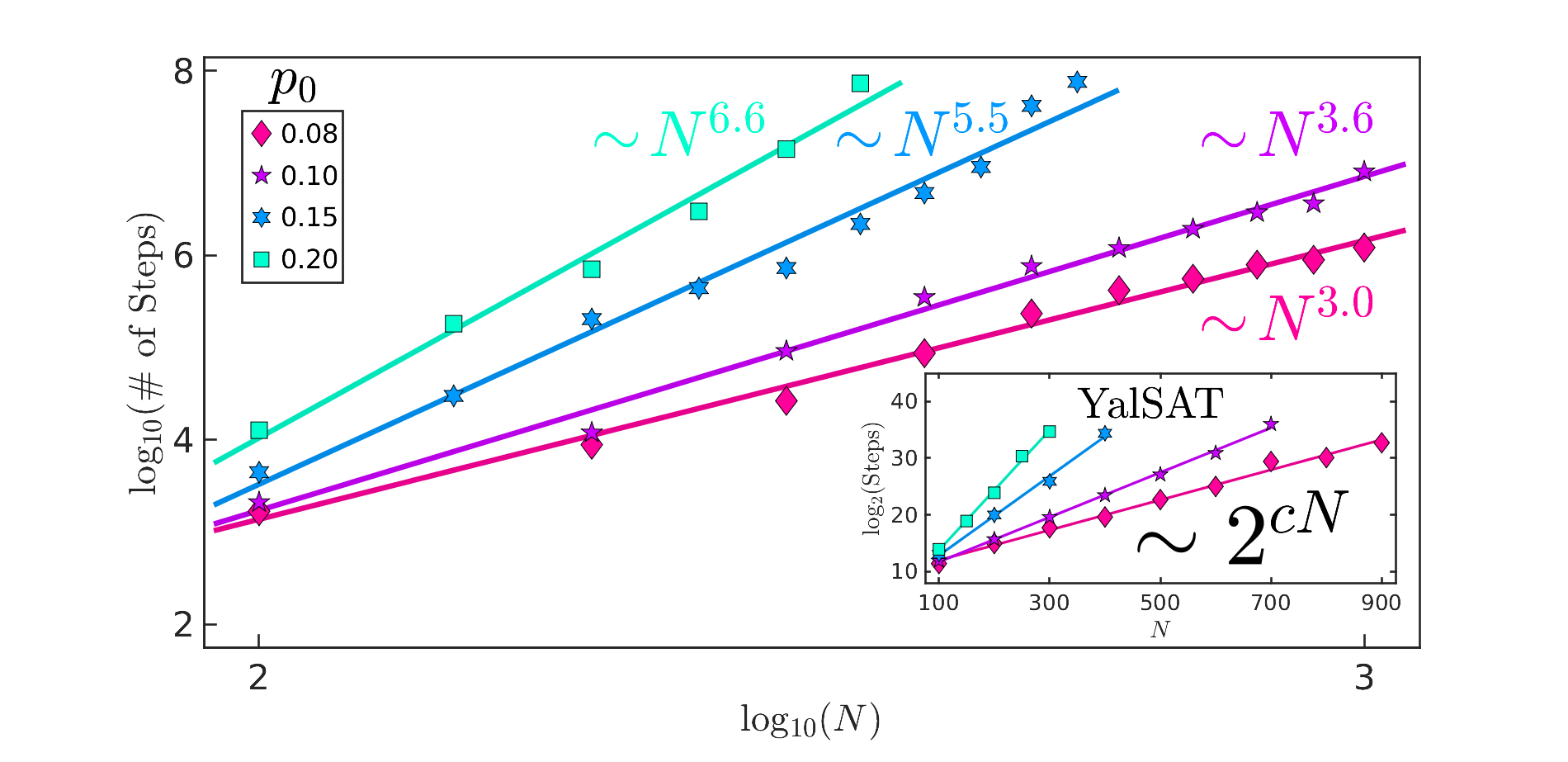}
		\caption{\small \label{p0} Evidence for power-law scaling ($aN^b$) for various values of $p_0$, with $b_{0.08}=3.0\pm0.3$, $b_{0.10}=3.6\pm0.3$, $b_{0.15}=5.5\pm0.7$, and $b_{0.20}=6.6\pm1.1$. (inset) We use the 2017 Random Track competition winner, YalSAT~\cite{yalsat}, to test scalability of a state-of-the-art algorithm. The fitted values of the exponential rates, in arbitrary units, are: $c_{0.08}=0.03\pm0.002$, $c_{0.10}=0.04\pm0.002$, $c_{0.15}=0.07\pm0.02$, $c_{0.20}=0.11\pm0.01$.}
	\end{center}
\end{figure}

\subsection{10-{\it th} to 90-{\it th} percentile range}
Here, we show results beyond our typical-case analysis without changing any parameters or integration scheme. We find a power-law trend as a function of problem size, $N$, at both the 10-{\it th} percentile and 90-{\it th} percentile for $p_0=0.08$. \\

In Fig.~\ref{90}, each data point for $\alpha_r=4.3$ is a median value of 100 instances, where $N\in[100,200,...,2600]$; $\alpha_r=5$, with $N\in[100,200,...,1000,]\cup[2000,3000,...,10^4]$; $\alpha_r=6$ with $N\in[100,200,...,1000]\cup[2500,5000,7500,10^4,2.5\times10^4,5\times10^4,10^5, 10^6]$. \\

Notice how the slopes of $\alpha_r=5,6$ appear to be converging. This may indicate that finite-size effects contribute to the variance of solution steps.
In Fig.~\ref{90}, the $\alpha_r=6$ data points at $N=10^6$ fall below their respective power-law trend lines. This behavior was also observed in Fig. 2 of the main text for $\alpha_r=6, 7, 8$. 
\begin{figure}
	\begin{center}
		\includegraphics[scale=0.4]{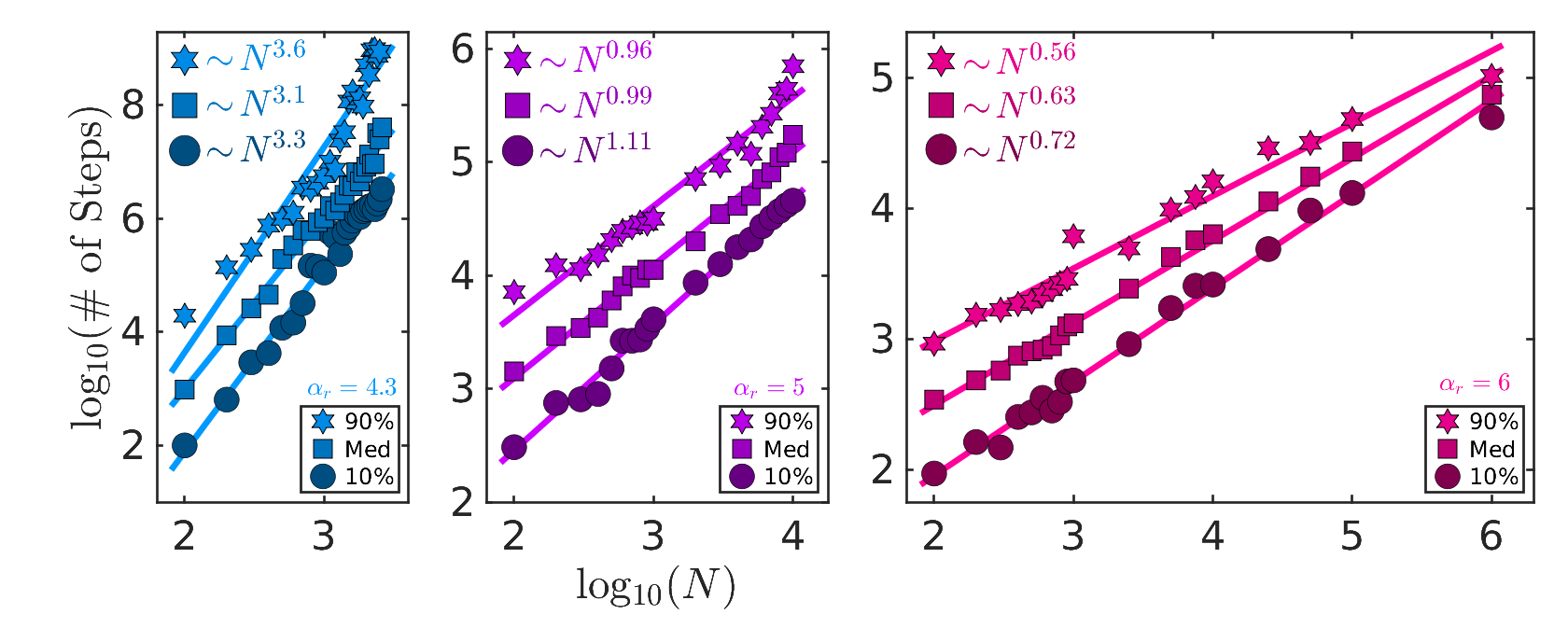}
		\caption{\small \label{90}Extending our typical-case analysis to include the 10-{\it th} and 90-{\it th} percentiles for $\alpha_r=4.3,5,6$ and 
			$p_0=0.08$ fitted to power-law trends.}
	\end{center}
\end{figure}

\subsection{Competition instances}

We sought an independent verification of our DMMs by applying them to instances taken from previous SAT competitions \cite{satcomp2017benchmarks,satcomp2018benchmarks}. Our solver was not designed for competition, so we added a heuristic to enhance its performance. Some competition instances are labeled ``barthel'' ($\alpha_r = 4.3$), ``komb'' ($\alpha_r = 5.205$), and ``qhid'' ($\alpha_r = 5.5$). 
As shown in Fig. \ref{comp}, our DMM is capable of solving all 285 competition problems from the 2017 and 2018 ``Random Tracks'' bearing one of these three labels. Furthermore, we can solve all of these instances within the competition's allotted CPU time (5000 second timeout). While we cannot directly compare CPU times of different machines, the reader can easily verify that our AMD EPYC 7401 server does not have any significant advantage over the machines used in the 2017 and 2018 competitions.\\

We chose to focus on the ``small'' competition instances because so many competition solvers failed to solve them.
For instance, in the 2017 Random Track there were 120 ``small'' instances that should be ``easy'' to solve in 5000 seconds. However, the 2017 Random Track winner (YalSAT) solved 124 out of 300 competition instances~\cite{satcomp2017benchmarks}.
Similarly, in the 2018 Random Track there were 165 ``small'' instances that should be ``easy'' to solve in 5000 seconds. The 2018 Random Track winner (Sparrow2Riss-2018) solved 188 out of 255 competition instances~\cite{satcomp2018benchmarks}. With the addition of more heuristics to our system, our DMM algorithm could possibly surpass previous competition performances.\\

We modified our algorithm to perform in the context of competition, by making one major modification: each constraint has its own $\alpha_m$ associated with $C_m$, and it is modified in regular intervals during the solution search. Initially, for all clauses, $\alpha_m=5$, and all other parameters remain unchanged from the main text. The search for the solution is initialized as before, but after $10^4$ arbitrary time units the simulation is paused to modify the values of $\alpha_m$. 
The procedure starts by finding the median of the $x_{l,m}$ values for all $m$. If $x_{l,m}$ is greater than the median, then the corresponding $\alpha_m$ is increased by a multiplicative factor of 1.1, otherwise, the corresponding $\alpha_m$ is decreased by a multiplicative factor of 0.9. To prevent decay to zero, $\alpha_m=1$ is the minimum value. If $x_{l,m}$ grows to its maximum cutoff, the process restarts by setting  $x_{l,m}=1$ and $\alpha_m=1$. The integration is resumed without modification to any other variables or parameters, and will repeat after another $10^4$ arbitrary time units.

\begin{figure}
	\begin{center}
		\includegraphics[scale=0.4]{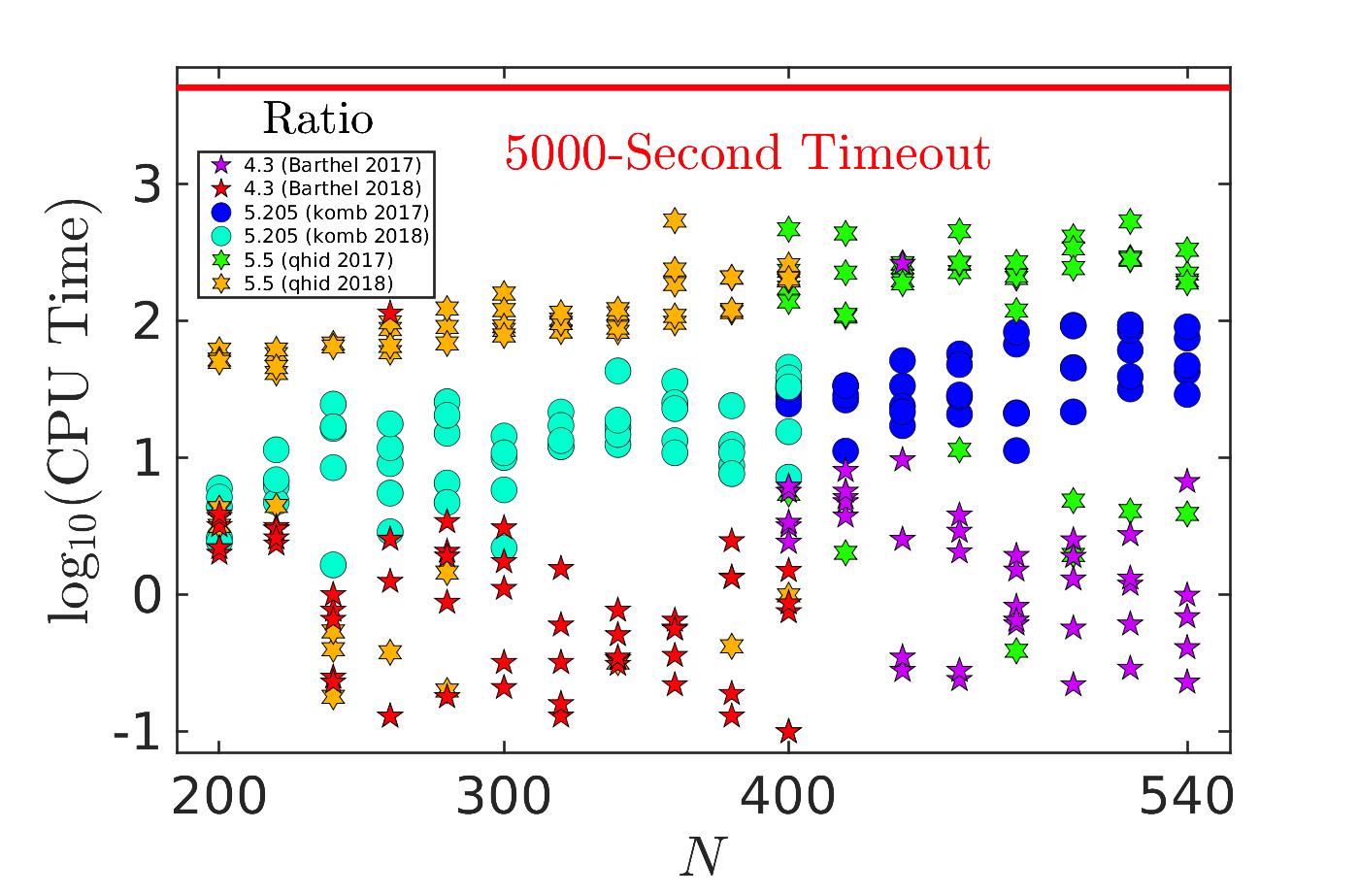}
		\caption{\small \label{comp} Results of a DMM algorithm solving competition instances (from the 2017 and 2018 SAT competitions \cite{satcomp2017benchmarks,satcomp2018benchmarks}), on a single core (no parallelization employed) of an AMD EPYC 7401 server, with only one set of random initial conditions. Note that some data points overlap.}
	\end{center}
\end{figure}

\subsection{Random 3-SAT}

While we have chosen to use planted-solution 3-SAT instances for the stated reasons (solution existence known), other authors~\cite{braunstein2005survey, zoltangpu} choose to work with 3-SAT instances that lack clause distribution control and have no guarantee of the existence of a solution. Recall that the algorithms discussed herein are all incomplete SAT solvers, meaning they cannot prove a solution does not exist (UNSAT). Therefore, scalability tests on general random 3-SAT instances have a degree of uncertainty regarding whether it is possible to find solutions to all instances tested. The SID algorithm removes much of the uncertainty by manipulating a property of the SAT/UNSAT transition: for $\alpha_r < 4.267$, the probability that a randomly generated instance has a solution approaches $1$ as $N$ grows~\cite{braunstein2005survey}. \\

When $N$ is small, it is unlikely all generated instances will be satisfiable, so the numerical simulation of AnalogSAT~\cite{zoltangpu} takes another approach to generate satisfiable instances. Starting with random 3-SAT instances, the authors use another algorithm, MiniSAT~\cite{minisat}, to filter the instances. That is, AnalogSAT is only tested on instances that MiniSAT can solve. However, this has the drawback of excluding  3-SAT instances that the filtering algorithm is incapable of solving. \\ 

In Fig. \ref{fig:zoltan}, our DMM solves all of the 3-SAT instances from Ref. \cite{zoltangpu} that have 100 instances per value of $N$. (Large $N$ instances only have 1 instance per value of $N$.) We use the same DMM and parameters as presented in the main text, where $\zeta=10^{-1}$ for $\alpha_r=3.4, 3.8$, and $\zeta=10^{-2}$ for $\alpha_r=4.25$

The authors of Ref. \cite{zoltangpu} prefer wall time as the indicator used to show polynomial scaling, claiming it is a realistic measure of hardware. Therefore, we show scaling of both steps (Fig. \ref{fig:zoltan}(a)-(c)) and wall time (Fig. \ref{fig:zoltan}(d)-(f)).

For very small values of $N$, our DMM encounters overhead that dominates the wall time scalability (solution wall time $\sim10^{-1}$ seconds). The initialization of the MATLAB code dominates the scalability for wall time so we exclude small values of $N$ from the curve fitting procedure. (Comparing the scaling of steps and wall time in Fig. \ref{fig:zoltan}, it can be seen there is no initialization effect in the scaling of steps.) With these considerations taken into account, we show several improvements.

In Fig. \ref{fig:zoltan}(d), for $\alpha_r=3.4$, we see the DMM's scalability of the maximum solution times, $\sim\! N^{1.25}$, is approximately the same as that reported for AnalogSAT's scaling of the mean, $\sim\! N^{1.26}$ \cite{zoltangpu}.
In Fig. \ref{fig:zoltan}(e), for $\alpha_r=3.8$, we see the DMM's scalability of the maximum solution times, $\sim\! N^{1.11}$, is better than that reported for AnalogSAT's scaling of the mean, $\sim\! N^{1.63}$ \cite{zoltangpu}. Additionally, our range of $N$ goes beyond $N=10^4$.
In Fig. \ref{fig:zoltan}(f), for $\alpha_r=4.25$, we see the DMM's scalability of the maximum solution times, $\sim\! N^{3.55}$, is better than that reported for AnalogSAT's scaling of the mean, $\sim\! N^{4.35}$ \cite{zoltangpu}. For the largest value of $N$ tested, $N=463$, the maximum wall time is 177 seconds, where AnalogSAT's mean wall time for the same value of $N$ is $\sim\! 10^3$ seconds.\\

For another test, we generated random 3-SAT instances at $\alpha_r=4.25$, where no solution has been planted (0-hidden). Due to being to the left of the SAT/UNSAT transition, there is a high probability that a randomly generated 3-SAT instance will be satisfiable. Therefore, we should be able to solve more than 50\% of instances generated, and can use the median as another measure of scalability. We use the same DMM and parameters as presented in the main text, with $\zeta=10^{-2}$. In Fig. \ref{fig:gen3sat}, we find power-law scalability for these instances as well.

\begin{figure}
	\begin{center}
		\includegraphics[scale=0.4]{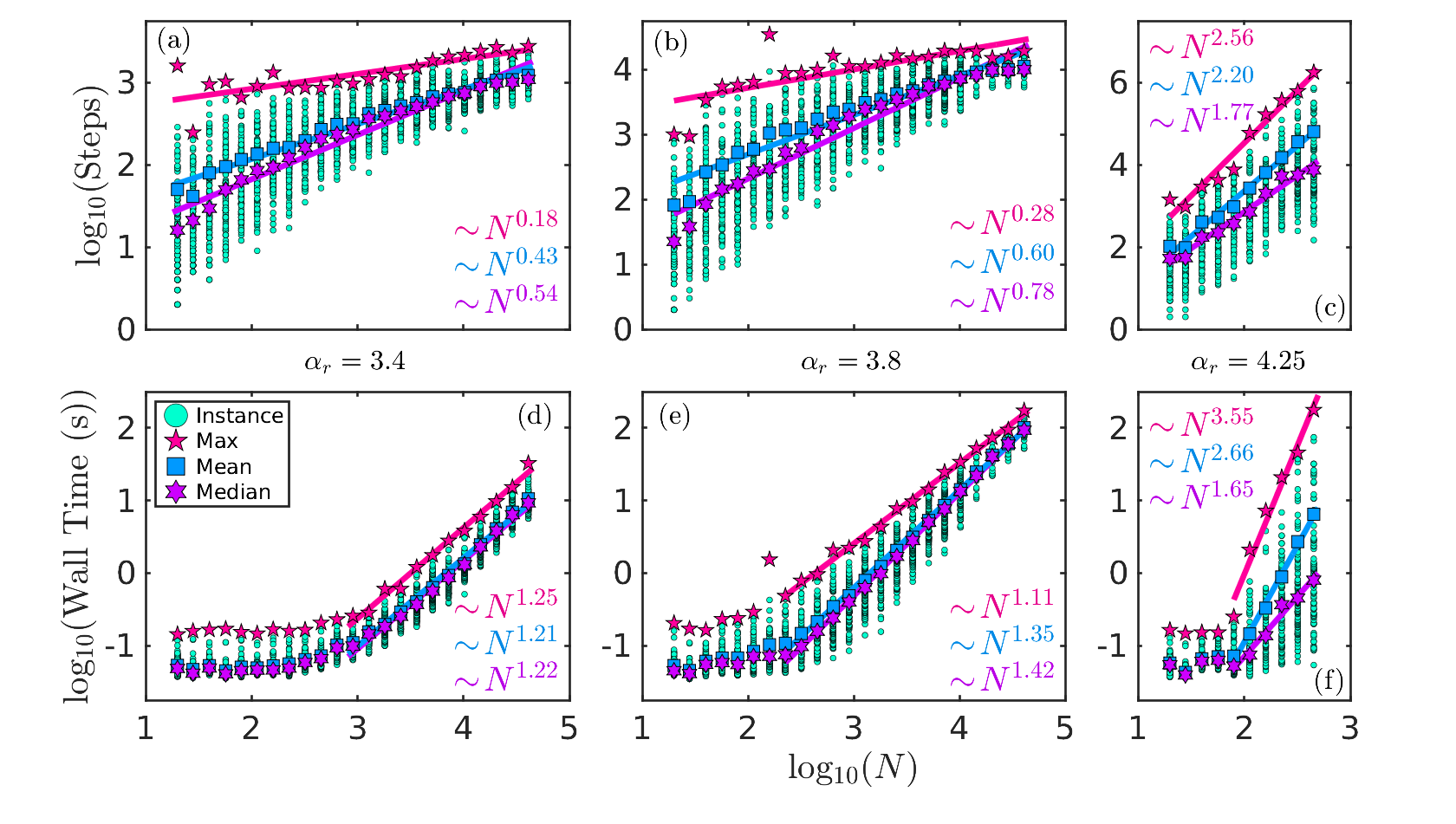}
		\caption{\small \label{fig:zoltan} Scalability on instances from Ref. \cite{zoltangpu} for $\alpha_r=3.4$ (a) and (d), $\alpha_r=3.8$ (b) and (e), $\alpha_r=4.25$ (c) and (f). We show scalability in integration steps (a)-(c) and wall time (d)-(f). For each $N$ there are 100 solved instances shown in each panel.}
	\end{center}
\end{figure}

\begin{figure}
	\begin{center}
		\includegraphics[scale=0.4]{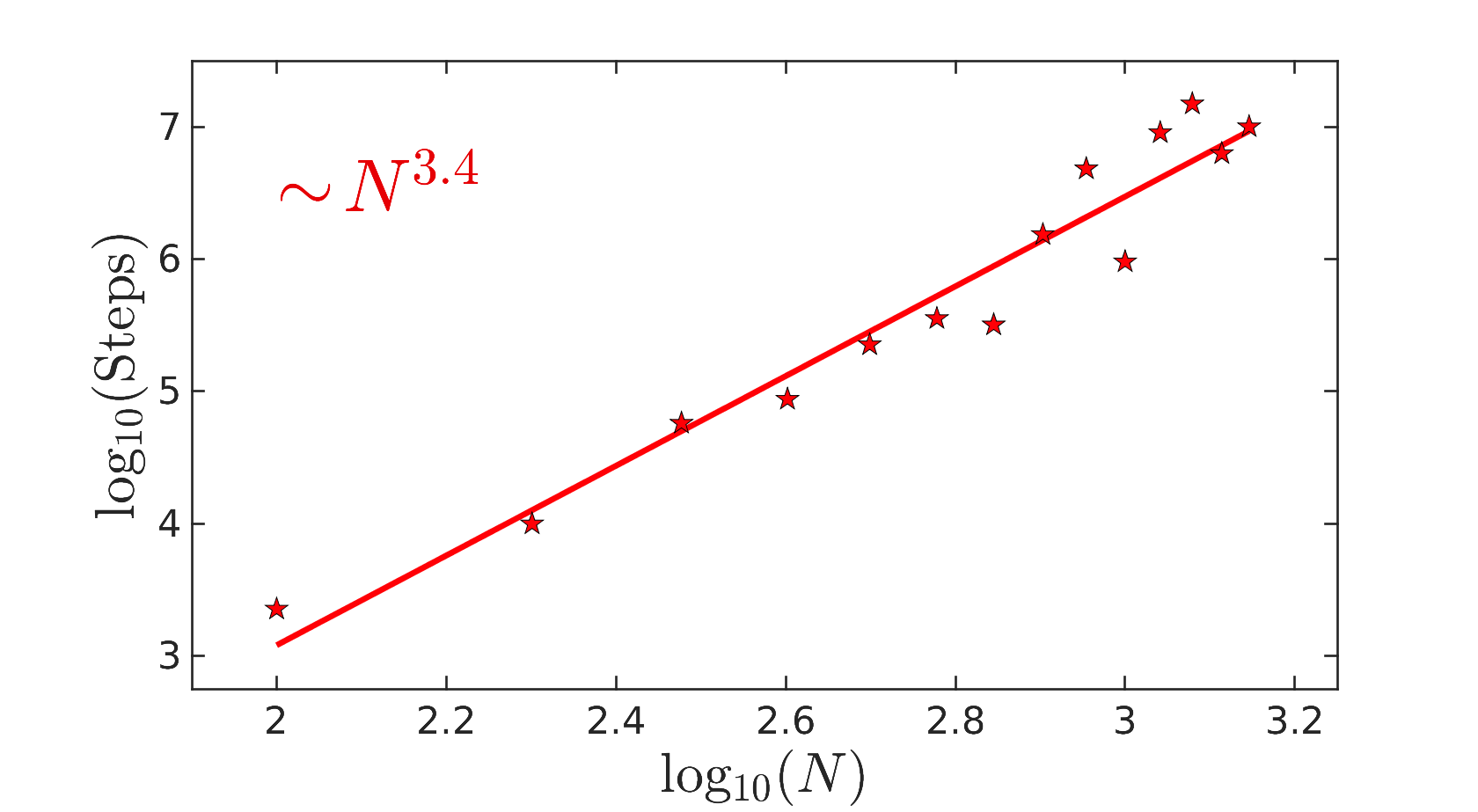}
		\caption{\small \label{fig:gen3sat} Scalability of typical general 3-SAT instances at $\alpha_r=4.25$, generated without knowledge of solution existence. For each $N$, we attempt to solve 100 general 3-SAT instances, and calculate the median when 51 instances have solved.}
	\end{center}
\end{figure}

\section{Continuous 3-SAT}
\label{prelim}

In this Section, we establish the formalism for studying the continuous version of the 3-SAT problem we have solved in the main text. This continuous version  generates an energy landscape that we explore with the memcomputing dynamics (see Section \ref{sec_mem}). In addition, we provide a brief discussion on the class of planted 3-SAT instances~\cite{barthel2002} that we used in this paper as benchmarks. To facilitate the theoretical analysis we will also slightly 
change the notation so that we can write Eqs.~(\ref{eq:C})-(\ref{eq:rigid}) in a more compact way.

\subsection{From Discrete to Continuous Variables}

Consider a 3-SAT Boolean formula with $n$ variables and $m$ clauses, where $\alpha_r$ is commonly referred to as the {\bf clause density}, as it is the ratio between the number of clauses and number of Boolean variables\footnote{Note that we are using a slightly different notational conventions from the main text. In the following, $n$ and $m$ are cardinal numbers denoting the numbers of variables and clauses respectively, and $i$ and $j$ are used as their respective indices.}. We let $+1$ correspond to the true assignment of a Boolean variable, and $-1$ to the false assignment. We then map the $n$ Boolean variables into $n$ continuous variables, $\mbf{v}\in [-1,+1]^n$, which we term {\it voltages}. For each clause, we can define various energy functions indicating the state of satisfaction of each clause given a voltage assignment. The expression of these functions are most compactly expressed by making use of the definition of {\it polarity}.

\begin{definition}[Polarity and Constraint]
	\label{c_energy}
	Consider a 3-SAT Boolean formula with $n$ Boolean variables and $m$ clauses. We denote the $i$-th Boolean variable as $x_i$, and its {\bf polarity} in the $j$-th clause as
	\begin{equation*}
	q_{ij} 
	=
	\begin{cases}
	+1 \quad &\text{if $x_i$ appears positively in clause $j$,} \\
	-1 \quad &\text{if $x_i$ appears negatively in clause $j$,} \\
	0  \quad &\text{if $x_i$ does not appear in clause $j$}.
	\end{cases}
	\end{equation*}
	The polarity matrix, $Q$, is the matrix with the element on the $i$-th row and $j$-th column being $q_{ij}$. Note that a 3-SAT Boolean formula is completely specified by $Q$. \\
	
	Given a voltage assignment $\mbf{v} \in [-1, +1]^n$, we rewrite Eq. (\ref{eq:C}), the {\bf constraint} of the $j$-th clause as
	\begin{equation}
	\label{cj}
	C_j(\mbf{v}) = \frac{1}{2}\min_{ \{ i \cond q_{ij}\neq 0 \} }\big( 1-q_{ij}v_i \big).
	\end{equation}
	The {\bf global constraint} is the sum of the constraints of all clauses
	\begin{equation}
	\label{eng}
	\mcr{C}(\mbf{v}) = \sum_j C_j(\mbf{v}).
	\end{equation}
	For any $\mbf{v}_0$ such that $\mcr{C}(\mbf{v}_0) = 0$ is satisfied, we call $\mbf{v}_0$ a {\bf solution vector}. 
\end{definition}

\begin{remark}
	$\mbf{v}_0$ is called a solution vector because if we take the corresponding Boolean vector $\mbf{x}_0$ by thresholding $\mbf{v}_0$ (converting $v_{0,i}>0$ to true and $v_{0,i}\leq 0$ to false), then $\mbf{x}_0$ must be a solution to the original 3-SAT problem. This is because the global energy being zero, $\mcr{C}(\mbf{v}_0)=0$, implies that every clause energy must also be zero, $C_j(\mbf{v}_0)=0$, which further implies that every clause is satisfied under the assignment $\mbf{x}_0$. Note that the converse is also true; if $C(\mbf{v}') = 0$, then $\mbf{v'}$ must be a solution vector. \\
\end{remark}

To ease the burden of notation, it is useful to define the following index notation
\begin{equation}
\label{sigma}
\sigma_j = \argmin_{ \{ i \cond q_{ij}\neq 0 \} }( 1 - q_{ij}v_i ),
\end{equation}
which can be simply interpreted as the index of the voltage whose assignment is closest to satisfaction among all voltages in clause $j$. Note that by this definition, we have $q_{\sigma_j,j} = \pm 1$, denoting the polarity of the Boolean variable whose assignment determines the value of $C_j(\mbf{v})$. This notation allows us to simplify the expression of the clause constraint as given in Eq. (\ref{cj})
\begin{equation*}
C_j(\mbf{v}) = \frac{1}{2} (1 - q_{\sigma_j,j}v_{\sigma_j}).
\end{equation*}\\

Note that if the goal is for an effective numerical implementation of the memory dynamics solely as a means to find a solution, rather than relaxing into an equilibrium point, one can exploit the fact that if an assignment of $\mbf{v}$ such that $C_j(\mbf{v}) < \frac{1}{2}$ for every clause, then the original 3-SAT problem is solved by thresholding $\mbf{v}$ to generate $\mbf{x_0}$.

\begin{proposition}
	\label{sign}
	Given an assignment of the voltages $\mbf{v} \in [-1,+1]^n$ such that $\mbf{C}(\mbf{v}) < \frac{1}{2}$, $\sign(\mbf{v})$ is a solution vector\footnote{While it is possible for $v_i=0$, resulting in $\sign(0)=0$, this rare event does not affect the remaining nonzero voltages from satisfying all clauses. In such a scenario, $x_i$ can be set to TRUE or FALSE without affecting the satisfiability of the solution vector.}.
\end{proposition}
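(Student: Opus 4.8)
The plan is to read the hypothesis $\mbf{C}(\mbf{v}) < \frac{1}{2}$ componentwise, i.e.\ $C_j(\mbf{v}) < \frac{1}{2}$ for every clause $j$, and to show that the thresholding map $\mbf{v}\mapsto\sign(\mbf{v})$ cannot destroy the satisfaction that this strict margin already encodes. First I would unpack the hypothesis one clause at a time. Using the compact form $C_j(\mbf{v}) = \frac{1}{2}\min_{\{i \cond q_{ij}\neq 0\}}(1 - q_{ij}v_i)$ from Definition~\ref{c_energy}, the inequality $C_j(\mbf{v})<\frac{1}{2}$ is equivalent to $\min_{\{i \cond q_{ij}\neq 0\}}(1 - q_{ij}v_i) < 1$, so there exists an index $i$ appearing in clause $j$ (in fact $i=\sigma_j$) with $1 - q_{ij}v_i < 1$, that is $q_{ij}v_i > 0$. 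Since $q_{ij}\in\{\pm 1\}$, this strict inequality forces $v_i\neq 0$ and $\sign(v_i)=q_{ij}$, hence $q_{ij}\sign(v_i)=1$.

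Next I would evaluate the constraint at the thresholded vector $\mbf{w}=\sign(\mbf{v})\in\{-1,0,+1\}^n$. The key observation is a two-sided bound on the defining minimum. For any index $i$ appearing in clause $j$ we have $q_{ij}w_i\in\{-1,0,+1\}$, so each term satisfies $1-q_{ij}w_i\in\{0,1,2\}$ and is in particular nonnegative; therefore $\min_{\{i \cond q_{ij}\neq 0\}}(1-q_{ij}w_i)\geq 0$. On the other hand, the index located in the first step gives $q_{ij}w_i=1$, so that term equals $0$ and the minimum is $\leq 0$. Combining the two bounds yields $C_j(\mbf{w})=0$.

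Finally, since this holds for every clause $j$, summing over clauses gives $\mcr{C}(\mbf{w})=\sum_j C_j(\mbf{w})=0$, which by Definition~\ref{c_energy} means $\mbf{w}=\sign(\mbf{v})$ is a solution vector. The only case worth remarking on is a coordinate with $v_i=0$, for which $w_i=0$ contributes the harmless value $1-q_{ij}w_i=1$ to the minima but never the satisfying $0$; this is exactly the freedom noted in the footnote that such a variable may be assigned TRUE or FALSE without affecting satisfiability. I do not expect a genuine obstacle here: the argument is essentially a strictness bookkeeping. The one point to handle carefully is that the satisfied literal is identified via the \emph{strict} inequality $q_{ij}v_i>0$ (not merely $\geq 0$), since this is what guarantees $\sign(v_i)$ is nonzero and produces the vanishing term; the complementary fact that thresholding can never turn the minimum negative follows immediately from $1-q_{ij}w_i\in\{0,1,2\}$.
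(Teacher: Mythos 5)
Your proposal is correct and follows essentially the same route as the paper's proof: identify, for each clause $j$, the literal $\sigma_j$ satisfying the strict inequality $q_{\sigma_j j}v_{\sigma_j}>0$, observe that thresholding sends it to $q_{\sigma_j j}\sign(v_{\sigma_j})=+1$ so that $C_j(\sign(\mbf{v}))=0$, and sum over clauses. Your two-sided bound on the minimum (all terms nonnegative, the identified term zero) is a slightly more careful treatment of the fact that the argmin may shift after thresholding — a detail the paper's proof glosses over by reusing $\sigma_j$ — but it is the same argument in substance.
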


\begin{proof}
	Recall that $C_j(\mbf{v}) = \frac{1}{2}(1-q_{\sigma_jj}v_i)$. Since $\forall j$ we have $C_j(\mbf{v}) < \frac{1}{2}$, then $q_{\sigma_jj}v_{\sigma_j} > 0$. If we let $\mbf{v}_0 = \sign(\mbf{v})$, then $q_{\sigma_jj}v_{0,\sigma_j} = q_{\sigma_jj}\sign(v_{\sigma_j}) = \sign(q_{\sigma_jj}v_{\sigma_j}) = +1$, as $q_{\sigma_jj}=\pm 1$. Therefore, we have $C_j(\mbf{v}_0) = \frac{1}{2}(1 - q_{\sigma_jj}v_{0,\sigma_j}) = \frac{1}{2}(1 - 1) = 0$, so $\mcr{C} = \sum_j C_j(\mbf{v_0}) = 0$. Therefore, $\sign(\mbf{v})$ is a solution vector. 
\end{proof}

\begin{remark}
	This means that once we have discovered an assignment of voltages such that the constraints of all clauses are less than $\frac{1}{2}$, we can simply threshold the voltages to obtain the corresponding Boolean variables for a solution of the original 3-SAT problem. \\
\end{remark}

The global constraint defined in Eq.~(\ref{eng}) is not everywhere differentiable with respect to the voltages due to the use of a minimum operation, and this causes some inconvenience in analyzing certain properties of the 3-SAT problem structure from the perspective of statistical mechanics (see Eq. (\ref{ham})). We then construct an energy function that is continuous (and also smooth) in anticipation of such analysis.

\begin{definition}[Energy]
	\label{def_ham}
	Given a 3-SAT Boolean formula defined by an $n\times m$ polarity matrix $Q$, we define the {\bf energy} of the $j$-th clause for any voltage assignment $\mbf{v} \in [-1,+1]^n$ as
	\begin{equation}
	\label{c_ham}
	E_j(\mbf{v}) = \frac{1}{8} \prod_{ \{ i \cond q_{ij} \neq 0 \} } (1 - q_{ij}v_i).
	\end{equation}
	The {\bf global energy} is the sum of the energies for all clauses
	\begin{equation}
	\label{g_ham}
	\mcr{E}(\mbf{v}) = \sum_j E_j(\mbf{v}).
	\end{equation}
\end{definition}

\begin{remark}
	We can show in a similar fashion (see the remark of definition \ref{c_energy}) that if the 3-SAT problem is satisfiable, then the global energy of a solution vector $\mbf{v}_0$ will also be zero, or $\mcr{E}(\mbf{v}_0) = 0$, which is also its global minimum. The converse is also true. Therefore, the problem of minimizing the global constraint, $\mcr{C}$, and minimizing the global energy, $\mcr{E}$, are in fact equivalent problems. \\
\end{remark}

The flow field of the memory dynamics for the voltages (see Section \ref{sec_mem}) contains two terms, one being similar to the gradient of $\mcr{E}(\mbf{v})$ (see Eq.~(\ref{G})) which we name the {\bf gradient-like term} and the other one closely related the clause function $C_j(\mbf{v})$ (see Eq.~(\ref{R})) which we name the {\bf rigidity term}. At certain hyperplanes, the gradient-like term is not differentiable and the rigidity term is discontinuous (see section \ref{plane}). We develop the mathematical formalism for studying such irregular flow fields in Section \ref{cara}. 

\subsection{Gauging the 3-SAT Problem}
\label{gauge}

If the original 3-SAT Boolean formula has a known solution, analysis can be simplified by converting the 3-SAT formula into an equivalent 3-SAT formula in such a way that the known solution of the original formula is now a solution to the gauged formula with an all-true assignment of the Boolean variables\footnote{For all planted-solution CDC instances generated for numerical simulations, the all-true solution is first assumed and then randomly changed by a local gauge transformation~\cite{barthel2002} to remove any solver bias towards the all-true solution.}. After the conversion, there will be a restriction on the possible clause types that can appear in the formula (no clause appears with all variables negated). This will allow for a natural description of the clause distribution control (CDC) class of planted instances, and greatly simplify the analysis of memory dynamics.

\begin{definition}[Gauge Fixing]
	\label{def_gauge}
	Consider a satisfiable 3-SAT Boolean formula given by an $n\times m$ polarity matrix $Q$. Given any solution $\mbf{x}_0$ to the 3-SAT problem, we {\bf gauge} fix the polarity matrix $Q$ with respect to $\mbf{x}_0$, $G_{\mbf{x}_0}: \{-1,+1\}^{nm} \mapsto \{-1,+1\}^{nm}$, such that each element of $Q$ transforms as follows
	\begin{equation*}
	q_{ij} \mapsto x_{0,i}q_{ij}.
	\end{equation*}
	We refer to $Q' = G_{\mbf{x}_0}(Q)$ as the {\bf gauged} polarity matrix.
\end{definition}

\begin{remark}
	It can be easily shown that the formulas given by $Q$ and $Q'$ have the same structure. In particular, given some mapping of the polarity matrix $G_{\mbf{x}_0}$, we can simultaneously map each Boolean state $\mbf{x}$ to a new one as follow
	\begin{equation*}
	\mbf{x} \mapsto \mbf{x} \ast \mbf{x}_0,
	\end{equation*}
	where $\ast$ denotes component-wise multiplication. It is then obvious that the satisfaction state of each literal $q_{ij}x_i$ is invariant under this mapping. A similar procedure applies for mapping the voltages as well
	\begin{equation*}
	\mbf{v} \mapsto \mbf{v} \ast \mbf{v}_0.
	\end{equation*}
	
	Note that the performance of most SAT solvers (including the canonical Walk-SAT algorithm~\cite{selman1993walksat} and our memcomputing one as presented in Eqs.~(\ref{eq:C})-(\ref{eq:rigid})) are invariant under gauge conjugation \cite{mezard}. Informally, this means that nothing is gained or lost in terms of the efficiency of optimization by gauging the problem first before running the algorithm, as the behavior of a SAT solver at each time step will not change under a gauge mapping (see Section \ref{gauge_inv}). The choice to gauge fix a solution to $\mbf{+1}$ is purely for analytic convenience. \\
\end{remark}

An important property of a gauge fixed 3-SAT formula is that no clause can contain three negated Boolean variables.

\begin{lemma}\label{one}
	Given a gauged polarity matrix $Q$ of a k-SAT problem \cite{complexity_bible}, we have the following
	\begin{equation*}\label{one1}
	\forall j, \exists i, q_{ij} = +1.
	\end{equation*}
	In other words, all clauses must contain at least one literal that is an unnegated variable.
\end{lemma}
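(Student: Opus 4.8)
The plan is to exploit the defining property of the gauge transformation established just above, namely that gauging with respect to a known solution $\mathbf{x}_0$ converts that solution into the all-true assignment of the gauged formula, and then to simply read off what satisfaction of each clause under the all-true assignment says about the signs of the gauged polarities $q_{ij}$.

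First I would invoke the remark following Definition~\ref{def_gauge}, which records that the accompanying state mapping $\mathbf{x} \mapsto \mathbf{x} \ast \mathbf{x}_0$ leaves every literal's satisfaction state $q_{ij}x_i$ invariant while simultaneously replacing $Q$ by its gauged version. Applying this mapping to the solution $\mathbf{x}_0$ itself gives $\mathbf{x}_0 \ast \mathbf{x}_0 = +\mathbf{1}$, since each component is $\pm 1$ and hence squares to $+1$. Because $\mathbf{x}_0$ satisfied the original formula and satisfaction is preserved under the mapping, the all-true vector $+\mathbf{1}$ must satisfy the gauged formula. Next I would evaluate the clauses at this all-true assignment: for any variable appearing in clause $j$ (so that $q_{ij}\neq 0$), the literal takes the value $q_{ij}x_i = q_{ij}$, which is TRUE precisely when $q_{ij}=+1$. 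Since the all-true assignment satisfies clause $j$, at least one of its literals must be TRUE, i.e. there exists an index $i$ with $q_{ij}=+1$. As $j$ was arbitrary, this yields $\forall j,\exists i,\, q_{ij}=+1$, which is exactly the claim.

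There is essentially no hard step here: the whole content is the observation that gauging with respect to $\mathbf{x}_0$ turns $\mathbf{x}_0$ into the all-true solution, after which the statement is immediate. The only point requiring any care is justifying that $+\mathbf{1}$ is a genuine solution of the gauged formula, and this is precisely the invariance of literal satisfaction recorded in the remark of Definition~\ref{def_gauge}. The argument is insensitive to the clause width, so it applies verbatim to the general $k$-SAT setting stated in the lemma with no modification.
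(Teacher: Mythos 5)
Your proof is correct and takes essentially the same route as the paper's: both rest on the single key fact that the all-true assignment $\mathbf{+1}$ satisfies the gauged formula, so every clause must contain a literal with $q_{ij}=+1$. The paper merely packages this as a contradiction (a clause with all negative polarities would falsify $\mathbf{+1}$), which is the contrapositive of your direct argument, and your additional justification that $\mathbf{+1}$ is a solution via the gauge-invariance remark is a minor (if welcome) elaboration of what the paper asserts outright.
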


\begin{proof}
	We prove this by contradiction. We first assume the negation of the Lemma, then
	\begin{equation*}
	\exists j, \forall i, q_{ij} = -1.
	\end{equation*}
	Then without loss of generality (WLOG), we can assume that the $j$-th clause is the following
	\begin{equation*}
	( \overline{x}_1 \lor \overline{x}_2 \lor ... \lor \overline{x}_k).
	\end{equation*}
	Since $Q$ is a gauged polarity matrix, a solution must be $\mbf{x}_0 = \mbf{+1}$. However, this assignment evaluates to false by the above clause, so it cannot be a solution. We therefore have a contradiction.
\end{proof}

\begin{remark}
	It should be noted that the inclusion of clauses with all negations does not preclude the possibility of the formula being satisfiable, as solutions other than $\mbf{+1}$ may still exist. \\
\end{remark}

Lastly, we point out that the clause constraint defined in Eq.~(\ref{cj}) has the important property of being invariant under a gauge mapping.

\begin{lemma}[Gauge Invariance of Constraints]
	\label{cj_inv}
	Given a satisfiable 3-SAT instance with some solution vector $\mbf{v}_0$, $C_j(\mbf{v})$ is invariant under the following transformation for $\forall j$
	\begin{equation*}
	q_{ij} \mapsto q_{ij}v_{0,i}
	\qquad
	\mbf{v} \mapsto \mbf{v} \ast \mbf{v}_0. 
	\end{equation*}
\end{lemma}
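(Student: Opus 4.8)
The plan is to show that the two simultaneous substitutions cancel against one another inside the minimum defining $C_j$, so that every term entering Eq.~(\ref{cj}) is left unchanged and the minimum is therefore trivially preserved. Write the transformed polarity as $q_{ij}' = q_{ij}v_{0,i}$ and the transformed voltage as $v_i' = v_i v_{0,i}$, and substitute them into $C_j(\mbf{v}) = \frac{1}{2}\min_{\{i \cond q_{ij}\neq 0\}}(1 - q_{ij}v_i)$. The single fact that makes the cancellation work is that the gauge is fixed with respect to a Boolean solution (Definition \ref{def_gauge}), so every component satisfies $v_{0,i} = \pm 1$ and hence $v_{0,i}^2 = 1$; this is the same identification of $\mbf{v}_0$ with a thresholded solution already used in the remark following Definition \ref{c_energy}.

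With that in hand I would carry out three short steps. First, compute the transformed literal value entry-by-entry,
\[
q_{ij}' v_i' = (q_{ij}v_{0,i})(v_i v_{0,i}) = q_{ij}v_i\, v_{0,i}^2 = q_{ij}v_i,
\]
so that $1 - q_{ij}'v_i' = 1 - q_{ij}v_i$ for each $i$. Second, verify that the index set over which the minimum is taken is unchanged: because $v_{0,i} = \pm 1 \neq 0$, we have $q_{ij}' \neq 0 \iff q_{ij} \neq 0$, so both minimizations range over the same set $\{i \cond q_{ij}\neq 0\}$. Third, conclude that since the minimum is taken over the same indices of an entry-wise identical collection of values, $\min_{\{i \cond q_{ij}'\neq 0\}}(1-q_{ij}'v_i') = \min_{\{i \cond q_{ij}\neq 0\}}(1 - q_{ij}v_i)$, and therefore $C_j(\mbf{v}\ast\mbf{v}_0) = C_j(\mbf{v})$ for every $j$, as claimed. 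As an immediate corollary, summing over $j$ gives invariance of the global constraint $\mcr{C}$.

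The only point requiring care---and thus the main obstacle, such as it is---is justifying $v_{0,i}^2 = 1$ for \emph{all} $i$ rather than only for the minimizing variables. A solution vector in the sense of Definition \ref{c_energy} directly pins down only $v_{0,\sigma_j} = \pm 1$ in each clause, so a variable that never attains the minimum could a priori lie strictly inside $(-1,+1)$. This is dispelled by observing that the transformation in the lemma is precisely the gauge map of Definition \ref{def_gauge}, which is constructed from a Boolean solution $\mbf{x}_0 \in \{-1,+1\}^n$; under the standing identification $\mbf{v}_0 \equiv \mbf{x}_0$ every component is $\pm 1$ and the cancellation above is legitimate. Once this is settled the proof is a one-line computation with no further subtleties.
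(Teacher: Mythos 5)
Your proof is correct and takes essentially the same route as the paper's: the one-line cancellation $q'_{ij}v'_i = (q_{ij}v_{0,i})(v_iv_{0,i}) = q_{ij}v_i(v_{0,i})^2 = q_{ij}v_i$, together with the (implicit in the paper) observation that the index set $\{i \mid q_{ij} \neq 0\}$ over which the minimum is taken is unchanged because $v_{0,i} \neq 0$. Your closing remark about why $v_{0,i}^2 = 1$ holds for \emph{all} components---since a solution vector in the sense of Definition \ref{c_energy} only pins the minimizing variables to $\pm 1$, one must identify $\mbf{v}_0$ with the Boolean solution $\mbf{x}_0 \in \{-1,+1\}^n$ underlying the gauge map of Definition \ref{def_gauge}---is a point the paper's proof simply asserts without justification, but it does not change the argument, which is otherwise identical.
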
 

\begin{proof}
	Recall from Eq.~(\ref{cj}) that
	\begin{equation*}
	C_j(\mbf{v}) = \frac{1}{2}\min_{q_{ij}\neq 0}\big( 1-q_{ij}v_i \big).
	\end{equation*}
	If we let $q'_{ij} = q_{ij}v_{0,i}$ and $v'_i = v_i v_{0,i}$, then we have
	\begin{equation*}
	C'_j = \frac{1}{2}\min_{q'_{ij}\neq 0}\big( 1-q'_{ij}v'_i \big)
	= \frac{1}{2}\min_{q_{ij}\neq 0}\big( 1-q_{ij}v_i (v_{0,i})^2 \big)
	= C_j,
	\end{equation*}
	where we note that $v_{0,i} = \pm 1$, so $(v_{0,i})^2 = 1$.
\end{proof}

\begin{remark}
	It directly follows that the global constraint must be gauge invariant as well. It can be shown in a similar fashion that the energy of each clause is also gauge invariant.
\end{remark}

\subsection{Planted Instances}

Here, we consider a class of random 3-SAT instances generated with a planted solution to guarantee an instance to be satisfiable, however, planted in such a way so as to be hard for local-search SAT solvers to find~\cite{barthel2002}. In particular, we consider instances whose polarity matrix $Q$ satisfies Lemma \ref{one} up to a gauge mapping. In other words, when we construct $Q$, we cannot allow the appearance of columns whose nonzero elements are all $-1$. We formally describe a particular method of constructing such matrices in the following section.

\subsubsection{Randomly Planted Formula}
\label{gen_plant}

We first consider the general method of generating satisfiable formulas where every clause is formed independently by randomly including Boolean variables, with the clause type randomly sampled from some given distribution \cite{NOAP}.

\begin{definition}[Planted Instance]
	We consider a random matrix $Q$ generated by parameters $\{\alpha_r, p_0, p_1, p_2\}$ that satisfies the following normalization condition
	\begin{equation}
	\label{norm}
	p_0 + 3p_1 + 3p_2 = 1.
	\end{equation}
	For each column $j$, we randomly select three distinct rows $\{ i_{j,1}, i_{j,2}, i_{j,3} \}$ uniformly. We then randomly assign the elements $(q_{i_{j,1}j}, q_{i_{j,2}j}, q_{i_{j,3}j})$ with an element from the following set
	\begin{equation*}
	\{ (q_1,q_2,q_3)\in \mbb{R}^3 \,\, \big\lvert \,\, |q_1| = |q_2| = |q_3| = +1 \}\, \big/ \, (-1,-1,-1),
	\end{equation*}
	with each assignment associated with the sampling probability given as follows
	\begin{equation*}
	\begin{split}
	&p_0: \, q_1+q_2+q_3 = 3, \\
	&p_1: \, q_1+q_2+q_3 = 1, \\
	&p_2: \, q_1+q_2+q_3 = -1.
	\end{split}
	\end{equation*}
	We then assign all other elements in column $j$ to zero.
	\label{con_plant}
\end{definition}

\begin{remark}
	To explain this construction in simple terms, we can consider a 3-SAT Boolean formula where each clause is independently generated through the inclusion of 3 randomly chosen Boolean variables out of the $n$ total variables without replacement. The negations of the Boolean variables in the clause are randomly assigned such that there is a probability $p_0$ that all variables appear without negation; there is a probability $3p_1$ that only one variable is negated (the prefactor of 3 is to account for the fact that there are 3 possible variables to negate); and there is a probability $3p_2$ that two variables are negated (the prefactor of 3 arises similarly). 
\end{remark}

\subsubsection{Clause Distribution Control Instances}\label{barthelsection}

We now consider a class of hard instances~\cite{barthel2002} that is generated based on the method described in Definition \ref{con_plant}. In particular,  the generation method is restricted in the presence of a new constraints on the parameters $\{\alpha_r, p_0, p_1, p_2\}$, in addition to the normalization condition given in Eq.~(\ref{norm}). This gives us only $4-2 = 2$ degrees of freedom in the selection of the parameters, $\alpha_r$ and $p_0$.

\begin{definition}[Clause Distribution Control Instances]
	A Clause Distribution Control~\footnote{While the method can be generalized, for example as in Ref. \cite{qhid}, we report the method outlined in Ref. \cite{barthel2002}.} (CDC) instance generated with the parameters $\alpha_r$ and $p_0$ is an instance whose polarity matrix $Q$ is randomly generated by the following constraints
	\begin{equation}
	\label{bar_trip}
	\alpha_r > 4.25, \qquad 0.077 < p_0 < 0.25, \qquad p_1 = \frac{1-4p_0}{6}, \qquad p_2 = \frac{1+2p_0}{6},
	\end{equation}
	based on the method given in Definition \ref{con_plant}. 
\end{definition}

\begin{remark}
	It has been claimed that this class of instances is difficult for local-search procedures~\cite{barthel2002}, though, it has been shown that the difficulty does not persist for  some upper limit on $\alpha_r$ that depends on the problem size, $n$~\cite{bulatov2015phase}.The results from the Walk-SAT algorithm confirm the instances generated for numerical simulation are difficult in that the showcase exponential scalability.\\
	
	The reason for enforcing the condition $p_0 < \frac{1}{4}$ is twofold. First, $p_0$ is restricted so that parameter $p_1$ is non-negative, as it represents a probability.
	Second, the instances created with $p_0=1/4$ are known to be solvable in polynomial time using a global algorithm~\cite{barthel2002}.
	It can be easily verified that the probabilities given in Eq.~(\ref{bar_trip}) satisfy the normalization condition (Eq.~(\ref{norm})) in addition to the following condition
	\begin{equation}
	\label{bar_cond}
	p_0 + p_1 - p_2 = 0
	\end{equation}
	If the above constraint is satisfied, then it can be shown that a greedy local-search SAT solver initialized with a random assignment of variables will not be biased towards the planted solution~\cite{barthel2002}. In the language of statistical mechanics, we say that the instance is equivalent to an instance of a disordered diluted spin glass with couplings up to three spins~\cite{zecchina1997}. The Hamiltonian of this diluted spin glass can be written as
	\begin{equation}
	\label{ham}
	H = -\sum_i H_i S_i - \sum_{ij} T_{ij}S_iS_j - \sum_{ijk}S_iS_jS_k,
	\end{equation}
	which is equivalent to the global energy as defined in Eq.~(\ref{g_ham}). If Eq. (\ref{bar_cond}) is enforced, then the average of the local field over the disorder $\overline{H}_i$ is zero for all spins, so there is typically no direct bias towards the planted state $\mbf{S} = \mbf{+1}$. An extended discussion of the CDC instances can be found in literature on the statistical mechanics of Boolean satisfiability problems~\cite{NOAP}.
\end{remark}

\subsubsection{Solution Backbone and Cluster}
\label{backbone}

As briefly addressed in the remark of Lemma \ref{one}, planting the $\mbf{+1}$ solution in an instance does not forbid the existence of additional solutions.
In fact, multiple solutions may exist, however, their locations in phase space, with respect to one another, and the similarity of solutions are generally what determine the difficultly of an instance.
In most cases, some solutions will overlap non-trivially, meaning that their assignments will coincide for a certain number of variables. For instances admitting overlapping solutions, there are two concepts (occurring non-exclusively) important for analytic studies. \\

For the first concept, given a solution to an instance, we can define a solution {\bf cluster} as the subset of all solutions that can be assigned from the given solution via a sequence of single spin flips (Boolean variable negation)~\cite{zoltan2011}. Note, after each flip the assignment must remain a solution to be considered part of the cluster. While the clustering of solutions into one big cluster may intuitively seem like a more difficult instance, knowing only one solution cluster exists is not enough information to categorize an instance as more difficult than others.
The second concept will give additional information about the difficulty. Given the set of all solutions, we define the {\bf backbone} to be the number of variables that appear with only one parity in all solutions~\cite{NOAP}. 
In other words, for the SAT solver to find a solution, it is necessary for the backbone to be assigned correctly\footnote{In the case of the CDC instances that we use, the fashion in which the backbone appears as the clause density is increased is dictated directly by the parameter $p_0$. More particularly, this CDC parameter induces a phase transition from a continuous appearance of a backbone to a discontinuous appearance of a backbone \cite{barthel2002}.}. In general, the emergence of a backbone in a 3-SAT instance results in variables that must be assigned to a particular value to find any solution (an inherent difficulty), however, there can still exist a local field that can guide a greedy local-search SAT solver to the solution.\\

To understand why the CDC instances (planted solution) are difficult, it aids understanding to describe the solution cluster distribution in uniform random 3-SAT (no guaranteed solution). Using the replica symmetry approximation~\cite{zecchina1996,zecchina1997,monasson1999determining}, a variational approach accounting for replica-symmetry breaking~\cite{biroli2000variational}, and the cavity method~\cite{mezard2002sid,mezard2002random} from statistical mechanics,
it was shown that the 3-SAT problem undergoes phase transitions as clause density is increased~\footnote{See Ch. 7 of Ref. \cite{NOAP} for a self-contained account of the following results.}. For $\alpha_r < \alpha_d \simeq 3.92$, there is one large solution cluster, and solutions are relatively easy to find.
At $\alpha_d$ the large solution cluster breaks into an exponential amount of solution clusters, with an exponential amount of solutions within each. These clusters are far from each other in phase space, and their frequency diminishes as
$\alpha_r \rightarrow \alpha_c \simeq 4.267$, until only one solution cluster remains. That is, the solutions become less frequent as $\alpha_c$ (the complexity peak) is approached, until no solutions exist (the SAT/UNSAT transition)~\cite{NOAP}.\\

At $p_0=0.25$, for $\alpha_r<4.27$, there is no difference between the CDC class and  uniform random 3-SAT, with the solution entropy and clustering transition, $\alpha_d$, being the same~\cite{NOAP}. However, the SAT/UNSAT transition at $\alpha_c\simeq4.27$ is obviously absent, being that the solution is always planted.
Now, the instance class undergoes a first-order ferromagnetic transition at $\alpha_c$, resulting in only one solution cluster remaining. The first-order transition is more pronounced for $0.077 < p_0 < 0.25$, and there is a discontinuous appearance of a backbone. (For $p_0 < 0.077$, no backbone appears.) At $\alpha_c\simeq4.27$, the paramagnetic phase (many solution clusters) transitions to a ferromagnetic phase (one cluster containing the planted solution) with the discontinuous appearance of a backbone\footnote{The reader may notice the transition is reported as $\alpha_c\simeq4.27$, but Def.  \ref{bar_trip} has $\alpha_r>4.25$. To avoid any discrepancy, the smallest ratio used in numerical simulations is $\alpha_r=4.3$.}.\\

The approximate backbone size for CDC instances range from $0.72n$ at $p_0\simeq0.077$ to $0.94n$ at $p_0=0.25$~\cite{barthel2002}. Therefore, with all factors considered above, $p_0$ serves as a measure of difficulty for the CDC instances.\\

In this material, we base our focus on the study of the dynamical properties of our DMM by defining solution planes on hyperfaces of the voltage hypercube \cite{zoltan2011}.
When the solution vector is on a hyperface that corresponds to a solution plane (see \ref{sec_cor}), the voltage dynamics are near a branch of a solution cluster, effectively solving the CDC instance. To further associate the concepts, when a solution is found on a vertex of the hypercube, the solution cluster can be traversed by traveling along the hyperedges of the hypercube that connect to other solution vertices.

\section{Lipschitz Continuity}
\label{lip}

Before we present the equations governing the dynamics of our memcomputing solver in Section \ref{sec_mem}, it is necessary to first introduce a few formal mathematical arguments that will help establish the existence and uniqueness of the solution trajectory under an ordinary differential equation (ODE). For instance, the requirement for the existence and uniqueness of a local solution to a first order autonomous ODE is the Lipschitz continuity of the flow field~\cite{ode}. We begin by  formally defining Lipschitz continuity.

\begin{definition}[Lipscthiz Continuity]
	Let $X$ and $Y$ be two metric spaces. A function $f: X \mapsto Y$ is Lipschitz continuous if there is a real constant $K \geq 0$ such that
	\begin{equation*}
	\forall x_1,x_2 \in X, \quad d_Y(f(x_1),f(x_2)) \leq K d_X(x_1,x_2),
	\end{equation*}
	where $d_X$ and $d_Y$ denote the metrics on $X$ and $Y$ respectively.
\end{definition}

\begin{remark}
	This definition can be easily specialized to a vector field $V: \mbb{R}^n \mapsto \mbb{R}^n$.
\end{remark}

\begin{theorem}[Picard–Lindelöf theorem]
	\label{pic}
	Given a Lipschitz continuous vector field $V: \mbb{R}^n \mapsto \mbb{R}^n$, the classical solution $\mbf{x}(\mbf{x_0}, t)$ to the first order autonomous ODE, $\dot{\mbf{x}}(t) = V(\mbf{x})$, exists and is unique for $\forall t \in \mbb{R}$. 
\end{theorem}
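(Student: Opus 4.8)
The plan is to recast the initial-value problem as a fixed-point problem and invoke the Banach contraction principle. First I would fix an initial condition $\mbf{x}(0)=\mbf{x}_0$ and convert $\dot{\mbf{x}}(t)=V(\mbf{x})$ into the equivalent Volterra integral equation
\begin{equation*}
\mbf{x}(t) = \mbf{x}_0 + \int_0^t V(\mbf{x}(s))\, ds.
\end{equation*}
On the Banach space $C([-a,a];\mbb{R}^n)$ of continuous curves with the sup norm I would define the Picard operator $T$ by $(T\phi)(t) = \mbf{x}_0 + \int_0^t V(\phi(s))\, ds$, so that solutions of the ODE correspond exactly to fixed points of $T$. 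Because $V$ is continuous and $\phi$ is continuous, any fixed point is automatically $C^1$ and solves the differential equation in the classical sense by the fundamental theorem of calculus, so it suffices to study $T$.

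The key estimate is the contraction bound. Using the Lipschitz hypothesis $\norm{V(x_1)-V(x_2)}\le K\norm{x_1-x_2}$, for any two curves $\phi,\psi$ one gets
\begin{equation*}
\norm{(T\phi)(t)-(T\psi)(t)} \le \int_0^{|t|}\norm{V(\phi(s))-V(\psi(s))}\,ds \le K\,a\,\sup_{s\in[-a,a]}\norm{\phi(s)-\psi(s)}.
\end{equation*}
Choosing the interval half-length $a<1/K$ makes $T$ a strict contraction, so the Banach fixed-point theorem yields a unique fixed point on $[-a,a]$, i.e. a unique local classical solution.

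Then I would extend the solution to all of $\mbb{R}$. Since the admissible length $a$ depends only on the global constant $K$ and not on the base point, I can restart the iteration from the endpoint of each interval and patch the pieces together, advancing along the line in uniform steps of size $a$. To guarantee this procedure never stalls I must rule out finite-time blow-up: the global Lipschitz bound gives linear growth, $\norm{V(\mbf{x})}\le \norm{V(\mbf{0})}+K\norm{\mbf{x}}$, and Grönwall's inequality then bounds $\norm{\mbf{x}(t)}$ by an exponentially growing, hence finite, quantity on every bounded interval. Thus the trajectory cannot escape to infinity in finite time and the patching reaches every $t\in\mbb{R}$.

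The main obstacle is precisely this global-in-time claim: local existence and uniqueness is a routine contraction argument, but the extension to all $t$ relies essentially on $V$ being Lipschitz on the \emph{whole} of $\mbb{R}^n$ rather than merely locally, which simultaneously keeps the step size $a$ uniform and supplies the a priori growth bound that precludes blow-up. A cleaner alternative I would mention is to equip $C([0,T];\mbb{R}^n)$ with the weighted Bielecki norm $\norm{\phi}_\lambda=\sup_{t}e^{-\lambda t}\norm{\phi(t)}$ for some $\lambda>K$, under which $T$ becomes a contraction on the entire interval $[0,T]$ for every $T$ at once, delivering the global statement without any patching.
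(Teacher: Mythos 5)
Your proposal is a correct and complete proof of the stated theorem, via the canonical route: Volterra integral reformulation, Picard operator on $C([-a,a];\mathbb{R}^n)$, Banach contraction with $a<1/K$, and global extension by uniform-step patching. Note, however, that the paper itself offers no proof of this statement at all --- it records Picard--Lindel\"of as a known classical result with a citation to a textbook on ODEs --- so there is no internal argument to compare against; what you have written is precisely the standard proof that the cited reference contains. Two small remarks on your write-up. First, the Gr\"onwall/no-blow-up step is redundant in the globally Lipschitz setting: since the admissible step length $a<1/K$ is uniform in the base point, the patching argument by itself already produces a solution on $[0,na]$ for every $n$ (and symmetrically for negative times), so finite-time escape is impossible without any a priori growth estimate; blow-up is only a genuine concern when $V$ is merely \emph{locally} Lipschitz and the step length degenerates. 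Second, your intermediate bound $\left\lVert (T\phi)(t)-(T\psi)(t)\right\rVert \le \int_0^{|t|}\left\lVert V(\phi(s))-V(\psi(s))\right\rVert ds$ is a slight abuse of notation for $t<0$ (the integrand should be evaluated on $[t,0]$, not $[0,|t|]$), though the resulting estimate $Ka\left\lVert \phi-\psi\right\rVert_\infty$ is unaffected. The Bielecki-norm alternative you mention is a genuinely cleaner path to the global statement and would also be acceptable.
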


Our dynamics are governed by a high dimensional vector flow field, $F: \mbb{R}^n \mapsto \mbb{R}^n$. To study the Lipschitz continuity of the vector field $F$, we simply study the Lipschitz continuity of the field components in the quotient spaces instead, by the following lemma.

\begin{lemma}
	\label{component}
	Given a metric space $X$, and a product metric space $Y = Y_1 \times Y_2 \times ... \times Y_n$ equipped with a $p$-product metric, where $p \in (0,+\infty)$, let $f_i: X \mapsto Y_i$ be a mapping and $f: X \mapsto Y$ be defined as $f(x) = \big(f_1(x),f_2(x),...,f_n(x)\big)$. Then $f$ is Lipschitz continuous if and only if $f_i$ is Lipschitz continuous for $\forall i \in [[1,n]]$. 
\end{lemma}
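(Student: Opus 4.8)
The plan is to prove both implications directly from the defining formula of the $p$-product metric, which for points $y = (y_1,\dots,y_n)$ and $y' = (y_1',\dots,y_n')$ in $Y$ reads $d_Y(y,y') = \big(\sum_{i=1}^n d_{Y_i}(y_i,y_i')^p\big)^{1/p}$. Throughout, I would fix arbitrary $x_1,x_2 \in X$ and abbreviate $a_i = d_{Y_i}(f_i(x_1),f_i(x_2))$ together with $d = d_X(x_1,x_2)$; the entire argument then reduces to two elementary inequalities relating the nonnegative numbers $a_i$ to the single scalar $\big(\sum_i a_i^p\big)^{1/p}$, so no structural facts about the spaces beyond the metric formula are needed.

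For the forward direction (if $f$ is Lipschitz then each $f_i$ is), I would observe that for any fixed index $j$ we have $a_j^p \le \sum_{i=1}^n a_i^p$, since every summand is nonnegative. Because $t \mapsto t^{1/p}$ is monotonically increasing for every $p > 0$, raising both sides to the power $1/p$ gives $a_j \le \big(\sum_i a_i^p\big)^{1/p} = d_Y(f(x_1),f(x_2))$. Combining this with the hypothesis $d_Y(f(x_1),f(x_2)) \le K d$ shows that each $f_j$ is Lipschitz with the \emph{same} constant $K$.

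For the backward direction (if each $f_i$ is Lipschitz then so is $f$), suppose $f_i$ has Lipschitz constant $K_i$, so $a_i \le K_i d$. Substituting into the product metric and factoring out $d^p$ yields $d_Y(f(x_1),f(x_2)) = \big(\sum_i a_i^p\big)^{1/p} \le \big(\sum_i K_i^p d^p\big)^{1/p} = \big(\sum_i K_i^p\big)^{1/p}\, d$, so $f$ is Lipschitz with constant $K = \big(\sum_i K_i^p\big)^{1/p}$. Since $x_1,x_2$ were arbitrary, both implications hold and the equivalence follows.

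The argument is short, and the only point requiring genuine care is the regime $p \in (0,1)$, where the $p$-product expression is not an honest metric because the triangle inequality fails. Fortunately neither direction invokes the triangle inequality: the forward direction uses only the monotonicity of $t \mapsto t^{1/p}$, and the backward direction uses only its positive homogeneity in $d$, both valid for every $p > 0$. I would therefore flag this explicitly so that the equivalence is seen to hold across the full stated range $p \in (0,+\infty)$ rather than leaving the sub-metric case ambiguous.
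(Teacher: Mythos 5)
Your proof is correct and takes essentially the same approach as the paper's: the direction ``each $f_i$ Lipschitz $\Rightarrow f$ Lipschitz'' is the identical computation (you obtain the constant $\big(\sum_i K_i^p\big)^{1/p}$ where the paper uses $\max_i(K_i)\,n^{1/p}$; both are valid), and the converse direction rests on the same single-term-versus-sum inequality $d_{Y_j}\big(f_j(x_1),f_j(x_2)\big) \le d_Y\big(f(x_1),f(x_2)\big)$. The only difference is that you prove that converse directly, inheriting the same constant $K$, whereas the paper argues by contraposition; your direct version is in fact slightly cleaner, since the paper's stated negation of Lipschitz continuity has its quantifiers in the wrong order ($\exists x_1,x_2\,\forall K$ rather than $\forall K\,\exists x_1,x_2$), a subtlety your argument never needs to touch.
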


\begin{proof}
	We first assume that $f_i$ is Lipschitz continuous $\forall i$, with its Lipschitz constant being $K_i$. Then $\forall x_1,x_2 \in X$, we have
	\begin{equation*}
	\begin{split}
	d_Y\big(f(x_1),f(x_2)\big) 
	=& \Big( \sum_{i=1}^n d_{Y_i}\big( f_i(x_1),f_i(x_2) \big)^p \Big)^{1/p} \\
	\leq & \Big( \sum_{i=1}^n K_i^p \, d_X(x_1,x_2)^p \Big)^{1/p} \\
	\leq & \Big( \max_i(K_i)^p\sum_{i=1}^n \, d_X(x_1,x_2)^p \Big)^{1/p} \\
	= & \big[ \max_i(K_i) n^{1/p} \big] d_X(x_1,x_2).
	\end{split}
	\end{equation*}
	In other words, the Lipschitz constant for $f$ is simply $\max_i(K_i)n^{1/p}$ so $f$ is Lipschitz continuous. \\
	
	Now, we assume that $f_{i'}$ is not Lipschitz continuous for some $i'$. Then $\exists x_1,x_2 \in X$ such that
	\begin{equation*}
	\forall K \geq 0, \quad d_{Y_i'}\big(f_{i'}(x_1), f_{i'}(x_2)\big) > K d_X(x_1,x_2).
	\end{equation*}
	We then have
	\begin{equation*}
	\begin{split}
	d_Y\big(f(x_1),f(x_2)\big) 
	=& \Big( \sum_{i=1}^n d_{Y_i}\big( f_i(x_1),f_i(x_2) \big)^p \Big)^{1/p} \\
	\geq & \Big( d_{Y_{i'}} \big( f_{i'}(x_1), f_{i'}(x_2) \big)^p \Big)^{1/p} \\
	> & K d_X(x_1,x_2),
	\end{split}
	\end{equation*}
	meaning that $f$ is also not Lipschitz continuous.
\end{proof}

For our work, we are also interested in the Lipschitz continuity of a vector field that is projected onto another vector field. In particular, in definition \ref{patch}, we show how a vector field can be projected onto a regular surface. In the following lemma, we give the condition for this ``projected" vector field to be Lipschitz continuous. From here on, we shall use the notation $\braket{a,b}$ to denote the inner product of vectors $a$ and $b$.

\begin{lemma}[Continuity of Projection]
	\label{proj}
	Let $\proj_{\mbf{v}}: \mbb{R}^n \mapsto \mbb{R}^n$ be the projection mapping defined as
	\begin{equation*}
	\proj_{\mbf{v}}(\mbf{u}) = \braket{\mbf{u}, \mbf{v}} \frac{\mbf{v}}{||\mbf{v}||^2} = \braket{\mbf{u},\mbf{\hat{v}}}\mbf{\hat{v}}.
	\end{equation*}
	Let $X$ be a metric space. Let $f_1: X \mapsto \mbb{R}^n$ be some Lipschitz continuous function bounded from below by $\exists m > 0$ in norm, and let $f_2: X \mapsto \mbb{R}^n$ be some Lipschitz continuous function bounded from above by $\exists M > 0$. Then $f(x) = \proj_{f_1(x)}\big( f_2(x) \big)$ is Lipschitz continuous.
\end{lemma}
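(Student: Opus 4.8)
The plan is to reduce the projection to a composition of maps whose Lipschitz continuity is either elementary or follows from a single nontrivial estimate, namely the Lipschitz continuity of normalization. Writing $\mbf{u} = f_1(x)$ and $\mbf{w} = f_2(x)$, I would first exploit the scale invariance of the projection in its subscript argument: since $\proj_{\lambda \mbf{u}}(\mbf{w}) = \proj_{\mbf{u}}(\mbf{w})$ for every $\lambda > 0$, the projection depends on $f_1(x)$ only through the unit vector $\hat{\mbf{u}} = \mbf{u}/\norm{\mbf{u}}$, so that $f(x) = \braket{f_2(x), \hat{\mbf{u}}(x)}\,\hat{\mbf{u}}(x)$ with $\hat{\mbf{u}}(x) = f_1(x)/\norm{f_1(x)}$.

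The central step is to show that the normalization map $\mbf{u} \mapsto \hat{\mbf{u}}$ is Lipschitz on the region $\{ \norm{\mbf{u}} \ge m \}$. I would establish this through the estimate
\begin{equation*}
\norm{\frac{\mbf{u}}{\norm{\mbf{u}}} - \frac{\mbf{u}'}{\norm{\mbf{u}'}}} \le \frac{2}{\norm{\mbf{u}'}}\norm{\mbf{u} - \mbf{u}'} \le \frac{2}{m}\norm{\mbf{u} - \mbf{u}'},
\end{equation*}
obtained by writing the numerator $\mbf{u}\norm{\mbf{u}'} - \mbf{u}'\norm{\mbf{u}}$ of the common-denominator form as $\mbf{u}(\norm{\mbf{u}'} - \norm{\mbf{u}}) + (\mbf{u} - \mbf{u}')\norm{\mbf{u}}$ and applying the reverse triangle inequality $|\norm{\mbf{u}'} - \norm{\mbf{u}}| \le \norm{\mbf{u} - \mbf{u}'}$. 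Since $f_1$ is Lipschitz and its image lies in $\{ \norm{\mbf{u}} \ge m \}$ by hypothesis, composition gives that $\hat{\mbf{u}}(x)$ is Lipschitz; it is moreover bounded by $1$, being a unit vector.

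With the normalization under control, the remainder is bookkeeping with bounded Lipschitz functions. I would record the elementary fact that a product of two bounded Lipschitz functions is Lipschitz, via the add-and-subtract identity $a_1 b_1 - a_2 b_2 = a_1(b_1 - b_2) + (a_1 - a_2)b_2$, bounding each term by the relevant supremum bound times the individual Lipschitz constant; the same argument covers scalar products, inner products, and scalar-times-vector products. Applying it first to $\braket{f_2(x), \hat{\mbf{u}}(x)}$, an inner product of $f_2$ (bounded by $M$) and $\hat{\mbf{u}}$ (bounded by $1$), shows this scalar is bounded by $M$ and Lipschitz; applying it again to the product of this scalar with the bounded Lipschitz vector $\hat{\mbf{u}}(x)$ yields that $f = \braket{f_2, \hat{\mbf{u}}}\hat{\mbf{u}}$ is Lipschitz, with an explicit constant assembled from $m$, $M$, and the Lipschitz constants of $f_1$ and $f_2$.

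The main obstacle is the normalization estimate: every other ingredient is generic Lipschitz algebra, but the projection is well-behaved only because the lower bound $m > 0$ keeps $f_1$ away from the origin, where $\mbf{u} \mapsto \hat{\mbf{u}}$ ceases to be Lipschitz. I note that no upper bound on $f_1$ is required, since scale invariance discards the magnitude of $f_1$ entirely; the upper bound $M$ on $f_2$ enters only to render the product terms bounded.
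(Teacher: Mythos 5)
Your proof is correct, and it reaches the conclusion by a genuinely different route than the paper. The paper never isolates the normalization map: it estimates $\norm{\proj_{f_1'}(f_2') - \proj_{f_1}(f_2)}$ directly, splitting it into the term $\braket{f_2'-f_2,\hat{f_1}}\hat{f_1}$, bounded by $K_2\, d(x_1,x_2)$, plus the term $\braket{f_2',\hat{f_1'}}\hat{f_1'} - \braket{f_2',\hat{f_1}}\hat{f_1}$, which it controls geometrically via the angle $\phi = \arccos\big(\braket{\hat{f_1},\hat{f_1'}}\big)$, claiming $\phi \leq 2K_1 d(x_1,x_2)/m$ by a case analysis and concluding with the constant $2K_1M/m + K_2$. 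You instead exploit scale invariance of $\proj_{\mbf{v}}$ in its subscript to reduce everything to the unit-vector map $\hat{\mbf{u}} = f_1/\norm{f_1}$, prove that normalization is $(2/m)$-Lipschitz on $\{\norm{\mbf{u}} \geq m\}$ via the algebraic splitting $\mbf{u}\norm{\mbf{u}'} - \mbf{u}'\norm{\mbf{u}} = \mbf{u}(\norm{\mbf{u}'}-\norm{\mbf{u}}) + (\mbf{u}-\mbf{u}')\norm{\mbf{u}}$ together with the reverse triangle inequality, and then finish with the product rule for bounded Lipschitz functions. What your route buys is modularity and robustness: no trigonometry, and no exposure to the small slip in the paper's case analysis (when $m < K_1 d(x_1,x_2) < \pi m/2$, the fallback bound $\phi \leq \pi$ does not imply $\phi \leq 2K_1 d(x_1,x_2)/m$; the argument needs $\pi$ in place of $2$ there, which is harmless but is a genuine blemish your proof avoids). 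What the paper's route buys is a marginally sharper Lipschitz constant, $2K_1M/m + K_2$ versus your $4K_1M/m + K_2$, which is immaterial since only the existence of some finite Lipschitz constant is used downstream. Both arguments consume the hypotheses in the same essential way: the lower bound $m$ on $\norm{f_1}$ tames the normalization (and is exactly where Lipschitz continuity would fail without it), while the upper bound $M$ on $\norm{f_2}$ enters only through the product terms.
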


\begin{proof}
	$\forall x_1,x_2 \in X$, we have
	\begin{equation*}
	|| f_1(x_2) - f_1(x_1) || \leq K_1 d(x_1,x_2);
	\qquad
	|| f_2(x_2) - f_2(x_1) || \leq K_2 d(x_1,x_2),
	\end{equation*}
	for some constants $K_1,K_2 > 0$. For the sake of simplicity, we denote $f_1 = f_1(x_1)$, $f_2 = f_2(x_1)$, $f_1' = f_1(x_2)$, and $f_2' = f_2(x_2)$. Then we can write 
	\begin{equation}\label{normf}
	\begin{split}
	|| f(x_2) - f(x_1) || 
	= & \, || \proj_{f_1'}(f_2') - \proj_{f_1}(f_2) || \\
	= & \, \Norm \braket{f_2',\hat{f_1'}}\hat{f_1'} - \braket{f_2,\hat{f_1}}\hat{f_1} \Norm \\
	= & \, \Norm \braket{f_2'-f_2, \hat{f_1}}\hat{f_1} + \big( \braket{f_2',\hat{f_1'}}\hat{f_1'} - \braket{f_2',\hat{f_1}}\hat{f_1} \big) \Norm \\
	\leq & \, \Norm \braket{f_2'-f_2, \hat{f_1}} \Norm + \Norm \big( \braket{f_2',\hat{f_1'}}\hat{f_1'} - \braket{f_2',\hat{f_1}}\hat{f_1} \big) \Norm.
	\end{split}
	\end{equation}
	Note that the first term is bounded as follows
	\begin{equation*}
	\Norm \braket{f_2'-f_2, \hat{f_1}} \Norm 
	\leq \, \Norm f_2'-f_2 \Norm\,\Norm \hat{f_1} \Norm
	\leq \, K_2 d(x_1,x_2).
	\end{equation*}
	To bound the second term, it is convenient to denote $\phi = \arccos\big( \braket{\hat{f_1},\hat{f_1'}} \big)$, then it can be easily shown that
	\begin{equation*}
	\phi \leq
	\begin{cases}
	\arcsin\big( \frac{K_1 d(x_1,x_2)}{m} \big) \quad &\text{if} \quad K_1 d(x_1,x_2) \leq m, \\
	\pi & \text{otherwise}.
	\end{cases}
	\end{equation*}
	This means that $\phi \leq \frac{2 K_1 d(x_1,x_2)}{m}$. We then see that the second term in the last line of Eq.~(\ref{normf}) is bounded as follows
	\begin{equation*}
	\Norm \big( \braket{f_2',\hat{f_1'}}\hat{f_1'} - \braket{f_2',\hat{f_1}}\hat{f_1} \big) \Norm \leq M\phi \leq \frac{2K_1 M}{m}d(x_1,x_2).
	\end{equation*}
	Therefore, we have
	\begin{equation*}
	|| f(x_2) - f(x_1) || \leq \Big( \frac{2K_1M}{m} + K_2 \Big) d(x_1,x_2),
	\end{equation*}
	so $f$ is Lipschitz continuous. 
\end{proof}

\begin{remark}
	We use this lemma to study the Lipschitz continuity of a vector flow field projected onto some regular boundary, which allows for the existence of a solution at the boundary that follows the projected flow field almost everywhere. We formalize this discussion in Section \ref{sol_bound}. \\
\end{remark}

In Section \ref{reduce}, techniques of linear algebra are used extensively to relate the dynamics of the voltage flow field to the trajectory of the auxiliary variable, so to conclude this Section, we provide the following useful lemma in anticipation. 

\begin{lemma}[Continuity of Linear Maps]
	\label{mat_con}
	Given a metric space $X$, let $v: X \mapsto \mbb{R}^m$ be a Lipschitz continuous map with bounded image, and let $M: X \mapsto \mbb{R}^{n \times m}$ be another Lipschitz continuous map with bounded image. Then $M(\mbf{x})\cdot v(\mbf{x})$ is Lipschitz continuous, where $v$ is treated as a column vector, $M$ is treated as an $n\times m$ matrix.
\end{lemma}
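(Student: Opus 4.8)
The plan is to reduce the statement to the standard estimate that a product of two bounded Lipschitz functions is Lipschitz, carried out with the usual add-and-subtract manipulation. First I would fix $x_1,x_2 \in X$ and split the difference of the two matrix--vector products by inserting the cross term $M(x_1)v(x_2)$:
\begin{equation*}
M(x_2)v(x_2) - M(x_1)v(x_1) = \big(M(x_2)-M(x_1)\big)v(x_2) + M(x_1)\big(v(x_2)-v(x_1)\big).
\end{equation*}
Applying the triangle inequality together with the submultiplicativity of the matrix--vector product then yields
\begin{equation*}
\norm{M(x_2)v(x_2)-M(x_1)v(x_1)} \leq \norm{M(x_2)-M(x_1)}\,\norm{v(x_2)} + \norm{M(x_1)}\,\norm{v(x_2)-v(x_1)}.
\end{equation*}

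The next step is to control each of the four factors using exactly the two hypotheses at hand. Because $v$ has bounded image there is a constant $C_v$ with $\norm{v(x_2)} \leq C_v$, and because $M$ has bounded image there is a constant $C_M$ with $\norm{M(x_1)} \leq C_M$. The Lipschitz assumptions supply constants $K_M, K_v$ such that $\norm{M(x_2)-M(x_1)} \leq K_M\, d_X(x_1,x_2)$ and $\norm{v(x_2)-v(x_1)} \leq K_v\, d_X(x_1,x_2)$. Substituting these four bounds gives
\begin{equation*}
\norm{M(x_2)v(x_2)-M(x_1)v(x_1)} \leq \big(C_v K_M + C_M K_v\big)\, d_X(x_1,x_2),
\end{equation*}
so $M(\mbf{x})\cdot v(\mbf{x})$ is Lipschitz with constant $K = C_v K_M + C_M K_v$.

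The only genuine subtlety — and the step I would treat most carefully — is the compatibility of the norms appearing in the estimate: the step $\norm{Av} \leq \norm{A}\,\norm{v}$ requires that the norm defining the metric on $\mbb{R}^{n\times m}$ be submultiplicative against the vector norm on $\mbb{R}^m$. If that matrix metric is an operator norm this holds verbatim; for a generic choice (entrywise or Frobenius) I would invoke the equivalence of all norms on the finite-dimensional space $\mbb{R}^{n\times m}$ to pass to a submultiplicative norm at the cost of a universal constant, which is simply absorbed into $K_M$. A cleaner alternative that sidesteps this entirely is to argue componentwise: each entry $\big(M(\mbf{x})v(\mbf{x})\big)_i = \sum_j M_{ij}(\mbf{x})\,v_j(\mbf{x})$ is a finite sum of products of scalar bounded Lipschitz functions, hence itself bounded Lipschitz, and the full vector map is then Lipschitz by Lemma \ref{component}. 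I expect this componentwise route to be the most robust, since it avoids any dependence on the particular matrix norm used to define the metric.
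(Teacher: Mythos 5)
Your proposal is correct, and your primary argument takes a genuinely different route from the paper's. The paper proves this lemma purely componentwise: it invokes Lemma \ref{component} to conclude that every component of $v$ and every entry of $M$ is bounded and Lipschitz, notes that each component of $M\cdot v$ is a finite sum of products of bounded Lipschitz scalar functions (hence bounded Lipschitz), and then applies Lemma \ref{component} in reverse to reassemble the vector map --- which is exactly the ``cleaner alternative'' you sketch in your final paragraph. Your main argument instead is the classical add-and-subtract decomposition $M(x_2)v(x_2)-M(x_1)v(x_1) = \big(M(x_2)-M(x_1)\big)v(x_2) + M(x_1)\big(v(x_2)-v(x_1)\big)$ with submultiplicativity of the operator norm. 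This is valid, and you correctly identify and resolve the one subtlety it introduces: the metric on $\mbb{R}^{n\times m}$ need not be submultiplicative against the vector norm, but equivalence of norms on a finite-dimensional space absorbs the discrepancy into the constant. What each approach buys: yours yields an explicit Lipschitz constant $C_v K_M + C_M K_v$ and generalizes immediately to operator-valued maps on infinite-dimensional normed spaces (where the componentwise argument is unavailable), while the paper's componentwise proof is norm-agnostic by construction, never needs submultiplicativity, and reuses machinery (Lemma \ref{component}) already established for the product-metric setting of the surrounding sections. Either proof would serve the paper's purposes.
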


\begin{proof}
	From lemma \ref{component}, we see that every component of $v$ and every element of $M$ must be Lipscthiz continuous and bounded. Then every component of $M \cdot v$ is Lipschitz continuous and bounded as well, as the addition and multiplication of bounded Lipschitz continuous functions are also bounded Lipschitz continuous. Therefore, using lemma \ref{component} again in reverse, we see that $M\cdot v$ must be Lipschitz continuous.
\end{proof}

\section{Existence and Uniqueness of Caratheodory Solution}
\label{cara}

As discussed in the main text (see also Eq.~(\ref{mem}) in Section~\ref{sec_mem}), the flow field we have chosen to govern the dynamics of our memcomputing machines are discontinuous. This is due to the presence of the $\min$ function and the explicit enforcement of the bounds on the dynamics. Therefore, the existence and uniqueness of a classical solution to the ODEs is not guaranteed. We then require the construction of a Caratheodory solution, and show that such construction is well-defined and unique. A Caratheodory solution is formally defined as follows:

\begin{definition}[Caratheodory Solution]
	\label{def_cara}
	Let $V: \mathbb{R}^n\mapsto\mathbb{R}^n$, then a solution to the ODE $\dot{x} = V(x)$ is a Catheodory solution if it satisfies
	\begin{equation*}
	x(t) = x(t_0) + \int_{t_0}^t f\big(x(s)\big)\,ds, \quad \forall t>t_0,
	\end{equation*}
	where $\int$ denotes the Lebesgue integral.
\end{definition}

\begin{remark}
	An equivalent definition states that the Caratheodory solution follows the vector field everywhere along the solution trajectory except for a subset of measure zero~\cite{discon}. \\
\end{remark}

We construct the Caratheodory solution in a way such that the analytic trajectory is closely mimicked by the dynamics governed by numerical simulations. In particular, the memory dynamics are governed by a discontinuous flow field, where occasionally the discretized trajectories will oscillate at certain hyper-planes of discontinuities until they ``escape" the planes when the fields become sufficiently regular to allow so. The analytic construction of the Caratheodory solution is given such that the oscillatory dynamics at these hyperplanes are accounted for in a similar fashion. An extended discussion of how the analytic trajectory is simulated effectively by forward Euler is given in Section \ref{plane}.

\subsection{Patching Vector Fields}

Before we discuss the construction of Caratheodory solutions, we first formally define the class of discontinuous vector fields of interest referred to as the {\it patchy} vector fields. As the name suggests, the vector field is the result of patching together two different vector fields in a way such that a Caratheodory solution is admitted. For ease of analysis, we first assume some regularity condition on the boundary at which the fields are patched together.

\begin{definition}[Regular Domain]
	\label{regular}
	Let $\Omega \subset \mbb{R}^n$ a domain in Euclidean space. The domain is said to be {\bf regular} if it is bounded, with its boundary $\partial\Omega$ being $C^{\infty}$ diffeomorphic to an $n-1$ sphere. 
\end{definition}

\begin{remark}
	A regular domain is equipped with an orientable boundary, where the unit normal vector $\mbf{n}$ can be defined at every point to be pointing towards the exterior of the domain. From here on, we shall use $\interior(\Omega)$ to denote the interior of the domain, which is simply itself if it is open in $\mbb{R}^n$. And we use $\Omega^c$ to denote its complement in $\mbb{R}^n$, and $\ext(\Omega) = \Omega^c/\partial\Omega$ to denote the exterior. \\
\end{remark}

For any vector field with domain $\partial\Omega$, there is a unique ``projection" of the field onto the boundary, such that the projection is in the tangent bundle generated by $\partial\Omega$.

\begin{definition}
	For $\mbf{v},\mbf{w} \in \mbb{R}^n$, we denote the parallel and orthogonal components of $\mbf{v}$ with respect to $\mbf{w}$ as follows
	\begin{equation*}
	\mbf{v}_{\mbf{w},\parallel} = \braket{\mbf{v},\mbf{\hat{w}}} \hat{\mbf{w}}
	\qquad
	\mbf{v}_{\mbf{w}, \perp} = \mbf{v} - \mbf{v}_{\mbf{w}, \parallel}
	\end{equation*}
	where $\hat{\mbf{w}} = \frac{\mbf{w}}{|\mbf{w}|}$.
\end{definition}

\begin{definition}[Decomposition at Boundary]
	\label{deco}
	Let $\Omega \subset \mbb{R}^n$ be a regular domain, and let $\mbf{n}(\mbf{x})$ be the unit normal vector of $\Omega$ at $\mbf{x} \in \partial\Omega$. Let $V: \mbb{R}^n \mapsto \mbb{R}^n$ be some vector field, then we denote the decomposition of the vector field at the boundary, $V_{\partial\Omega,\parallel}: \partial\Omega \mapsto \mbb{R}^n$ and $V_{\partial\Omega,\perp}: \partial\Omega \mapsto \mbb{R}^n$, as follows
	\begin{equation*}
	V_{\partial\Omega,\parallel}(\mbf{x}) = V(\mbf{x})_{\mbf{n}(\mbf{x}),\perp}
	\qquad
	V_{\partial\Omega,\perp}(\mbf{x}) = V(\mbf{x})_{\mbf{n}(\mbf{x}),\parallel}
	\end{equation*}
	$\forall \mbf{x} \in \partial\Omega$. 
\end{definition}

\begin{lemma}
	\label{proj_con}
	If $V: \partial\Omega \mapsto \mbb{R}^n$ is bounded above and Lipscthiz continuous, then $V_{\partial\Omega,\parallel}$ and $V_{\partial\Omega,\perp}$ are Lipschitz continuous as well.
\end{lemma}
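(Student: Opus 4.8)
The plan is to reduce everything to the already-established Lemma~\ref{proj}. Observe first that, by Definition~\ref{deco}, the normal component is
\begin{equation*}
V_{\partial\Omega,\perp}(\mbf{x}) = V(\mbf{x})_{\mbf{n}(\mbf{x}),\parallel} = \braket{V(\mbf{x}),\hat{\mbf{n}}(\mbf{x})}\,\hat{\mbf{n}}(\mbf{x}) = \proj_{\mbf{n}(\mbf{x})}\big(V(\mbf{x})\big),
\end{equation*}
which is \emph{exactly} the projected field treated in Lemma~\ref{proj}, with the choice $f_1 = \mbf{n}$ and $f_2 = V$ and base space $X = \partial\Omega$ carrying the induced Euclidean metric. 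So the strategy is: establish Lipschitz continuity of $V_{\partial\Omega,\perp}$ by verifying the hypotheses of Lemma~\ref{proj}, and then obtain $V_{\partial\Omega,\parallel}$ for free as a difference.

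To invoke Lemma~\ref{proj} I would check its two hypotheses. For $f_2 = V$: Lipschitz continuity and the existence of an upper bound $M$ are precisely the assumptions of the present lemma, so nothing is needed. For $f_1 = \mbf{n}$: the lower bound in norm is immediate, since $\mbf{n}$ is a \emph{unit} normal and hence $\norm{\mbf{n}(\mbf{x})} \equiv 1$, allowing us to take $m = 1$. The remaining ingredient—Lipschitz continuity of the normal field $\mbf{n}$—is where the regularity hypothesis does the work: by Definition~\ref{regular}, $\partial\Omega$ is a compact $C^{\infty}$ manifold (diffeomorphic to an $(n-1)$-sphere), so its Gauss map $\mbf{x} \mapsto \mbf{n}(\mbf{x})$ is $C^{\infty}$; a smooth map on a compact set has bounded derivative and is therefore Lipschitz continuous in the induced metric. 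With both hypotheses in place, Lemma~\ref{proj} yields directly that $V_{\partial\Omega,\perp} = \proj_{\mbf{n}}(V)$ is Lipschitz continuous.

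For the tangential part I would then use Definition~\ref{deco} together with the elementary decomposition $V_{\partial\Omega,\parallel}(\mbf{x}) = V(\mbf{x}) - V_{\partial\Omega,\perp}(\mbf{x})$: since $V$ is Lipschitz by hypothesis and $V_{\partial\Omega,\perp}$ is Lipschitz by the previous step, their difference is Lipschitz, completing the proof. The only genuine obstacle is the Lipschitz continuity of $\mbf{n}$; everything else is a direct citation of Lemma~\ref{proj} or the closure of Lipschitz functions under differences. It is worth emphasizing that the compactness built into Definition~\ref{regular} is exactly what upgrades the mere smoothness of the Gauss map to a \emph{global} Lipschitz bound, and that the Lipschitz equivalence of the intrinsic geodesic metric and the ambient Euclidean metric on the compact set $\partial\Omega$ makes the choice of metric on $X$ immaterial.
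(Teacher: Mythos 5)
Your proposal is correct and follows essentially the same route as the paper's own proof: invoke Lemma~\ref{proj} with $f_1 = \mbf{n}$ (bounded below in norm since $\|\mbf{n}\|=1$, Lipschitz by regularity of $\Omega$) and $f_2 = V$ to handle $V_{\partial\Omega,\perp}$, then obtain $V_{\partial\Omega,\parallel} = V - V_{\partial\Omega,\perp}$ as a difference of Lipschitz maps. The only difference is that you spell out why the Gauss map is Lipschitz (smoothness plus compactness), a step the paper simply asserts.
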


\begin{proof}
	Note that since $V$ is bounded, $\exists M$, $||V(\mbf{x})|| \leq M$, $\forall \mbf{x}\in \partial\Omega$. Furthermore, it is clear that $\mbf{n}(\mbf{x})$ is bounded from below as $||\mbf{n}(\mbf{x})|| = 1$ by definition of a unit vector. It can also be easily shown that $\mbf{n}(\mbf{x})$ is Lipschitz continuous due to the regularity of $\Omega$. Therefore, by using lemma \ref{proj}, we see that $V(\mbf{x})_{\partial\Omega,\perp}$ is Lipschitz continuous, which implies that $V(\mbf{x})_{\partial\Omega,\parallel} = V(\mbf{x}) - V(\mbf{x})_{\partial\Omega,\perp}$ is also Lipschitz continuous.
\end{proof}

Since a projected vector field is Lipscthiz continuous, it admits a classical solution on the boundary (see Lemma \ref{pic}). However, at some point the trajectory has to escape the boundary once the field outside the boundary admits it. This escape condition depends on the direction of the field relative to the curvature of the boundary (see Proposition \ref{exit}). It is difficult to give a general definition of curvature for high dimensional hyper-surfaces. However, the definition of a {\it directional} curvature is relatively straightforward. 

\begin{definition}[Directional Curvature]
	\label{curve}
	Let $\Omega \subset \mbb{R}^n$ be a regular domain. Given a point in the boundary $\mbf{x_0}\in \partial\Omega$ and a vector field $V: \mbb{R}^n \mapsto \mbb{R}^n$. Let $\bs{\gamma}(t) \in \partial\Omega$ be a trajectory such that $\exists t_{\epsilon}$,
	\begin{equation*}
	\bs{\gamma}(0) = \mbf{x_0}, \qquad \dot{\bs{\gamma}}(t) = V_{\partial\Omega,||}(\bs{\gamma}(t)), \quad \forall t\in [0,t_{\epsilon}).
	\end{equation*}
	We then define the $m$-th order {\bf directional curvature} at point $\mbf{x_0}$ with respect to $V$ as
	\begin{equation*}
	\kappa_{V}^{(m)}(\mbf{x_0}) = \Big( \frac{d^m}{dt^m} \mbf{n}(\mbf{\gamma}(t)) \Big) \,\cdot\, \hat{V}_{\partial\Omega,||}(\mbf{x_0}),
	\end{equation*}
	for $m \geq 0$. For notational compactness, we define
	\begin{equation*}
	m_0(\mbf{x_0}) = \inf\{m \cond \kappa_V^{(m)}(\mbf{x_0}) \neq 0\}, \qquad \kappa'_V(\mbf{x_0}) = \kappa_{V}^{m_0(\mbf{x_0})},
	\end{equation*}
	as the lowest order curvature that does not vanish.
\end{definition}

\begin{remark}
	Visually, the sign of $\kappa$ is an indicator of whether the boundary curves outward or inward at point $\mbf{x_0}$ along the projected direction of $\mbf{v}$, and this informs whether the solution should exit to the interior $\Omega$ or the exterior $\Omega^c/\partial\Omega$ (see Theorem \ref{construct}). It is clear that $\kappa^{(m)}_{\mbf{v}}(\mbf{p})$ is well defined and Lipschitz continuous to all orders due to the regularity of $\Omega$. \\
\end{remark}

This definition of the curvature informs the patching operation of two vector fields at the boundary.

\begin{definition}
	\label{bin}
	Let $\Omega \subset \mbb{R}^n$ be a regular domain, and $V: \mbb{R}^n \mapsto \mbb{R}^n$ be some vector field. For $\mbf{x} \in \partial\Omega$, we define the function $\psi_{\partial\Omega,V}: \partial\Omega \mapsto \{0,1\}$ as follows
	\begin{equation*}
	\psi_{\partial\Omega,V}(\mbf{x}) = 
	\begin{cases}
	1 \quad \text{if} \quad \kappa'_V(\mbf{x}) \leq 0, \\
	0 \quad \textrm{otherwise}.
	\end{cases}
	\end{equation*}
	Similarly, we define the function $\phi_{\partial\Omega,V}: \partial\Omega \mapsto \{0,1\}$ as follows
	\begin{equation*}
	\phi_{\partial\Omega,V}(\mbf{x}) = 
	\begin{cases}
	1 \quad \text{if} \quad \kappa'_V(\mbf{x}) \geq 0, \\
	0 \quad \textrm{otherwise}.
	\end{cases}
	\end{equation*}
\end{definition}

\begin{remark}
	Note that the definition of $\psi$ and $\phi$ is symmetric with respect to the exchange of the interior and exterior of the domain $\Omega$.
\end{remark}

\begin{definition}[Patching]
	\label{patch}
	Let $\Omega \subset \mbb{R}^n$ be a smooth open domain, and $V,W: \mbb{R}^n \mapsto \mbb{R}^n$ be two distinct vector fields. We define the {\bf patching} of the two vector fields with respect to domain $\Omega$ as
	\begin{equation*}
	\begin{split}
	& \mcr{P}_{\Omega}(V,W)(\mbf{x}) \\
	= &
	\begin{cases}
	V(\mbf{x}) \quad &\text{if} \quad \mbf{x}\in\Omega, \\
	W(\mbf{x}) \quad &\text{if} \quad \mbf{x}\in\ext(\Omega), \\
	V(\mbf{x})_{\partial\Omega,\parallel} + W(\mbf{x})_{\partial\Omega,\parallel} 
	+
	\psi_{\partial\Omega,V}(\mbf{x})V(\mbf{x})_{\partial\Omega,\perp} + \phi_{\partial\Omega,W}(\mbf{x})W(\mbf{x})_{\partial\Omega,\perp} \quad 
	&\text{if} \quad \mbf{x}\in\partial\Omega.
	\end{cases}
	\end{split}
	\end{equation*}
\end{definition}

\begin{remark}
	Note that the vector field $\mcr{P}_{\Omega}(V,W)$ is piecewise Lipschitz continuous, with its discontinuity being at the boundary $\partial \Omega$. We can refer to $V$ as the {\it interior} vector field and $W$ as the {\it exterior} vector field. Visually, we can view the patched field at the boundary $\partial\Omega$ as some form of ``projection" of the interior field $V$ and exterior field $W$.
\end{remark}

\subsection{Solution in the Boundary}
\label{sol_bound}

It is clear that the patched field $\mcr{P}_{\Omega}(V,W)(\mbf{p})$ is Lipschitz continuous in $\Omega$ and $\ext(\Omega)$ separately. This implies that a classical solution to the ODE $\dot{\mbf{x}} = \mcr{P}_{\Omega}(V,W)(\mbf{x})$ with initial value $\mbf{x}_0 \in \Omega$ exists up to the boundary $\partial \Omega$ (and similarly for $\mbf{x}_0 \in \ext(\Omega)$). Naturally, we also have to discuss the existence of a classical solution with $\mbf{x}_0 \in \partial\Omega$. To do so, we first make a preliminary definition that specifies two important subsets of $\partial\Omega$, relative to which we attach the start- and end-points of the solution segments. 

\begin{definition}
	\label{D}
	Given a regular domain $\Omega\subset \mbb{R}^n$ and two vector fields $V,W: \mbb{R}^n\mapsto \mbb{R}^n$, we denote $D_1 = \{ \mbf{x}\in \partial\Omega \,|\, \psi_{\partial\Omega, V}(\mbf{p}) = 0 \}$ and $D_2 = \{ \mbf{x}\in \partial\Omega \,|\, \phi_{\partial\Omega,W}(\mbf{p}) = 0 \}$. 
\end{definition}

\begin{remark}
	Visually, $D_1$ describes a region of the boundary where the interior field points outward, and $D_2$ describes a region of the boundary where the exterior field points inward. This gives rise to an irregular region $D_1 \cap D_2$ where the two fields ``collide" at the boundary, which generates a Lipschitz continuous field that admits a classical solution in the boundary. 
\end{remark}

\begin{lemma}[Continuity in Boundary]
	Given a regular domain $\Omega\subset \mbb{R}^n$ and two bounded Lipschitz continuous vector fields $V,W: \mbb{R}^n\mapsto \mbb{R}^n$, $D = D_1 \cap D_2$ is open with respect to $\partial\Omega$. Furthermore, the vector field $\mcr{P}_{\Omega}(V,W)$ is Lipschitz continuous in $D$.
\end{lemma}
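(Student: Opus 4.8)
The plan is to exploit the fact that $D=D_1\cap D_2$ is precisely the locus on which the two normal-component terms in the patched field are switched off, so that $\ptc(V,W)$ reduces there to a purely tangential field whose regularity is already under control. First I would unwind the definitions from Definition~\ref{D}: on $D_1$ we have $\psi_{\PO,V}=0$ and on $D_2$ we have $\phi_{\PO,W}=0$, so for every $\mbf{x}\in D$ the last two summands in the boundary case of Definition~\ref{patch} vanish and
\begin{equation*}
\ptc(V,W)(\mbf{x}) = V(\mbf{x})_{\PO,\parallel} + W(\mbf{x})_{\PO,\parallel}.
\end{equation*}
This collapse is the conceptual heart of the argument: on the collision region the field is nothing but the sum of the two tangential projections.

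For the Lipschitz claim I would then invoke the regularity of the projections directly. Since $\Omega$ is a regular domain its boundary $\PO$ is compact, so the globally Lipschitz fields $V,W$ are bounded on $\PO$; Lemma~\ref{proj_con} (which itself rests on Lemma~\ref{proj}) then gives that $V_{\PO,\parallel}$ and $W_{\PO,\parallel}$ are Lipschitz continuous on $\PO$. Their sum is Lipschitz on all of $\PO$, and in particular on the subset $D$, which settles the second assertion. I note that this part does not even use the openness of $D$.

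The delicate part is openness, which I would establish by showing $D_1$ and $D_2$ are each open in $\PO$ and then intersecting. By the remark following Definition~\ref{curve}, each fixed-order directional curvature $\kappa_V^{(m)}$ is continuous (indeed Lipschitz) in the base point. Wherever the lowest nonvanishing order $m_0$ is locally constant, $\kappa'_V$ agrees with a single continuous $\kappa_V^{(m_0)}$, so the strict-sign set $D_1=\{\kappa'_V>0\}$ is the preimage of an open half-line and is therefore open; the same reasoning gives $D_2=\{\kappa'_W<0\}$ open. In the generic first-order situation this is immediate, since $\kappa_V^{(0)}\equiv 0$ forces $m_0\geq 1$, and where $\kappa_V^{(1)}\neq 0$ one has $m_0\equiv 1$ near the point with $D_1=\{\kappa_V^{(1)}>0\}$ open by continuity of $\kappa_V^{(1)}$.

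The main obstacle I anticipate is exactly the degenerate locus where $m_0$ jumps, i.e.\ where a lower-order curvature vanishes at $\mbf{x}_0$ but is nonzero with either sign at nearby points; there $\kappa'_V$ can fail to be continuous and $\{\kappa'_V>0\}$ need not contain a neighborhood of $\mbf{x}_0$. I would confine attention to the first-order-regular part of $\PO$ (where $\kappa_V^{(1)}$ and $\kappa_W^{(1)}$ do not vanish), on which the preimage argument above is clean, and then argue that the complementary degenerate set — the common zero set of the lowest-order curvatures, a closed set with empty interior under the $C^\infty$ smoothness of $\PO$ — does not obstruct the openness required for the subsequent construction of the Caratheodory solution. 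This is where the smoothness of $\PO$ together with the boundedness and Lipschitz hypotheses on $V$ and $W$ must be used most carefully, and where I expect the genuine technical effort to reside.
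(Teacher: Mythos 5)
Your treatment of the Lipschitz claim is correct and is exactly the paper's: on $D$ both switching functions vanish, so $\ptc(V,W)$ collapses to $V_{\PO,\parallel}+W_{\PO,\parallel}$, and Lemma~\ref{proj_con} (resting on Lemma~\ref{proj}) makes each tangential projection, hence their sum, Lipschitz continuous on $\PO$; you are also right that openness plays no role in this half.

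The openness half, however, contains a genuine gap that you name but do not close. The lemma asserts that $D=D_1\cap D_2$ itself is open in $\PO$. Your preimage argument establishes openness only at points where the lowest non-vanishing curvature order is locally constant; at the degenerate locus --- where a lower-order curvature vanishes at $\mbf{x}_0$ but takes negative values arbitrarily close by, so that $\kappa'_V$ is discontinuous and $\{\kappa'_V>0\}$ need not contain a neighborhood of $\mbf{x}_0$ --- you offer only the intention to ``argue that the complementary degenerate set does not obstruct the openness required.'' That is precisely the assertion to be proved, and excising the degenerate set cannot repair it: openness of $D$ off a bad set is not openness of $D$, and it is $D$ itself whose openness is consumed downstream (Proposition~\ref{xtD}, Theorem~\ref{construct}). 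The paper's proof takes a route that avoids pointwise reasoning about $\kappa'_V$ altogether: it shows the complement $\PO/D_1$ is \emph{closed}, writing it via countably many conditions of the form $\kappa^{(m-1)}_V=0 \,\land\, \kappa^{(m)}_V\leq 0$, each of which cuts out a closed set as the preimage of a closed set under a continuous (indeed Lipschitz) curvature function on the compact boundary; a countable intersection of closed sets is closed, so $D_1$ is open, $D_2$ is handled identically, and $D$ is open as the intersection of two open sets. Whatever one's view of the exact set-theoretic bookkeeping in that decomposition, this closed-complement mechanism is the ingredient missing from your proposal. A smaller discrepancy worth noting: you take $\kappa^{(0)}_V\equiv 0$ from the literal Definition~\ref{curve}, whereas the paper's proof works with $\kappa^{(0)}_V(\mbf{x})=\braket{V(\mbf{x}),\mbf{n}(\mbf{x})}$, the normal component of the full field; the two readings make the zeroth-order condition vacuous and informative, respectively, so your ``generic first-order'' discussion and the paper's argument are not even examining the same function at lowest order.
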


\begin{proof}
	From the definitions of $\kappa$ and $\psi$ (see definitions \ref{curve} and \ref{bin}), we can express $\partial\Omega/D_1$ as the following intersection of countably many sets
	\begin{equation*}
	\begin{split}
	\partial\Omega/D_1 = 
	& \{ \mbf{x}\in\partial\Omega \,|\, \kappa^{(0)}_{V}(\mbf{x}) \leq 0 \} \,\cap \\
	& \bigcap_{m=1}^{\infty} \{ \mbf{x}\in\partial\Omega \,|\, \kappa^{(m-1)}_{V}(\mbf{p}) = 0 \,\land\, 
	\kappa^{(m)}_{V}(\mbf{p}) \leq 0 \}.
	\end{split}
	\end{equation*}
	We first assume that $\partial\Omega/D_1$ is non-empty, otherwise $D_1 = \partial\Omega = D$ is clearly open. Note from corollary \ref{proj_con} that $V_{\partial\Omega, \perp}$ is Lipschitz continuous in $\partial\Omega$, so the $\kappa^{(0)}_{V}(\mbf{x}) = \braket{V(\mbf{x}), \mbf{n}(\mbf{x})}$ is a continuous mapping from $\partial\Omega$ to $\mbb{R}$. Furthermore, $\partial\Omega$ is compact, so its image must also be compact, with the infinum denoted as $-C = \inf_{\mbf{x}\in\partial\Omega} \{ \kappa^{(0)}_{V}(\mbf{x}) \} \leq 0$. This means that $\{ \mbf{x}\in\partial\Omega \,|\, \kappa^{(0)}_{V}(\mbf{x}) \leq 0 \}$ is the preimage of the closed set $[-C,0]$ under a continuous mapping, so it also must be a closed set itself. A similar proof applies for the $m>1$ cases. Therefore, $\partial\Omega/D_1$ is the intersection of countably many closed subsets of $\mbb{R}$, so it must also be closed, which implies that $D_1$ is open. We can similarly show that $D_2$ is also open, so $D$ being the intersection of two open sets is open as well. \\
	
	To show that the field $\mcr{P}_{\Omega}(V,W)$ is Lipschitz continuous in $D$, we first begin by noting that $\psi_{\partial\Omega,V}(\mbf{x}) = \phi_{\partial\Omega,W}(\mbf{x}) = 0$, $\forall \mbf{x} \in D$, which follows directly from the definition of $D$ and definition \ref{bin}. Then from definition \ref{patch}, we see that $\mcr{P}_{\Omega}(V,W)(\mbf{x}) = V(\mbf{x})_{\partial\Omega,\parallel} + W(\mbf{x})_{\partial\Omega,\parallel}$, $\forall \mbf{x}\in D$. From corollary \ref{proj_con}, we see that $V_{\partial\Omega,\parallel}$ and $W_{\partial\Omega,\parallel}$ are Lipschitz continuous vector fields in $D$, then $\mcr{P}(V,W)$ is also Lipschitz continuous.
\end{proof}

\begin{corollary}[Solution in Boundary]
	\label{contain}
	Given a regular domain $\Omega\subset \mbb{R}^n$ and two bounded Lipschitz continuous vector fields $V,W: \mbb{R}^n\mapsto \mbb{R}^n$, let $U(\mbf{x})= V(\mbf{x})_{\partial\Omega,\parallel} + W(\mbf{x})_{\partial\Omega,\parallel}$, there is a unique classical solution $\mbf{x}(t,\mbf{x_0})$ to the ODE $\dot{\mbf{x}} = U(\mbf{x})$ for any $\mbf{x_0} \in \partial\Omega$. 
\end{corollary}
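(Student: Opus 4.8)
The plan is to observe that $U$ is a Lipschitz continuous vector field everywhere \emph{tangent} to $\partial\Omega$, and then to transport the Euclidean existence--uniqueness theorem (Theorem~\ref{pic}) onto the compact hypersurface $\partial\Omega$ via local charts. The tangency is what makes the construction both well-posed and geometrically correct, so I would establish it first: by Definition~\ref{deco}, $V(\mbf{x})_{\partial\Omega,\parallel} = V(\mbf{x})_{\mbf{n}(\mbf{x}),\perp}$ is the component of $V(\mbf{x})$ orthogonal to the unit normal $\mbf{n}(\mbf{x})$, and similarly for $W$, whence $\braket{U(\mbf{x}),\mbf{n}(\mbf{x})} = 0$ and $U(\mbf{x})\in T_{\mbf{x}}\partial\Omega$ for every $\mbf{x}\in\partial\Omega$. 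The Lipschitz continuity is then immediate: since $V$ and $W$ are bounded and Lipschitz, Lemma~\ref{proj_con} gives that $V_{\partial\Omega,\parallel}$ and $W_{\partial\Omega,\parallel}$ are each Lipschitz on $\partial\Omega$, and their sum $U$ inherits this property.

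To pass from $\mbb{R}^n$ to the curved domain I would use that $\partial\Omega$ is compact and $C^\infty$ diffeomorphic to $S^{n-1}$, so it admits a finite smooth atlas $\{\varphi_\alpha:\mcr{U}_\alpha\subset\mbb{R}^{n-1}\mapsto\partial\Omega\}$. Tangency guarantees that $U(\varphi_\alpha(\mbf{u}))$ lies in the range of the injective differential $D\varphi_\alpha$, so the pullback $\tilde{U}_\alpha(\mbf{u}) = (D\varphi_\alpha)^{-1}U(\varphi_\alpha(\mbf{u}))$ is well defined; because $\varphi_\alpha$ is smooth with nondegenerate differential on the compact closure, both $D\varphi_\alpha$ and its left inverse are bounded and Lipschitz, so Lemma~\ref{mat_con} shows $\tilde{U}_\alpha$ is Lipschitz. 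After a standard Lipschitz extension to $\mbb{R}^{n-1}$, Theorem~\ref{pic} yields a unique classical solution $\mbf{u}(t)$ in the chart, and its pushforward $\mbf{x}(t)=\varphi_\alpha(\mbf{u}(t))$ satisfies $\dot{\mbf{x}} = D\varphi_\alpha\,\tilde{U}_\alpha = U(\mbf{x})$, where tangency is used a second time so that $D\varphi_\alpha(D\varphi_\alpha)^{-1}$ fixes the tangent vector $U(\mbf{x})$; by construction this trajectory stays on $\partial\Omega$.

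Finally I would globalize and conclude uniqueness. Local uniqueness forces the chart solutions to agree on overlaps, patching into a single maximal trajectory $\mbf{x}(t,\mbf{x}_0)$ through any $\mbf{x}_0\in\partial\Omega$; and since $U$ is continuous on the compact set $\partial\Omega$ it is bounded, while the trajectory is confined to $\partial\Omega$, so there is no finite-time escape and the solution is complete. The main obstacle is the change of coordinates: one must confirm that $\tilde{U}_\alpha$ inherits Lipschitz continuity from $U$ despite the curvature of the boundary, which is exactly where the regularity hypothesis on $\Omega$ (Definition~\ref{regular}) and the linear-algebra estimate of Lemma~\ref{mat_con} become indispensable.
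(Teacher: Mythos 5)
Your proposal is correct and follows essentially the same route as the paper's own (much briefer) proof sketch: both regard $\partial\Omega$ as a compact $(n-1)$-dimensional manifold, observe that $U$ is tangent to it and Lipschitz continuous by Lemma~\ref{proj_con}, and then invoke existence--uniqueness theory for ODEs on the manifold. The only difference is one of detail: where the paper simply appeals to the manifold version of the theorem ``under some suitable connection,'' you carry out the chart-by-chart reduction to the Euclidean Picard--Lindel\"of theorem explicitly, which fills in exactly the steps the paper leaves implicit.
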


\begin{proof}
	We here provide a brief proof sketch. We begin by treating $\partial\Omega$ as a $n-1$ dimensional differentiable manifold (equipped with the pullback of the Euclidean metric by the natural embedding $\partial\Omega \mapsto \mbb{R}^n$), then $U: \partial\Omega \mapsto T\partial\Omega$ is clearly Lipschitz continuous on the manifold. This implies that there is a unique classical solution to the ODE $\dot{\mbf{x}} = U(\mbf{x})$ on the manifold $\partial\Omega$ (under some suitable connection). 
\end{proof}

\begin{proposition}[Containment in Boundary]
	\label{xtD}
	Given a regular domain $\Omega\subset \mbb{R}^n$ and two bounded Lipschitz continuous vector fields $V,W: \mbb{R}^n\mapsto \mbb{R}^n$, denote $\mbf{x}(t,\mbf{x_0})$ as the classical solution to the ODE $\dot{\mbf{x}} = U(\mbf{x})$ with $\mbf{x_0}\in D$, where $U$ is defined in corollary \ref{contain}. If we restrict the solution to $t \in [0,t_0)$, where $t_0 = \inf\{ t\geq 0 \,|\, \mbf{x}(t,\mbf{x_0}) \in \partial\Omega \}$, with $\partial D$ being the boundary of $D$ with respect to $\partial\Omega$, then $\mbf{x}(t,\mbf{x_0})$ is a classical solution to the ODE $\dot{\mbf{x}} = \mcr{P}(V,W)(\mbf{x})$.
\end{proposition}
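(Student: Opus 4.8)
The plan is to prove the proposition by showing that the two vector fields $U$ and $\mcr{P}_{\Omega}(V,W)$ literally coincide on the open set $D$, so that the boundary trajectory furnished by Corollary~\ref{contain} solves the patched ODE verbatim until it first meets the relative boundary $\partial D$. (I read the exit time as $t_0 = \inf\{ t \geq 0 \mid \mbf{x}(t,\mbf{x_0}) \in \partial D \}$; taken literally with $\partial\Omega$ in place of $\partial D$ the interval $[0,t_0)$ would be empty, since the curve of Corollary~\ref{contain} never leaves $\partial\Omega$.)

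First I would record that $U = V_{\partial\Omega,\parallel} + W_{\partial\Omega,\parallel}$ is everywhere tangent to $\partial\Omega$: by Definition~\ref{deco} each summand is the component of the respective field orthogonal to the unit normal $\mbf{n}(\mbf{x})$, hence lies in $T_{\mbf{x}}\partial\Omega$. Because $U(\mbf{x}) \in T_{\mbf{x}}\partial\Omega$ under the natural embedding $\partial\Omega \hookrightarrow \mbb{R}^n$, the intrinsic manifold solution $\mbf{x}(t,\mbf{x_0})$ produced by Corollary~\ref{contain} is genuinely a $C^1$ curve in $\mbb{R}^n$ that stays in $\partial\Omega$ and whose \emph{Euclidean} velocity equals $U(\mbf{x}(t))$ at every time. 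Making this reconciliation between the intrinsic (manifold) ODE and its extrinsic description in $\mbb{R}^n$ airtight is the one place that needs care; it is, however, a direct consequence of $U$ being tangent to the boundary, so the proposition ultimately reduces to an equality of two fields on $D$.

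Next I would confine the trajectory to $D$. Since the preceding Continuity-in-Boundary lemma shows $D = D_1 \cap D_2$ is open in $\partial\Omega$, and $\mbf{x}(\cdot,\mbf{x_0})$ is continuous with $\mbf{x}(0,\mbf{x_0}) = \mbf{x_0} \in D$, the set $\{ t \geq 0 \mid \mbf{x}(t,\mbf{x_0}) \in D \}$ is relatively open and its connected component containing $0$ is exactly $[0,t_0)$; thus $\mbf{x}(t,\mbf{x_0}) \in D$ for all $t \in [0,t_0)$.

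Finally comes the key identification. On $D$ we have $\psi_{\partial\Omega,V} \equiv 0$ and $\phi_{\partial\Omega,W} \equiv 0$ directly from Definition~\ref{D}, so the two perpendicular terms in the boundary case of the patching formula (Definition~\ref{patch}) drop out and $\mcr{P}_{\Omega}(V,W)(\mbf{x}) = V(\mbf{x})_{\partial\Omega,\parallel} + W(\mbf{x})_{\partial\Omega,\parallel} = U(\mbf{x})$ for every $\mbf{x} \in D$. Combining the three ingredients, for $t \in [0,t_0)$ we obtain $\dot{\mbf{x}}(t) = U(\mbf{x}(t)) = \mcr{P}_{\Omega}(V,W)(\mbf{x}(t))$, and since $\mbf{x}(\cdot,\mbf{x_0})$ is $C^1$ this exhibits it as a classical solution of $\dot{\mbf{x}} = \mcr{P}_{\Omega}(V,W)(\mbf{x})$ on $[0,t_0)$, as claimed. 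The only genuine obstacle is the bookkeeping of the first step; once the manifold solution is known to carry the correct ambient velocity, the statement is essentially the tautology $U = \mcr{P}_{\Omega}(V,W)$ on $D$.
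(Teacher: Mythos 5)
Your proof is correct and follows essentially the same route as the paper's: the key step in both is that $\psi_{\partial\Omega,V} = \phi_{\partial\Omega,W} = 0$ on $D$, so $\mcr{P}_{\Omega}(V,W) = U$ there, and the boundary trajectory of Corollary~\ref{contain} remains in $D$ on $[0,t_0)$, hence solves the patched ODE classically. Your reading of $t_0$ as the first hitting time of $\partial D$ matches the paper's intent (its proof relies on $\mbf{x_0} \notin \partial D$ and openness of $D$ to conclude $t_0 \neq 0$); you merely make explicit the containment argument and the intrinsic-versus-ambient velocity bookkeeping that the paper leaves implicit.
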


\begin{proof}
	From lemma \ref{D}, we see that $\mcr{P}_{\Omega}(V,W)(\mbf{x}) = U(\mbf{x})$, $\forall \mbf{x} \in D$. Since $\mbf{x}(t) \in D$, $\forall t \in [0,t_0)$, we have $\dot{\mbf{x}}(t) = U(\mbf{x}(t)) = \mcr{P}(V,W)(\mbf{x}(t))$, $\forall t \in [0, t_0)$. Furthermore, $\mbf{x_0} \notin \partial D$ as $D$ is open, so $t_0 \neq 0$.
\end{proof}

To conclude, we have shown that the patched field admits a classical solution in $D = D_1 \cap D_2$ at least up to some positive time $t_0$.

\subsection{Solution in the Domain}
\label{dom}

In the previous Section, we have shown how a solution segment can be constructed in the boundary $\partial\Omega$. In this subsection, we focus on the construction of a solution in the interior $\Omega$ and exterior $\ext(\Omega)$ to the ODE $\dot{\mbf{x}}(t) = \mcr{P}(V,W)(\mbf{x})$. WLOG, we can assume that the initial point is in the interior (see the remark of Definition \ref{bin}). \\

There are three possibilities for the evolution of the trajectory. First, the trajectory never leaves the interior $\Omega$. Second, the trajectory escapes to the exterior $\ext(\Omega)$, intersecting the boundary $\partial\Omega$ as required by the Jordan-Brouwer separation theorem~\cite{topology}. Finally, the trajectory hits the boundary $\partial\Omega$ and ``returns" back to the interior $\Omega$. \\

Clearly, in the first case, the trajectory is simply the classical solution to the ODE, $\dot{\mbf{x}} = V(\mbf{x})$, and in the last two non-trivial cases, the trajectory reaches the boundary $\partial\Omega$ at some point. We first begin by noting that if the trajectory were to reach the boundary, it must enter $\partial\Omega$ through its subset $\overline{D_1}$. 

\begin{proposition}
	\label{D1}
	Given a regular domain $\Omega\subset \mbb{R}^n$ and two bounded Lipschitz continuous vector fields $V,W: \mbb{R}^n\mapsto \mbb{R}^n$, let $\mbf{x}(t,\mbf{x_0})$ be the solution to the ODE $\dot{\mbf{x}} = V(\mbf{x})$ with initial value $\mbf{x_0} \in \Omega$. If the solution intersects the boundary $\partial\Omega$ at time $t_0 = \inf\{ t>0 \,|\, \mbf{x}(t)\in\partial\Omega \}$, then $\mbf{x}(t_0) \in \overline{D_1}$.
\end{proposition}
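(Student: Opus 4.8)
The plan is to work with a local defining function for the boundary and to translate the ``first hit from the interior'' hypothesis into a one-sided sign constraint on the normal motion of the trajectory, which I then match against the directional-curvature condition defining $D_1$. First I would fix the signed distance function $\rho$ to $\partial\Omega$ on a neighborhood of $\mbf{p} := \mbf{x}(t_0,\mbf{x_0})$, normalized so that $\rho<0$ on $\interior(\Omega)$, $\rho=0$ on $\partial\Omega$, $\rho>0$ on $\ext(\Omega)$, and $\nabla\rho(\mbf{y})=\mbf{n}(\mbf{y})$ for $\mbf{y}\in\partial\Omega$. Because $\Omega$ is a regular domain (Definition \ref{regular}), $\rho$ is $C^\infty$ near $\partial\Omega$, and since $V$ is the (smooth) interior field of a patch, $\mbf{x}(t)$ is a classical solution, smooth up to $t_0$. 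Setting $h(t)=\rho(\mbf{x}(t))$, the definition $t_0=\inf\{t>0 \mid \mbf{x}(t)\in\partial\Omega\}$ gives $h(t)<0$ for $t\in[0,t_0)$ and $h(t_0)=0$; hence $h$ attains a one-sided maximum at $t_0$ along $[0,t_0]$.

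The zeroth-order step is immediate: the one-sided maximum forces $\dot h(t_0)\ge 0$, and $\dot h(t_0)=\braket{\nabla\rho(\mbf{p}),V(\mbf{p})}=\braket{\mbf{n}(\mbf{p}),V(\mbf{p})}=\kappa^{(0)}_V(\mbf{p})$. Thus $\kappa^{(0)}_V(\mbf{p})\ge 0$. If this inequality is strict, then $\kappa'_V(\mbf{p})=\kappa^{(0)}_V(\mbf{p})>0$, so $\psi_{\partial\Omega,V}(\mbf{p})=0$ (Definition \ref{bin}) and $\mbf{p}\in D_1\subseteq\overline{D_1}$, finishing the transversal case. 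The remaining case is a tangential contact, $\kappa^{(0)}_V(\mbf{p})=0$, where $V(\mbf{p})$ is tangent to $\partial\Omega$, so $V(\mbf{p})=V_{\partial\Omega,\parallel}(\mbf{p})$ and the trajectory $\mbf{x}(t)$ shares its initial point and velocity with the boundary trajectory $\bs{\gamma}$ of Definition \ref{curve}. I would then expand $h$ in a one-sided Taylor series at $t_0$ and, using that the Hessian of the signed distance function restricted to the tangent space is the shape operator (second fundamental form) of $\partial\Omega$, identify the order of the first nonvanishing left derivative of $h$ with the lowest index $m_0(\mbf{p})$ of a nonvanishing directional curvature, together with the sign relation between $h^{(m_0)}(t_0)$ and $\kappa'_V(\mbf{p})$ dictated (up to parity) by $h<0$ on $[0,t_0)$. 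This yields $\kappa'_V(\mbf{p})\ge 0$. Finally, invoking that $\partial\Omega\setminus D_1=\{\kappa'_V\le 0\}$ is closed (as established in the argument of Lemma \ref{D}), the point $\mbf{p}$ cannot lie in the relative interior of $\partial\Omega\setminus D_1$, whence $\mbf{p}\in\overline{D_1}$.

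I expect the main obstacle to be exactly the inductive matching in the tangential case. The left derivatives $h^{(m)}(t_0)$, computed along the \emph{$V$-trajectory}, contain — besides the boundary-curvature contribution captured by $\kappa^{(m)}_V$, which is defined along the \emph{projected} trajectory $\bs{\gamma}$ — additional terms of the form $\braket{\mbf{n},\dot V}$ arising from the normal acceleration of the field. Showing that these extra terms neither lower the order nor flip the sign of the first nonzero coefficient, so that the analytic condition ``the $V$-trajectory reaches $\partial\Omega$ from inside'' coincides with the geometric condition $\kappa'_V\ge 0$, is the delicate part; it is also precisely what forces the conclusion to be stated with the closure $\overline{D_1}$ rather than $D_1$, since grazing (degenerate) contacts give the borderline value $\kappa'_V(\mbf{p})=0$.

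An alternative route I would keep in reserve, likely cleaner, is to apply an exit analysis to the time-reversed trajectory $s\mapsto\mbf{x}(t_0-s)$: this curve follows $-V$, starts at $\mbf{p}\in\partial\Omega$, and immediately enters $\interior(\Omega)$. The corresponding ``entering'' criterion for $-V$ is the mirror image of the exit criterion (cf.\ Proposition \ref{exit} and Theorem \ref{construct}), and translates directly into $\braket{\mbf{n}(\mbf{p}),V(\mbf{p})}\ge 0$ at zeroth order and $\kappa'_V(\mbf{p})\ge 0$ at the first nonvanishing order, again placing $\mbf{p}$ in $\overline{D_1}$. Either way, the crux is the same curvature bookkeeping, and the closure in the statement absorbs the degenerate tangential contacts that the strict inequality cannot.
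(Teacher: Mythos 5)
Your zeroth-order and transversal analysis is sound, and it is essentially all that the paper's own (sketch) proof contains: the first hit from the interior forces $\braket{\mbf{n}(\mbf{p}),V(\mbf{p})}=\kappa^{(0)}_V(\mbf{p})\ge 0$, and if this is strict then $\mbf{p}\in D_1$. The genuine gap is your tangential case, and it is not a matter of delicate bookkeeping: the intermediate claim you plan to prove, $\kappa'_V(\mbf{p})\ge 0$, is \emph{false}. Counterexample: let $\Omega$ be a regular domain coinciding near the origin with $\{(x,y)\,:\,y<x^2\}$, and let $V=(1,0)$ (constant, hence bounded and Lipschitz). The trajectory $\mbf{x}(t)=(t-\tfrac12,0)$ lies in $\Omega$ for $t<\tfrac12$ and first hits $\partial\Omega$ at $\mbf{p}=(0,0)$, where $\kappa^{(0)}_V(\mbf{p})=0$ while the projected trajectory runs along the parabola in the $+x$ direction and gives $\kappa^{(1)}_V(\mbf{p})=-2$, so $\kappa'_V(\mbf{p})<0$ and $\mbf{p}\notin D_1$. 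The proposition nonetheless holds at $\mbf{p}$, but only because the boundary points $(x,x^2)$ with $x<0$ — the points the trajectory grazed past on its way in — satisfy $\braket{V,\mbf{n}}=-2x/\sqrt{1+4x^2}>0$, hence lie in $D_1$ and accumulate at $\mbf{p}$. This is precisely the parity you wave off with ``up to parity'': $h<0$ on $[0,t_0)$ forces the first nonvanishing one-sided derivative $h^{(k)}(t_0)$ to have sign $(-1)^{k+1}$, so even-order (grazing) contacts correspond to $\kappa'_V(\mbf{p})<0$, not $\ge 0$. Your time-reversal alternative inherits the same error: since $\kappa^{(m)}_{-V}=(-1)^{m+1}\kappa^{(m)}_V$, the entering condition $\kappa'_{-V}(\mbf{p})\le 0$ reads $\kappa'_V(\mbf{p})\le 0$ whenever $m_0$ is odd.

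A second, independent flaw is your closing step: ``the complement is closed, hence $\mbf{p}$ cannot lie in its relative interior'' is a non sequitur — closed sets can have nonempty interior (a flat boundary patch with $V$ parallel to it has all curvatures identically zero on an open set, which therefore \emph{is} contained in the relative interior of $\partial\Omega\setminus D_1$; no contradiction arises there because no interior trajectory ever reaches such a patch). Both flaws point to the same lesson: membership in $\overline{D_1}$ cannot be decided by curvature data at $\mbf{p}$ alone; it must use the trajectory. A repair that works: argue by contradiction. If some relative neighborhood $U\ni\mbf{p}$ in $\partial\Omega$ misses $D_1$, then $\braket{V,\mbf{n}}\le 0$ everywhere on $U$ (any boundary point with $\braket{V,\mbf{n}}>0$ has $m_0=0$ and lies in $D_1$). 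Set $d(t)=-\rho(\mbf{x}(t))>0$; while $\mbf{x}(t)$ stays in a thin tube over $U$ one has $\nabla\rho(\mbf{x})=\mbf{n}(\pi(\mbf{x}))$ with $\pi$ the nearest-point projection, and Lipschitz continuity of $V$ gives $\dot d(t)=-\braket{\mbf{n}(\pi(\mbf{x})),V(\mbf{x})}\ge-\braket{\mbf{n}(\pi(\mbf{x})),V(\pi(\mbf{x}))}-Kd(t)\ge -Kd(t)$, whence by Gronwall $d(t)\ge d(t_1)e^{-K(t-t_1)}>0$, contradicting $d(t)\to 0$ as $t\to t_0^-$. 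This closes the tangential case using only the zeroth-order inequality on a neighborhood, and it explains why the statement is phrased with $\overline{D_1}$ rather than with a pointwise curvature condition.
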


\begin{proof}
	We provide here a sketch of the proof. Note that $\mbf{x} \in \overline{D_1}$ implies the condition $\braket{V(\mbf{x}), \mbf{n}(\mbf{x})} \geq 0$, required at the point of intersection. This condition can be shown by the fact that the trajectory $\mbf{x}(t)$ intersects the boundary $\partial\Omega$ from the interior, and $\mbf{x}(t)$ is continuously differentiable and the boundary $\partial\Omega$ is smooth.
\end{proof}

\begin{remark}
	Similarly, the solution $\mbf{x}(t,\mbf{x_0})$ to the ODE $\dot{\mbf{x}} = W(\mbf{x})$ with $\mbf{x_0}\in \ext(\Omega)$ must intersect the boundary $\partial\Omega$ in $\overline{D_2}$. \\
\end{remark}

At this point, we have shown how a trajectory initialized in the interior $\Omega$ reaches the boundary $\partial\Omega$. In order for the trajectory to be extended, we also have to consider how a solution exits the boundary. In order to guarantee that the trajectory does not violate the patched vector field in a non-zero measure set, we have to carefully specify the direction at which the trajectory exits the boundary to avoid ``collision" with the field. We first formally define the notion of existence for a Caratheodory solution in a manner that suits our purpose. 

\begin{definition}
	Given a regular domain $\Omega\subset \mbb{R}^n$ and a bounded Lipschitz continuous vector field $V: \mbb{R}^n\mapsto \mbb{R}^n$, the solution $\mbf{x}(t,\mbf{x_0})$ to the ODE $\dot{\mbf{x}} = V(\mbf{x})$ is said to {\bf exist} in $\Omega$ up to $t_0$ if $\,\exists t_0 >0$ such that $\mbf{x}(t) \in \Omega$ for $\forall t \in [0,t_0)$. 
\end{definition}

\begin{lemma}
	\label{exit_path}
	Given a regular domain $\Omega\subset \mbb{R}^n$ and a bounded Lipschitz continuous vector field $V: \mbb{R}^n\mapsto \mbb{R}^n$, and a solution to the ODE $\dot{\mbf{x}} = V(\mbf{x})$ initialized at $\mbf{x}_0\in \overline{\Omega}$. Then the following statements are true:
	\begin{itemize}
		\item If $\mbf{x}_0\in \Omega$, then a solution always exists in $\Omega$.
		\item If $\mbf{x}_0 \in \partial\Omega$, then a solution exists in $\Omega$ if $\kappa'_V(\mbf{x}_0) \leq 0$, and a solution does not exist if $\kappa'_V(\mbf{x}_0) > 0$.
	\end{itemize}
\end{lemma}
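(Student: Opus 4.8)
The plan is to reduce both claims to a one–dimensional analysis of the scalar function obtained by composing the trajectory with a defining function of the boundary. Since $V$ is Lipschitz continuous on all of $\mbb{R}^n$, Theorem \ref{pic} (Picard–Lindel\"of) provides a unique classical solution $\mbf{x}(t,\mbf{x}_0)$ of $\dot{\mbf{x}} = V(\mbf{x})$ through any initial point, so throughout I work with this genuine trajectory. The interior case $\mbf{x}_0 \in \Omega$ is immediate: $\Omega$ is open (Definition \ref{patch}) and $t \mapsto \mbf{x}(t,\mbf{x}_0)$ is continuous with $\mbf{x}(0,\mbf{x}_0)=\mbf{x}_0\in\Omega$, so $\mbf{x}(t,\mbf{x}_0)\in\Omega$ for all sufficiently small $t>0$, which is exactly what ``exists in $\Omega$'' demands.

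The substance lies in the boundary case $\mbf{x}_0 \in \partial\Omega$. Because $\partial\Omega$ is $C^{\infty}$ diffeomorphic to a sphere (Definition \ref{regular}), there is a smooth signed–distance function $\rho$ defined near $\partial\Omega$ with $\rho<0$ on $\interior(\Omega)$, $\rho=0$ on $\partial\Omega$, $\rho>0$ on $\ext(\Omega)$, and $\nabla\rho(\mbf{x})=\mbf{n}(\mbf{x})$ for $\mbf{x}\in\partial\Omega$. I would set $h(t)=\rho(\mbf{x}(t,\mbf{x}_0))$, so that $h(0)=0$ and a solution exists in $\Omega$ precisely when $h(t)<0$ for all small $t>0$, while $h(t)>0$ for small $t$ signals escape into $\ext(\Omega)$ and hence non-existence in $\Omega$. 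The sign of $h$ near $0$ is controlled by its first non-vanishing derivative, and differentiating along the flow gives $h^{(m)}(0)=(L_V^{m}\rho)(\mbf{x}_0)$, the iterated Lie derivative of $\rho$ along $V$; in particular $h'(0)=\langle\nabla\rho(\mbf{x}_0),V(\mbf{x}_0)\rangle=\langle\mbf{n}(\mbf{x}_0),V(\mbf{x}_0)\rangle=\kappa^{(0)}_V(\mbf{x}_0)$, reproducing the zeroth-order directional curvature of Definition \ref{curve}.

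The key step is to match the order and sign of this first non-vanishing derivative of $h$ with those of the directional curvature $\kappa'_V(\mbf{x}_0)$. I would show by induction that the vanishing of the successive curvatures $\kappa^{(0)}_V(\mbf{x}_0),\dots,\kappa^{(m_0-1)}_V(\mbf{x}_0)$ forces $h'(0)=\dots=h^{(m_0)}(0)=0$, while the first non-vanishing curvature $\kappa'_V(\mbf{x}_0)=\kappa^{(m_0)}_V(\mbf{x}_0)$ fixes the leading term, yielding $h(t)=c\,\kappa'_V(\mbf{x}_0)\,t^{\,m_0+1}+o(t^{\,m_0+1})$ for some constant $c>0$. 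Granting this, both sub-cases follow at once: $\kappa'_V(\mbf{x}_0)<0$ makes $h(t)<0$ for small $t>0$, so the trajectory enters $\Omega$ and the solution exists, whereas $\kappa'_V(\mbf{x}_0)>0$ makes $h(t)>0$ and the trajectory leaves into $\ext(\Omega)$.

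The main obstacle is precisely this identification, because $\kappa^{(m)}_V$ is defined through the projected trajectory $\bs{\gamma}$, which is constrained to stay in $\partial\Omega$ and is driven by $V_{\partial\Omega,\parallel}$, whereas $h$ is built from the true flow of $V$, which leaves the boundary. The reconciliation rests on the observation that $\kappa^{(0)}_V(\mbf{x}_0)=0$ means $V$ is tangent at $\mbf{x}_0$, so the true and projected trajectories agree to first order; iterating, the vanishing of the lower curvatures suppresses the normal deviation of the true flow from $\partial\Omega$ to the matching order, so that the leading contribution to $h$ is governed entirely by $\kappa'_V$. Carrying this out requires expressing $L_V^{m}\rho$ on $\partial\Omega$ through the shape operator of the boundary together with the tangential derivatives of $V$, and verifying that the resulting combination is exactly what the directional-curvature definition records. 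I would also note that this computation presupposes enough regularity of $V$ for the higher derivatives to exist (the ambient field being piecewise smooth in our application), and would dispose separately of the degenerate case $m_0=\infty$, in which the flow stays on $\partial\Omega$ and the solution is instead furnished by the boundary construction of Corollary \ref{contain}.
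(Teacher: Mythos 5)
Your proposal is correct and follows essentially the same route as the paper: the interior case is handled identically (openness of $\Omega$ plus continuity of the trajectory), and for the boundary case the paper likewise reduces everything to a leading-order expansion in time of the trajectory's signed normal displacement from the boundary — it writes the boundary locally as a graph, projects the true trajectory onto it, and expands $\braket{\mbf{x}(\delta t)-\mbf{x}'(\delta t),\,\mbf{n}(\mbf{x}_0)}$ in powers of $\delta t$, which is the same one-dimensional analysis you perform with $h(t)=\rho(\mbf{x}(t,\mbf{x}_0))$. If anything, your outline is more explicit than the paper's self-declared proof sketch about the key step of identifying the first non-vanishing derivative with $\kappa'_V(\mbf{x}_0)$ and about the degenerate case $m_0=\infty$, where you correctly defer to the boundary construction of Corollary \ref{contain}.
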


\begin{proof}
	The proof of the first statement is simple. We first let $\mbf{x}(t)$ be a classical solution to the ODE $\dot{\mbf{x}} = V(\mbf{x})$ initialized at $\mbf{x}_0 \in \Omega$. Note that since $\Omega$ is open, it is possible to find an open ball in $\Omega$, $B_{\delta}(\mbf{x}_0) \subset \Omega$, centered at $\mbf{x}_0$ with radius $\delta$. Since $\mbf{x}(t)$ is continuous with respect to $t$, it is possible to find a $t_{\epsilon}>0$ such that $\mbf{x}(t) \in B_{\delta}(\mbf{x}_0)$ for $\forall t \in (0,t_{\epsilon})$. Therefore, we see that $\mbf{x}(t)$ exists in $\Omega$. \\
	
	The proof of the second statement is more involved, and we here only provide a proof sketch. We first let $\mbf{x}(t)$ be a classical solution to $\dot{\mbf{x}} = V(\mbf{x})$ initialized at $\mbf{x}_0 \in \partial\Omega$. We can then express a small neighborhood of $\mbf{x}_0$ as a graph of some analytic function $f: \mbb{R}^{n-1} \mapsto \mbb{R}$. We can then ``project" the trajectory $\mbf{x}(t)$ onto the boundary, and denote its projection as $\mbf{x}'(t)$. We can time-evolve the trajectory and its projection simultaneously forward infinitesimally by $\delta t$. We can find the displacement between the solution trajectory and its projection along the direction of the normal vector, $\braket{ \mbf{x}(\delta t)-\mbf{x}'(\delta t), \mbf{n}(\mbf{x}_0) }$, and expand it in terms of $\delta t$ into a convergent series. If the series converge into a negative number, then the trajectory is able to ``enter" the domain $\Omega$, so a solution exists in the domain. On the other hand, if the series converge into a positive number, then the trajectory can only ``leave" the domain $\Omega$, so a solution does not exist.
\end{proof}

\begin{remark}
	The visual interpretation of this lemma is rather straightforward. It essentially states that a trajectory initialized at the boundary of a regular domain can enter into the interior only if the field points inward at that point. \\
\end{remark}

If the trajectory is initialized in $D$, then the trajectory clearly must remain in the boundary as discussed in the remark of Lemma \ref{D}. Therefore, the trajectory can exit the boundary only if $\mbf{x}_0 \notin D$, or $\big( \kappa'_V(\mbf{x}_0) \leq 0 \big) \,\lor\, \big( \kappa'_W(\mbf{x}_0) \geq 0 \big)$, in which case a solution exists in the interior and exterior respectively.

\begin{proposition}[Exiting the Boundary]
	\label{exit}
	Given a regular domain $\Omega\subset \mbb{R}^n$ and two bounded Lipschitz continuous vector fields $V,W: \mbb{R}^n\mapsto \mbb{R}^n$, we let the initial condition be $\mbf{x}_0 \in \partial\Omega/D$, then a solution to $\dot{\mbf{x}} = \mcr{P}(V,W)(\mbf{x})$ can be uniquely constructed as:
	\begin{itemize}
		\item The classical solution to $\dot{\mbf{x}} = V(\mbf{x})$ at least up to some positive time if $\kappa'_V(\mbf{x}_0) \leq 0$.
		\item The classical solution to $\dot{\mbf{x}} = W(\mbf{x})$ at least up to some positive time if $\kappa'_W(\mbf{x}_0) \geq 0$ and $\kappa'_V(\mbf{x_0}) > 0$.
	\end{itemize}
\end{proposition}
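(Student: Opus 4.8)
The plan is to reduce each bullet to Lemma~\ref{exit_path} (for motion into the interior) or to its interior/exterior mirror image (for motion into the exterior), and then to identify the resulting classical arc with a Caratheodory solution of the patched field via Definition~\ref{patch}. First I would translate the hypotheses into curvature signs. By Definitions~\ref{bin} and~\ref{D} we have $D_1=\{\mbf{x}\in\partial\Omega:\kappa'_V(\mbf{x})>0\}$ and $D_2=\{\mbf{x}\in\partial\Omega:\kappa'_W(\mbf{x})<0\}$, so that
\begin{equation*}
\partial\Omega\setminus D=\{\kappa'_V\le 0\}\cup\{\kappa'_W\ge 0\}.
\end{equation*}
Hence $\mbf{x}_0\in\partial\Omega\setminus D$ falls into exactly one of two mutually exclusive regimes: either $\kappa'_V(\mbf{x}_0)\le 0$ (first bullet), or $\kappa'_V(\mbf{x}_0)>0$, in which case membership in $\partial\Omega\setminus D$ forces $\kappa'_W(\mbf{x}_0)\ge 0$ (second bullet). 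This exclusivity is what makes the construction deterministic, and it is the clause $\kappa'_V>0$ in the second bullet that encodes the decision to try the interior first.

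For the first regime I would invoke the second statement of Lemma~\ref{exit_path} with the field $V$: since $\kappa'_V(\mbf{x}_0)\le 0$, the classical solution $\mbf{x}(t)$ of $\dot{\mbf{x}}=V(\mbf{x})$ exists in $\Omega$ up to some $t_0>0$, i.e.\ $\mbf{x}(t)\in\interior(\Omega)$ for $t\in[0,t_0)$. Because $\interior(\Omega)$ is open and $\mcr{P}_{\Omega}(V,W)\equiv V$ there by Definition~\ref{patch}, this arc satisfies $\dot{\mbf{x}}(t)=\mcr{P}_{\Omega}(V,W)(\mbf{x}(t))$ on $[0,t_0)$ and is thus a Caratheodory solution of the patched ODE. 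Its uniqueness as a classical interior arc is immediate from the Lipschitz continuity of $V$ and the Picard--Lindelöf theorem (Theorem~\ref{pic}); and no boundary-confined continuation competes with it, since by the boundary results (Corollary~\ref{contain}, Proposition~\ref{xtD}) confinement to $\partial\Omega$ is possible only from points of $D$, whereas $\mbf{x}_0\notin D$.

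For the second regime I would argue by the interior/exterior symmetry of $\psi$ and $\phi$ recorded in the remark of Definition~\ref{bin}. Here $\kappa'_V(\mbf{x}_0)>0$, so Lemma~\ref{exit_path} now \emph{forbids} a solution of $\dot{\mbf{x}}=V$ in $\Omega$: the trajectory cannot enter the interior. Applying the mirror of Lemma~\ref{exit_path} to $\ext(\Omega)$, whose outward normal is $-\mbf{n}$ and whose existence condition therefore reads $\kappa'_W(\mbf{x}_0)\ge 0$, the classical solution of $\dot{\mbf{x}}=W$ exists in $\ext(\Omega)$ up to some positive time; on $\ext(\Omega)$ the patched field equals $W$, so this arc is again a Caratheodory solution, unique by Theorem~\ref{pic} applied to $W$, with boundary confinement excluded exactly as before.

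The step I expect to be genuinely delicate is not the patching identification but the precise sense in which the constructed arc is unique. The subtlety lives at the ``overlap'' initial points where $\kappa'_V(\mbf{x}_0)\le 0$ and simultaneously $\kappa'_W(\mbf{x}_0)\ge 0$: there $V$ points into $\Omega$ while $W$ points into $\ext(\Omega)$, both one-sided classical arcs are Caratheodory solutions of $\mcr{P}_{\Omega}(V,W)$, and Filippov-type uniqueness fails at such a repelling interface. The proposition resolves this by the exclusive case split, which commits to the interior arc whenever $\kappa'_V\le 0$, and the real work is to justify that this choice is the one consistent with the forward-Euler dynamics it is meant to model. Carrying this out requires the machinery behind Lemma~\ref{exit_path}: one expands the signed normal displacement $\braket{\mbf{x}(\delta t)-\mbf{x}'(\delta t),\,\mbf{n}(\mbf{x}_0)}$ between the true arc and its boundary projection as a convergent series in $\delta t$ and reads off its leading sign from the lowest nonvanishing directional curvature $\kappa'_V$ of Definition~\ref{curve}. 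Verifying that this leading sign is decisive — in particular at degenerate points where $V_{\partial\Omega,\perp}(\mbf{x}_0)$ vanishes and a higher-order curvature takes over — is the heart of the argument; once granted, the two bullets concatenate into a single well-defined trajectory.
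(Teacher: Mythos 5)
Your proof is correct and takes essentially the same route as the paper's: the paper disposes of this proposition in one line, citing Definition~\ref{patch} and Lemma~\ref{exit_path}, which is precisely the reduction you carry out (the curvature-sign case split, Lemma~\ref{exit_path} for the interior arc, and its interior/exterior mirror for the exterior arc). Your closing observation that uniqueness at the ``overlap'' points where $\kappa'_V(\mbf{x}_0)\leq 0$ and $\kappa'_W(\mbf{x}_0)\geq 0$ holds only by the convention of committing to the interior arc is exactly the content of the paper's remark following the proposition, so it is a faithful (and more explicit) account of what the paper actually proves.
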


\begin{proof}
	The proof follows directly from Definition \ref{patch} and Lemma \ref{exit_path}.
\end{proof}

\begin{remark}
	To interpret this proposition visually, we imagine a point in the boundary such that either the interior field or the exterior field points away from the boundary. If the interior field points away from the boundary, then the trajectory should enter $\Omega$ from $\partial\Omega$, and the trajectory will ``follow" the field initially, as both the trajectory and the interior field point inward with respect to the domain $\Omega$. Similarly, if the exterior field points away, then the trajectory should enter $\ext(\Omega)$ instead. If both fields point away from the boundary, then the trajectory has a choice of entering either $\Omega$ or $\ext(\Omega)$, and we let the trajectory enter $\Omega$ as the convention.
\end{remark}

\subsection{Bridging the solutions}

Up to this point, we have shown how a Caratheodory solution can be constructed in an open domain $\Omega$ and its boundary $\partial\Omega$, and we are now ready to construct the {\it maximal} Caratheodory solution that is capable of traversing all three domains: $\Omega$, and $\partial\Omega$, and $\ext(\Omega)$. WLOG, we can assume that the initial value $\mbf{x}_0 \in \Omega$ to be in the interior, then there are three possibilities for the time evolution of the trajectory. Essentially, the maximal Caratheodory solution is constructed as the extension of a classical solution in one domain with another classical solution in another domain. A formal description of the maximal solution is presented as a constructive proof of Theorem \ref{construct} based on the formal definition of {\it extension} as follows.

\begin{definition}
	Given a set $X$ and two functions, $\mbf{x}_1: [0,t_1] \mapsto X$, $\mbf{x}_2: [0,t_2] \mapsto X$, we say that $\mbf{x}_2(t)$ is an {\bf extension} of $\mbf{x}_1(t)$ if $t_2 > t_1$, and $\mbf{x}_2(t) = \mbf{x}_1(t)$ for $\forall t \in [0,t_1]$. Alternatively, we can say that $\mbf{x}_1(t)$ is {\bf extended with} $\mbf{x}_2(t-t_1)$ at point $\mbf{x}_1(t_1)$.
\end{definition}

\begin{lemma}
	Let $\mbf{x}_1: [0,t_1] \mapsto \mbb{R}^n$ and $\mbf{x}_2: [0,t_2] \mapsto \mbb{R}^n$ be two Caratheodory solutions to the ODE $\dot{\mbf{x}} = F(\mbf{x})$ where $F: \mbb{R}^n \mapsto \mbb{R}^n$ is some vector field. If $\mbf{x}_2(0) = \mbf{x}_1(t_1)$, then we can extend $\mbf{x}_1(t)$ with $\mbf{x}_2(t)$, which results in another Caratheodory solution to the ODE.
\end{lemma}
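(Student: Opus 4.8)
The plan is to define the concatenated trajectory explicitly and verify that it satisfies the integral identity of Definition~\ref{def_cara}. Concretely, I would set
\begin{equation*}
\mbf{x}_3(t) =
\begin{cases}
\mbf{x}_1(t), & t \in [0,t_1], \\
\mbf{x}_2(t-t_1), & t \in (t_1, t_1+t_2],
\end{cases}
\end{equation*}
on the extended domain $[0,t_1+t_2]$. The hypothesis $\mbf{x}_2(0) = \mbf{x}_1(t_1)$ guarantees that the two branches agree at $t=t_1$, so $\mbf{x}_3$ is well-defined and continuous there; the value assigned at the single junction point is in any case irrelevant to the Lebesgue integrals that follow, since a point has measure zero. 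By construction $\mbf{x}_3$ restricts to $\mbf{x}_1$ on $[0,t_1]$, so it is an extension of $\mbf{x}_1$ in the sense of the preceding definition, and it remains to check that it is a Caratheodory solution.

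Next I would verify the Caratheodory identity with base point $0$. For $t \le t_1$ this is nothing but the identity already satisfied by $\mbf{x}_1$. For $t \in (t_1, t_1+t_2]$, I split the integral at $t_1$ using additivity of the Lebesgue integral over adjacent intervals, and apply the translation change of variables $u = s - t_1$ to the second piece:
\begin{equation*}
\mbf{x}_3(0) + \int_0^t F\big(\mbf{x}_3(s)\big)\,ds
= \Big( \mbf{x}_1(0) + \int_0^{t_1} F\big(\mbf{x}_1(s)\big)\,ds \Big)
+ \int_0^{t-t_1} F\big(\mbf{x}_2(u)\big)\,du.
\end{equation*}
The parenthesized term equals $\mbf{x}_1(t_1) = \mbf{x}_2(0)$ by the Caratheodory property of $\mbf{x}_1$, and the remaining integral equals $\mbf{x}_2(t-t_1) - \mbf{x}_2(0)$ by the Caratheodory property of $\mbf{x}_2$. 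Summing gives $\mbf{x}_2(t-t_1) = \mbf{x}_3(t)$, as required. The identity for an arbitrary base point $t_0 > 0$ then follows immediately by subtracting the $t_0$ and $t$ instances of the base-point-$0$ identity.

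The only points needing care are measure-theoretic rather than conceptual. First, I must know that $s \mapsto F(\mbf{x}_3(s))$ is Lebesgue integrable on all of $[0,t_1+t_2]$; this holds because $F\circ\mbf{x}_1$ and $F\circ\mbf{x}_2$ are each integrable on their domains (implicit in the hypothesis that $\mbf{x}_1,\mbf{x}_2$ are Caratheodory solutions, so their defining integrals exist), and an integrable function glued to another across a shared endpoint remains integrable on the union. Boundedness of the patched field $F$, guaranteed by the compact invariant set established in Section~\ref{sec_mem}, makes this integrability automatic. Second, I invoke only translation invariance and interval-additivity of the Lebesgue integral, both standard. I therefore expect no substantive obstacle: the content of the lemma is precisely that the Caratheodory integral equation is preserved under concatenation at a matching endpoint, and the verification reduces to the direct computation above once the junction is treated as a measure-zero set.
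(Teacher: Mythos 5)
Your proof is correct, and it takes a more explicit route than the paper's. The paper's own proof is a one-line appeal to the \emph{equivalent} characterization of Caratheodory solutions mentioned in the remark after Definition~\ref{def_cara}: concatenation can violate the ODE only at the single junction point $\mbf{x}_1(t_1)$, which is a set of measure zero, hence the glued trajectory still follows the field almost everywhere. You instead verify the primary, integral-form definition directly: splitting the Lebesgue integral at $t_1$, translating the second piece, and chaining the two integral identities through the matching condition $\mbf{x}_1(t_1)=\mbf{x}_2(0)$. The two arguments rest on the same observation (the junction is a single point), but yours has the advantage of working with the definition the paper actually adopts, so it does not require invoking the unproved equivalence between the integral formulation and the almost-everywhere differential formulation; the paper's version is shorter but leans on that equivalence implicitly. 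One small remark: your appeal to the boundedness of the patched field from Section~\ref{sec_mem} is unnecessary (and slightly out of place, since the lemma is stated for an arbitrary field $F$); the integrability of $F\circ\mbf{x}_1$ and $F\circ\mbf{x}_2$ is already implicit in the hypothesis that both are Caratheodory solutions, as you yourself note, and that alone suffices for the gluing argument.
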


\begin{proof}
	This is obvious if we note that the procedure of attaching the two solution segments will result in potentially violating the ODE only at a single point $\mbf{x_1}(t_1)$.
\end{proof}

\begin{theorem}[Construction of Maximal Caratheodory Solution]
	\label{construct}
	Given an open regular domain $\Omega\in\mbb{R}^n$ and two bounded Lipschitz continuous vector fields $V,W: \mbb{R}^n \mapsto \mbb{R}^n$, it is possible to construct a unique Caratheodory solution to the ODE $\dot{\mbf{x}} = \mcr{P}(V,W)(\mbf{x})$, where $\mcr{P}$ is the patching operation defined in Definition \ref{patch}.
\end{theorem}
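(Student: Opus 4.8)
The plan is to build the maximal trajectory from a fixed initial condition by an inductive gluing procedure, concatenating classical solution segments living in the three regions $\Omega$, $\ext(\Omega)$, and $\partial\Omega$, where each individual segment and each transition between regions is already controlled by a result of this Section. Without loss of generality I take $\mbf{x}_0\in\Omega$; the cases $\mbf{x}_0\in\ext(\Omega)$ and $\mbf{x}_0\in\partial\Omega$ are handled by the same argument after exchanging the roles of $V$ and $W$ (see the remark after Definition \ref{bin}) or by invoking Proposition \ref{exit} / Corollary \ref{contain} directly. Since $V$ is bounded and Lipschitz, Theorem \ref{pic} gives a unique classical solution of $\dot{\mbf{x}}=V(\mbf{x})$ through $\mbf{x}_0$. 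Either it stays in the open set $\Omega$ for all time, in which case it coincides with $\mcr{P}_\Omega(V,W)$ along itself and is already the desired global Caratheodory solution; or it meets $\partial\Omega$ at a first time $t_0>0$, where by Proposition \ref{D1} the contact point lies in $\overline{D_1}$.

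At a boundary contact point $\mbf{p}$ I read off the continuation from a dichotomy. If $\mbf{p}\in D=D_1\cap D_2$ (Definition \ref{D}), Corollary \ref{contain} furnishes a unique classical solution of the sliding field $U=V_{\partial\Omega,\parallel}+W_{\partial\Omega,\parallel}$ on the manifold $\partial\Omega$, and Proposition \ref{xtD} guarantees that this boundary arc solves $\dot{\mbf{x}}=\mcr{P}_\Omega(V,W)(\mbf{x})$ at least up to the first time it reaches $\partial D$. If instead $\mbf{p}\in\partial\Omega/D$, Proposition \ref{exit} names the unique continuation: it re-enters $\Omega$ along $V$ when $\kappa'_V(\mbf{p})\le 0$, and enters $\ext(\Omega)$ along $W$ when $\kappa'_W(\mbf{p})\ge 0$ and $\kappa'_V(\mbf{p})>0$, with the stated convention resolving the doubly-outward case. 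When a sliding arc exits $D$ it does so into $\partial\Omega/D$, where Proposition \ref{exit} again applies, so the construction never stalls. Each segment has strictly positive duration, so gluing successive segments with the concatenation (bridging) lemma stated above produces, by induction, a Caratheodory solution defined on a strictly increasing sequence of times.

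For uniqueness I would argue that at every juncture the continuation is forced: Theorem \ref{pic} makes the interior and exterior arcs unique, Corollary \ref{contain} makes the sliding arc unique, and Proposition \ref{exit} designates a unique exit branch. The only residual freedom is the doubly-outward convention, which is exercised on a set of times of measure zero and hence does not alter the equivalence class of a Caratheodory solution (cf. the remark after Definition \ref{def_cara}). Maximality follows by taking the supremum of the gluing construction; boundedness of $V$ and $W$ yields an a priori linear growth bound that precludes finite-time escape to infinity, so the maximal solution extends as far as the transition structure permits.

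The delicate point, on which I would spend the most care, is excluding a Zeno/chattering scenario in which the transition times accumulate at some finite $t_\ast$, which would obstruct both the inductive extension and the claim of a single well-defined trajectory. I would control this using the $C^\infty$ regularity of $\partial\Omega$ together with the analyticity invoked in the proof sketch of Lemma \ref{exit_path}: along any interior, exterior, or sliding arc the scalar contact functions whose zeros define the transitions ($\kappa'_V$, $\kappa'_W$, and the signed normal speed $\braket{V,\mbf{n}}$ or $\braket{W,\mbf{n}}$) are real-analytic in $t$, so their zeros are isolated and cannot accumulate on a compact time interval unless the arc lies entirely within a transition set. That degenerate situation is exactly the boundary (sliding) dynamics already covered by Corollary \ref{contain}, so it is absorbed rather than excluded. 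Establishing this isolation of transitions is what simultaneously makes the inductive gluing well-defined and secures both existence and uniqueness of the maximal Caratheodory solution.
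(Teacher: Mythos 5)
Your construction is essentially the paper's: start in $\Omega$, follow the classical $V$-flow (Theorem \ref{pic}) until the first boundary contact, which Proposition \ref{D1} places in $\overline{D_1}$; branch on the signs of $\kappa'_V$ and $\kappa'_W$ exactly as in Proposition \ref{exit}, use Corollary \ref{contain} and Proposition \ref{xtD} for the sliding arc in $D$, and iterate, gluing segments so that the ODE is violated only at the junction points. The case analysis, the lemmas invoked, and the uniqueness argument (each continuation is forced, modulo the doubly-outward convention) all coincide with the paper's own proof.

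The one place you genuinely depart from the paper is the Zeno/chattering discussion, and that is also where your argument has a step that would fail. You claim that the contact functions ($\kappa'_V$, $\kappa'_W$, $\braket{V,\mbf{n}}$) are real-analytic in $t$ along any arc, so their zeros are isolated and transition times cannot accumulate. But $V$ and $W$ are only assumed Lipschitz continuous; trajectories are then merely $C^1$ in $t$, and these scalar functions along a trajectory are merely Lipschitz (the higher directional curvatures $\kappa^{(m)}$ need not even exist without further smoothness of the fields). The analyticity invoked in the proof sketch of Lemma \ref{exit_path} refers to expressing the smooth boundary $\partial\Omega$ locally as a graph, not to time-analyticity of the fields along trajectories, so it cannot be borrowed for this purpose. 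To be fair, the paper does no better here: its proof simply asserts that ``the solution can only be extended countably many times,'' which likewise does not exclude junction times accumulating at a finite $t_\ast$; one would then need a limit-and-restart argument (uniform continuity from boundedness of the fields gives the limit point, and countably many accumulation points still form a measure-zero set, preserving the Caratheodory property). So your proposal matches the published proof step for step, and at the one point where you went beyond it you correctly identified the gap but closed it with an unjustified analyticity claim.
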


\begin{proof}
	WLOG, we assume that the initial value of the ODE is $\mbf{x}_0 \in \Omega$. Let $\mbf{x}_1(t)$ be the classical solution to $\dot{\mbf{x}} = V(\mbf{x})$ existing up to $t_1$ in $\Omega$. If $t_1 = +\infty$, then $\mbf{x}_1(t)$ is trivially a Caratheodory solution as well. We then consider the case where $t_1$ is finite, meaning that the trajectory enters the boundary $\partial\Omega$ at some point $\mbf{p} = \mbf{x}_1(t_1) \in \overline{D_1}$ (see Proposition \ref{D1}). Note that $\kappa'_V(\mbf{p})\geq 0$, so we are left with the following cases:
	
	\begin{itemize}
		\item If $\kappa'_W(\mbf{p}) \geq 0$, then we extend $\mbf{x}_1(t)$ with the maximal classical solution to the ODE $\dot{\mbf{x}} = W(\mbf{x})$ in $\ext(\Omega)$ initialized at $\mbf{p}$. The extended solution violates the ODE only at $\mbf{p}$. 
		
		\item If $\kappa'_W(\mbf{p}) < 0$ and $\kappa'_V(\mbf{p}) = 0$, then we extend $\mbf{x}_1(t)$ with the maximal classical solution to the ODE $\dot{\mbf{x}} = V(\mbf{x})$ in $\Omega$ initialized at $\mbf{p}$. The extended solution violates the ODE only at $\mbf{p}$.
		
		\item If $\kappa'_W(\mbf{p}) < 0$ and $\kappa'_V(\mbf{p}) > 0$, then $\mbf{p} \in D$, and let $\mbf{x}_2(t)$ be the maximal classical solution to the ODE $\dot{\mbf{x}} = U(\mbf{x})$ existing in $D$ up to $t_2$, where $U(\mbf{x})$ is defined in corollary \ref{contain}. If $t_2=+\infty$, then we are done; if $t_2$ is finite, then we let $\mbf{q} = \mbf{x}_2(t_2) \in \partial D$, implying that $\kappa'_V(\mbf{q}) = 0$ or $\kappa'_W(\mbf{q}) = 0$, reducing to the previous two cases. The extended solution violates the ODE only at $\mbf{p}$ and $\mbf{q}$.
	\end{itemize}
	
	We iterate this procedure every time the trajectory enters the boundary, with the treatment of the entrance from the exterior $\ext(\Omega)$ mirroring the entrance from interior $\Omega$. This gives us the maximal Caratheodory solution if we take $t \to \infty$. It is clear that the solution can only be extended countably many times, and each segment is classical in nature (see Proposition \ref{exit}) meaning that the ODE is only violated at countably many points, so the maximal solution is in fact Caratheodory by Definition \ref{def_cara}. 
	
\end{proof}

\begin{remark}
	Visually, for a trajectory initialized in $\Omega$ that enters the boundary $\partial\Omega$, we have three scenarios. In the first scenario, the trajectory is guided by the interior field in a way such that it barely ``scrapes" the boundary and returns back to the interior. In the second scenario, the trajectory ``crosses" the boundary and continues its path into the exterior if the exterior field at the intersection points outward. Finally, if the trajectory enters into the boundary at a point where the interior and exterior fields both point inward, then the trajectory ``tunnels" in the boundary to avoid the two fields and continues to do so until it reaches a point where one of the two fields begins pointing outward, then the trajectory begins to follow that field. If both fields never point outward, then the trajectory remains in the boundary forever. \\
\end{remark}

This concludes the section which establishes the necessary mathematical formalism for discussing the memcomputing dynamics which are guided autonomously by such patchy vector fields (see Eq. (\ref{mem})).

\section{Memory Dynamics}
\label{sec_mem}

To find an assignment $\mbf{v}$ that minimizes the constraint in Eq.~(\ref{eng}), we can time-evolve the voltages autonomously with the following ODE (which is an equivalent way of writing Eqs.~(\ref{eq:C})-(\ref{eq:rigid})):
\begin{equation}
\label{mem}
\begin{split}
&\dot{v}_i = 
\sum_{j=1}^m \Big\{ \frac{1}{2} x_{l,j}x_{s,j} q_{ij} \min_{\{i'\neq i \,|\, q_{i'j} \neq 0 \}}(1-q_{i'j}v_{i'}) 
+ (1+\zeta x_{l,j})(1-x_{s,j})\delta_{i \sigma_j}q_{ij}C_j(\mbf{v}) \Big\},
\\
&\dot{x}_{s,j} = \beta \big( x_{s,j} + \epsilon \big) \big( C_j(\mbf{v}) - \gamma \big), \\
&\dot{x}_{l,j} = \alpha \big(C_j(\mbf{v}) - \delta).
\end{split}
\end{equation}
From now on, we shall refer to this particular ODE as {\it memory dynamics}, where $\mbf{v} \in [-1,+1]^n$ are voltages corresponding to the Boolean variables of the original 3-SAT problem with $n$ variables and $m$ clauses. Furthermore, we refer to $\mbf{x_s} \in [0,1]^m$ as {\it short-term memory} and $\mbf{x_l} \in [1, x_{max}]^m$ as {\it long-term memory}, where $x_{max}>1$ is some upper bound to the slow variable dynamics\footnote{Note that the bounds on the dynamic variables $\{ \mbf{v}, \mbf{x_f}, \mbf{x_s} \}$ are not enforced ``naturally'' by the memory dynamics. They are enforced through the introduction of auxiliary fields in the exterior of the bounded domain. See Section \ref{compact} for a formal discussion of the procedure of doing so.}. The parameters $\{ \alpha, \beta, \gamma, \delta, \zeta, \epsilon \}$ are positive constants empirically tuned to provide the regularity and convergence of the dynamics with a sufficiently fast time scale (see Sec.~\ref{Numerics}).
We will use the non-subscript symbol, $\mbf{x} = \{ \mbf{v}, \mbf{x_s}, \mbf{x_l} \} \in \mbb{R}^{n+2m}$, to denote the collection of all dynamic variables, allowing us to write the ODE as
\begin{equation*}
\dot{\mbf{x}} = F(\mbf{x}),
\end{equation*}
where $F$ is some flow field corresponding to the RHS of Eqs.~(\ref{mem}). \\

For the sake of having a more compact expression for the ODE equations, it is convenient for us to borrow the notation of Eq. (\ref{eq:G}) and denote
\begin{equation}
\label{G}
G_{ij}(\mbf{v}) = \frac{1}{2} q_{ij} \min_{\{i'\neq i \,|\, q_{i'j} \neq 0 \}}(1-q_{i'j}v_{i'})
\end{equation}
as the {\it gradient-like} term, as it approximately follows  the directional gradient of the energy of the $j$-th clause along the direction of $v_i$ (see Eq.~(\ref{c_ham})). Note that the actual directional gradient is similar to Eq. (\ref{G}) with the only exception being the $\min$ operation replaced with the product $\prod$. The magnitude of the gradient-like term for a voltage in the $j$-th constraint is related to the value of the other two voltages in the constraint.
Similarly, we borrow the notation of Eq. (\ref{eq:rigid}) and denote
\begin{equation}
\label{R}
R_{ij}(\mbf{v}) = \delta_{i\sigma_j}q_{ij}C_j(\mbf{v})
\end{equation}
as the {\it rigidity} term. Its magnitude is equivalent to the clause constraint $C_j$ defined in Eq.~(\ref{cj}) if $v_i$ is the voltage that defines $C_j$, and zero otherwise.\\

We can then succinctly write the voltage dynamics as
\begin{equation}
\label{mem_red}
\dot{\mbf{v}} =  \mbf{G}(\mbf{v})(\mbf{x_s} \ast \mbf{x_l})  + \mbf{R}(\mbf{v}) \big( (1 + \zeta \mbf{x_l}) \ast (1- \mbf{x_s}) \big),
\end{equation}
where $\mbf{G}$ and $\mbf{R}$ are treated as $n \times m$ matrices dependent on $\mbf{v}$, the operator $\ast$ denotes element-wise multiplication, and $\mbf{x_s}$ and $\mbf{x_l}$ are treated as column vectors for the sake of matrix operation. In this form, we can clearly see that the gradient-like and rigidity dynamics are weighted clause-wise by the memory variables. The presence of dynamic memory is a central feature of our dynamics. \\



For certain analyses of dynamical properties, it is sufficient and more convenient for us to focus on the analytic properties of the following simplified dynamics
\begin{equation}
\label{mem_sim}
\begin{split}
\dot{\mbf{v}} &= \mbf{G}(\mbf{v})\mbf{x_l}, \\
\mbf{\dot{x}_l} &= \alpha \mbf{C}(\mbf{v}),
\end{split}
\end{equation}
All the dynamical properties derived in this work under the assumption of this simplified dynamics can be easily generalized to the full dynamics if we assume sufficiently general forms for $\mbf{G}$ and $\mbf{C}$ (see Section \ref{reduce} and \ref{no_period}).

\subsection{Discontinuous Hyperplanes}
\label{plane}

We first make the important observation that the gradient-like term $\mbf{G}$ is not differentiable everywhere and the rigidity term $\mbf{R}$ is not Lipschitz continuous. These irregular points form hyperplanes generated by the minimum operation in the voltage space. In this section, we construct the hyperplanes which contain all the points of discontinuity for the rigidity term. These hyperplanes are generated by the binary values of $\delta_{i\sigma_j}$ (which contains implicitly a minimum operation), and they form $n-1$ dimensional hyperplanes in the voltage space $\mbb{R}^n$. A similar construction also applies for the gradient-like term\footnote{Finding these hyperplanes for the gradient-like term is not strictly necessary, as the gradient-like term is already Lipschitz continuous. The hyperplanes will only contain points of non-differentiability, which will not affect the existence and uniqueness of the dynamical trajectory (see Section \ref{cara}).}. 

\begin{proposition}[Hyperplanes]
	There exists a union of countably many ($n-1$)-dimensional hyperplanes in $\mbb{R}^{n}$ such that it contains all the points where the field $F$ is discontinuous.
\end{proposition}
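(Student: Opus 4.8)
The plan is to trace every discontinuity of $F$ back to the single source of non-smoothness in the dynamics, namely the index $\sigma_j$ that selects the active literal in each clause, and to show that $\sigma_j$ can only jump across a finite family of affine hyperplanes. First I would decompose $F$ componentwise (continuity in the product space $\mbb{R}^{n+2m}$ is equivalent to continuity of each component, analogously to Lemma \ref{component}). The memory components $\dot{x}_{s,j} = \beta(x_{s,j}+\epsilon)(C_j(\mbf{v})-\gamma)$ and $\dot{x}_{l,j} = \alpha(C_j(\mbf{v})-\delta)$ are continuous, because $C_j$ is a minimum of affine functions and hence Lipschitz; and the gradient-like contribution $G_{ij}(\mbf{v})$ to $\dot v_i$ is, up to the constant factor $\tfrac12 q_{ij}$, also a minimum of affine functions, so it too is continuous. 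Consequently the \emph{only} possibly discontinuous term in $F$ is the rigidity term $R_{ij}(\mbf{v}) = \delta_{i\sigma_j}\,q_{ij}\,C_j(\mbf{v})$, and since $q_{ij}$ is constant and $C_j$ is continuous, the discontinuities of $F$ coincide with those of the Kronecker factor $\delta_{i\sigma_j}$. Since $\sigma_j$, and hence the entire discontinuity structure, depends only on the voltage coordinates $\mbf{v}$ and not on the memory variables, it suffices to exhibit hyperplanes in the voltage space $\mbb{R}^n$.

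The key step is to locate where $\delta_{i\sigma_j}$ fails to be continuous. I would argue that $\sigma_j$ is locally constant, and therefore $\delta_{i\sigma_j}$ continuous, at any voltage $\mbf{v}_0$ at which the minimizing literal in clause $j$ is \emph{unique}: if the index $i^\ast = \sigma_j(\mbf{v}_0)$ strictly achieves the minimum, there is a gap $\Delta>0$ separating $(1-q_{i^\ast j}v_{i^\ast})$ from the remaining affine terms, and since each term $\mbf{v}\mapsto 1-q_{ij}v_i$ is continuous, the strict ordering persists in a neighborhood, so $i^\ast$ remains the unique minimizer and $\sigma_j$ does not change. Hence $\delta_{i\sigma_j}$ can only be discontinuous where the minimum in clause $j$ is attained by two distinct indices $i\neq i'$ (both with $q_{ij},q_{i'j}\neq 0$), that is, where $1-q_{ij}v_i = 1-q_{i'j}v_{i'}$. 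This is the affine equation $q_{ij}v_i - q_{i'j}v_{i'}=0$, whose coefficient vector has two nonzero entries equal to $\pm 1$, so it defines a genuine $(n-1)$-dimensional hyperplane $H_{j,i,i'}\subset \mbb{R}^n$.

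To finish, I would collect these hyperplanes over all clauses and all competing pairs. There are $m$ clauses, and within each clause at most $\binom{3}{2}=3$ pairs of literals, so the family $\{H_{j,i,i'}\}$ is finite, hence certainly countable, and by the previous step its union contains every point at which $F$ is discontinuous. If one also wishes to account for the discontinuities introduced by enforcing the bounds $v_i\in[-1,+1]$ through the patching construction of Section \ref{compact}, these lie on the coordinate hyperplanes $\{v_i=\pm 1\}$, which one simply adjoins to the collection, leaving a finite, thus countable, family of $(n-1)$-dimensional hyperplanes. The main obstacle, and the only place demanding care, is the local-constancy argument for $\sigma_j$: one must handle the tie-breaking convention of $\argmin$ cleanly so that the conclusion holds regardless of how ties are resolved. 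This is why the proposition claims only \emph{containment} of the discontinuity set in the hyperplanes rather than equality, which is exactly what the later existence-and-uniqueness construction of the Caratheodory solution requires.
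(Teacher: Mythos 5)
Your proof is correct and follows essentially the same route as the paper: discontinuities of $F$ can come only from the $\argmin$ selector $\sigma_j$ in the rigidity term, which can jump only where two literals of the same clause tie, and since $q_{ij}=\pm 1$ these tie sets are exactly the hyperplanes $v_{i}=\pm v_{i'}$. The differences are cosmetic — you index at most $3m$ hyperplanes by clause and competing pair and spell out the local constancy of $\sigma_j$ and the continuity of the remaining terms (which the paper asserts implicitly), whereas the paper simply takes the full family of $2\binom{n}{2}$ pairwise sign hyperplanes — both being finite, hence countable, collections containing the discontinuity set.
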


\begin{proof}
	To lessen the burden of notation, we let $N = [[1,n]]$ and $M = [[1,m]]$. We first recall from Eq.~(\ref{sigma}) that
	\begin{equation*}
	\sigma_j = \argmin_{ \{ i \,|\, q_{ij}\neq 0 \} }( 1 - q_{ij}v_i ),
	\end{equation*}
	which implies that the field can only be discontinuous at a point where some $j$ can be chosen such that the $\argmin$ operation is degenerate, which is equivalent to the following condition
	\begin{equation}
	\label{qv}
	\exists j \in M, \, \exists i_1,i_2 \in \{ i\in N \,|\, q_{ij}\neq 0 \}, \, q_{i_1j}v_{i_1} = q_{i_2j}v_{i_2}.
	\end{equation}
	We denote the set of all points $\mbf{x}$ that satisfies the above condition as $\partial\Omega$. \\
	
	For any two distinct indices of the Boolean variables, or $\forall i_1,i_2 \in N$ where $i_1 \neq i_2$, we can define a {\it positive} hyperplane $H_P$ and a {\it negative} hyperplane $H_N$ as follows
	\begin{equation*}
	\begin{split}
	H_{P,i_1i_2} &= \{ \mbf{x}\in\mbb{R}^{n+2m} \,|\, v_{i_1} = v_{i_2} \}, \\
	H_{N,i_1i_2} &= \{ \mbf{x}\in\mbb{R}^{n+2m} \,|\, v_{i_1} = -v_{i_2} \}.
	\end{split}
	\end{equation*}
	Note that both are $(n-1)$-dimensional. If we recall that $q_{ij} = \pm 1$ for all nonzero elements of the polarity matrix $Q$, then it can be shown that any voltage assignment $\mbf{v}$ that satisfies condition (\ref{qv}) must be in one of such hyperplanes. Therefore, the union of all such hyperplanes must contain $\partial\Omega$, or
	\begin{equation*}
	\partial\Omega \subseteq \bigcup_{i_1 \neq i_2} \big( H_{P,i_1i_2} \cup H_{N,i_1i_2} \big).
	\end{equation*}
	Note that there are ${n \choose 2}$ positive and negative hyperplanes each, so there are $2{n \choose 2}$ hyperplanes in total, which is a countable number. This proves the proposition. 
\end{proof}

\begin{remark}
	An immediate consequence of this proposition is that the rigidity term is only discontinuous at a measure zero subset of the phase space, as all the hyperplanes of discontinuities are of measure zero, and there are only countably many of them. Therefore, the rigidity term is smooth almost everywhere. Note that these hyperplanes also contain the points at which the gradient-like term is non-differentiable, meaning that the gradient-like term is also smooth almost everywhere. \\
\end{remark}

As the field is continuous almost everywhere, it clearly admits a Caratheodory solution for any initial value, if the fields are patched appropriately at the hyperplanes according to the procedure in Definition \ref{patch}. Note that the phase space of the dynamics is an $n+2m$-dimensional hypercube (see Section \ref{compact}), which is partitioned into disjoint subsets by the hyperplanes. A caveat here is that the domains are {\it almost} regular as the intersections of the hyperplanes generate regions of non-smoothness. However, note that these intersections have zero measure relative to the hyperplanes, so it is unlikely for a trajectory to encounter them. For the sake of analytic completeness, even if we assume that a trajectory were to encounter an intersection of planes, this does not invalidate our method of constructing a Caratheodory solution, as there is still a unique projection of vector fields on these intersecting regions. As for using the directional curvature as the exit condition, the zeroth order directional curvature can be defined as $\pm \infty$ accordingly at these regions, and the exit protocol as given in proposition \ref{exit} remains unchanged.

\subsection{Compact Positive Invariant Set}
\label{compact}

To respect the Boolean structure of the original 3-SAT problem, the dynamics as given in Eqs.~(\ref{mem}) must be bounded explicitly. First of all, we choose the bound the voltages explicitly in a compact set, which is $[-1,+1]^n$ for our work\footnote{Note the choice of $-1$ and $+1$ is to make an intuitive connection to the false and true state. As $\mbf{\dot{v}}$ and $\mbf{C}(\mbf{v})$ are linear with respect to $\mbf{v}$, the lower and upper bound for the voltages can be chosen arbitrarily (centered at $0$), and the original dynamics can be recovered via an appropriate rescaling of the memory variables and constant parameters.}. Furthermore, the short-term memory $\mbf{x_s}$ has to be bounded in $[0,1]^m$, as a way to completely stop either the gradient-like or rigidity contribution to the dynamics for each clause. Finally, the long-term memory $\mbf{x_l}$ has to be bounded in $[1,x_{\max}]^m$ in practice\footnote{This is mostly for the sake of a practical implementation of our solver. Note that if the upper bound is absent, then a digital implementation would require infinite precision and an analogue implementation would require infinite energy to guarantee accurate simulation, neither of which is possible.}. In fact, for the analysis in the following sections, we will regularly assume that the bound $x_{\max}$ on the long-term memory is absent, meaning that $\mbf{x_l} \in [1,+\infty)^m$, in an effort to increase the generality of certain propositions. The bounds on the short-term memory is crucial, however, and will always be assumed present. \\

Putting everything together, this means that the dynamics must be fully contained within the region $O = [-1, 1]^{n}\times [0, 1]^m \times [1,x_{\max}]^m$, which is a compact set in $\mbb{R}^{n+2m}$. To put this formally, we have to show that $O$ is an invariant set, and any trajectory with initial value in $O$ must remain in $O$ forever. To do so, we consider a general ODE with the flow field defined in a regular domain $\Omega$, such that a Caratheodory solution exists in the domain. In other words, we have $\dot{\mbf{x}} = F(\mbf{x})$, where $F: \mbb{R}^{n+2m} \mapsto \mbb{R}^{n+2m}$ is some sufficiently regular vector field in $\Omega \subset \mbb{R}^{n+2m}$. Suppose we now wish to modify the vector field in such a way that, for any initial value $\mbf{x}(0) \in \Omega$, the trajectory is contained entirely within the closure of that domain $\overline{\Omega}$, or $\mbf{x}(t) \in \overline{\Omega}$ for $\forall t \geq 0$. This has to be done carefully such that the original flow field in $\Omega$ remains the same. We do so by patching the original vector field with a ``bounding" vector field in $\ext(\Omega)$ as follows.

\begin{lemma}[Bounding Field]
	\label{bound}
	Let $\Omega \subset \mbb{R}^n$ be a smooth open domain, and let $F: \mbb{R}^n \mapsto \mbb{R}^n$ be some bounded vector field that admits a Caratheodory solution in $\Omega$. Let $G: \mbb{R}^n \mapsto \mbb{R}^n$ be some Lipschitz continuous vector field satisfying
	\begin{equation*}
	\forall \mbf{x}\in\partial\Omega, \quad G(\mbf{x}) = -M \mbf{n}(\mbf{x}),
	\end{equation*}
	where $M>0$ can be any positive constant, and $\mbf{n}(\mbf{x})$ is the outward pointing unit normal vector of the boundary $\partial\Omega$ at $\mbf{x}$. Then any construction of the Caratheodory solution (see theorem \ref{construct}) to the ODE, $\dot{\mbf{x}} = \mcr{P}(F,G)(\mbf{x})$, with initial value $\mbf{x_0} \in \Omega$, has the property that $\mbf{x}(t) \in \overline{\Omega}$ for $\forall t \geq 0$. 
\end{lemma}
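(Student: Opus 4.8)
The plan is to exploit the single structural fact that $G$ points strictly into $\Omega$ everywhere on the boundary, and to show that this alone closes off the one branch of the maximal-solution construction (Theorem~\ref{construct}) that would push the trajectory into $\ext(\Omega)$. Concretely, I would argue that the exterior field $G$ can never be ``selected'' at any boundary crossing, so that the Caratheodory solution is assembled entirely from interior segments and boundary segments, both of which lie in $\overline{\Omega}$.

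First I would evaluate the lowest-order directional curvature of $G$ on $\PO$. Using the identification $\kappa^{(0)}_G(\mbf{x}) = \braket{G(\mbf{x}), \mbf{n}(\mbf{x})}$ established in the proof of the continuity-in-boundary lemma, the hypothesis $G(\mbf{x}) = -M\,\mbf{n}(\mbf{x})$ gives
\begin{equation*}
\kappa^{(0)}_G(\mbf{x}) = \braket{-M\,\mbf{n}(\mbf{x}),\, \mbf{n}(\mbf{x})} = -M\,||\mbf{n}(\mbf{x})||^2 = -M < 0
\end{equation*}
for every $\mbf{x}\in\PO$. Since this zeroth-order term is already nonzero, the lowest-order nonvanishing curvature is $\kappa'_G(\mbf{x}) = -M < 0$ for all $\mbf{x}\in\PO$. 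The immediate consequence I would record is that the condition $\kappa'_G \geq 0$ appearing in Proposition~\ref{exit} (equivalently, in the definition of $\phi_{\PO,G}$, Definition~\ref{bin}) is nowhere satisfied, so $D_2 = \PO$ and hence $D = D_1$.

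Then I would run through the construction of Theorem~\ref{construct} with $V = F$ and $W = G$, starting from $\mbf{x}_0\in\Omega$. The trajectory follows $F$ until it first meets $\PO$ at some point $\mbf{p}\in\overline{D_1}$ (Proposition~\ref{D1}), so $\kappa'_F(\mbf{p})\geq 0$. I would then test the three branches of that theorem against $\kappa'_G(\mbf{p}) = -M < 0$: the ``exit into $\ext(\Omega)$'' branch requires $\kappa'_G(\mbf{p})\geq 0$ and is therefore excluded; the remaining possibilities either re-extend the solution inside $\Omega$ (when $\kappa'_F(\mbf{p}) = 0$) or evolve it along $D\subset\PO$ under the projected boundary field (when $\kappa'_F(\mbf{p}) > 0$). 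Both keep the trajectory in $\overline{\Omega}$. A boundary segment, in turn, can only leave $D$ at a point $\mbf{q}\in\partial D$ where $\kappa'_F(\mbf{q}) = 0$ or $\kappa'_G(\mbf{q}) = 0$; since $\kappa'_G \equiv -M \neq 0$, necessarily $\kappa'_F(\mbf{q}) = 0$, which again re-enters $\Omega$ rather than $\ext(\Omega)$. Iterating over the countably many extension steps (the exterior-initialized case never arising, as the trajectory is never in $\ext(\Omega)$) yields $\mbf{x}(t)\in\Omega\cup\PO = \overline{\Omega}$ for all $t\geq 0$.

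The main obstacle I anticipate is not the interior crossings but the careful bookkeeping of the boundary-dwelling phase: I must confirm that while the solution travels in $D$ it cannot leak into $\ext(\Omega)$, and that every admissible way of leaving $D$ returns it to the interior. This is precisely where $\kappa'_G < 0$ on all of $\PO$ (equivalently $D_2 = \PO$) does the work, since it guarantees the exterior field is ``collision-safe'' along the entire boundary and that the only admissible exit from the boundary is inward. A secondary point I would state cleanly is that the behavior of $G$ off $\PO$ is irrelevant: because the trajectory never reaches $\ext(\Omega)$, only the boundary values of $G$ enter the argument, which is exactly what the hypothesis controls.
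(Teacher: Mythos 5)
Your proposal is correct and follows essentially the same route as the paper's proof: both reduce the claim to the single computation $\kappa'_G(\mbf{x}) = \braket{G(\mbf{x}), \mbf{n}(\mbf{x})} = -M < 0$ for all $\mbf{x} \in \partial\Omega$ and then invoke the construction of Theorem~\ref{construct} to conclude that the exterior-exit branch can never be selected. The only difference is that you spell out the case analysis of that construction explicitly (including the boundary-dwelling phase and the exit points of $D$), which the paper compresses into the remark that the trajectory ``will never be able to exit into $\ext(\Omega)$.''
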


\begin{proof}
	Note that based on the construction given in theorem \ref{construct}, it is sufficient to show that $\kappa'_{G}(\mbf{x}) < 0$ for $\forall \mbf{x} \in \partial\Omega$, as the trajectory will never be able to exit into the region $\ext(\Omega)$. By construction, we have $\braket{G(\mbf{x}), \mbf{n}(\mbf{x})} = -M$ for $\forall \mbf{x}\in\partial\Omega$, so it follows directly from definition \ref{curve} that $\kappa'_G(\mbf{x}) = -M < 0$.
\end{proof}

\begin{remark}
	By adding the ``bounding" vector field $G$, we are essentially ``projecting" any ``stray" fields onto the boundary $\partial\Omega$, such that whenever a trajectory enters the boundary, it will continue to ``flow" inside the boundary (see corollary \ref{contain}) and never escape $\overline{\Omega}$. An important point to note is that the dynamics do {\bf not} stop after reaching $\partial\Omega$. 
\end{remark}

\begin{corollary}[Invariant Hypercube]
	\label{mem_bound}
	Let $O = [-1, 1]^{n}\times [0, 1]^m \times [1,x_{\max}]^m \subset \mbb{R}^{n+2m}$, and let $F: \mbb{R}^n \mapsto \mbb{R}^n$ be some bounded vector field that admits a Caratheodory solution in $O$. For $\forall i\in [[1,n+2m]]$, we let $k_i$ be the lower bound of the $i$-th quotient space of $O$, and let $K_i$ be the upper bound. Then we define the left and right hyperplanes, $L_i$ and $R_i$, as follows
	\begin{equation*}
	L_i = \{ \mbf{x}\in O \,|\, x_i = k_i \}
	\qquad
	R_i = \{ \mbf{x}\in O \,|\, x_i = K_i \}.
	\end{equation*}
	Let $G: \mbb{R}^n \mapsto \mbb{R}^n$ be some Lipschitz continuous vector field such that $\forall i \in [[1,n+2m]]$:
	\begin{equation*}
	\begin{split}
	&\forall \mbf{x}\in L_i, \quad G(\mbf{x}) = M \hat{\mbf{e}}_i \\
	&\forall \mbf{x}\in R_i, \quad G(\mbf{x}) = -M \hat{\mbf{e}}_i,
	\end{split}
	\end{equation*}
	where $M>0$ can be any positive constant, and $\hat{\mbf{e}}_i$ is the $i$-th component of the standard basis. Then $O$ is a positive invariant set under the ODE, $\dot{\mbf{x}} = \mcr{P}(F,G)(\mbf{x})$. Furthermore, the superposed flow field on the hyperplanes is given by
	\begin{equation*}
	\begin{split}
	&\forall \mbf{x} \in L_i, \quad \mcr{P}(F,G)(\mbf{x}) = F(\mbf{x}) - F_i(\mbf{x})\big( 1-H(x_i) \big)\hat{\mbf{e}}_i, \\
	&\forall \mbf{x} \in R_i, \quad \mcr{P}(F,G)(\mbf{x}) = F(\mbf{x}) - F_i(\mbf{x}) H(x_i) \hat{\mbf{e}}_i,
	\end{split}
	\end{equation*}
	where $H$ denotes the Heaviside step function.
\end{corollary}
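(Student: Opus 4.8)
The plan is to reduce the claim to the single-boundary bounding result, Lemma~\ref{bound}, applied separately to each face of the hypercube, and then to read off the superposed field directly from the patching rule of Definition~\ref{patch}. The one genuine wrinkle is that $O$ is an axis-aligned box and hence \emph{not} a regular domain in the sense of Definition~\ref{regular}: its boundary fails to be $C^\infty$ diffeomorphic to a sphere at the edges and corners where two or more faces meet. I would therefore organize the argument around the $2(n+2m)$ flat faces $\{L_i, R_i\}$, on whose relative interiors the boundary is smooth with a constant normal, and treat their mutual intersections as a measure-zero exceptional set. Since $F$ is itself a patchy field, existence of a Caratheodory solution in the interior is taken from the hypothesis, so only the boundary behavior needs attention.

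First, for the invariance claim, I would note that $L_i$ and $R_i$ are level sets of the coordinate $x_i$, so the outward unit normal is $\mbf{n} = -\hat{\mbf{e}}_i$ on $L_i$ and $\mbf{n} = +\hat{\mbf{e}}_i$ on $R_i$. The hypothesis on $G$ then reads exactly $G(\mbf{x}) = M\hat{\mbf{e}}_i = -M\,\mbf{n}(\mbf{x})$ on $L_i$ and $G(\mbf{x}) = -M\hat{\mbf{e}}_i = -M\,\mbf{n}(\mbf{x})$ on $R_i$; that is, $G = -M\mbf{n}$ on every face, which is precisely the hypothesis of Lemma~\ref{bound}. Invoking the proof of that lemma face-by-face (equivalently, observing $\kappa'_G = \braket{G,\mbf{n}} = -M < 0$ on each face via Definition~\ref{curve}) shows the trajectory can never cross the relative interior of any face. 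Because the faces are axis-aligned, this is cleanest stated coordinate-wise: the bounding field forbids $x_i$ from dropping below $k_i$ or rising above $K_i$, and since this holds for every $i$ simultaneously, any Caratheodory trajectory built as in Theorem~\ref{construct} with $\mbf{x}_0 \in O$ stays in $\overline{O}$ for all $t \geq 0$, i.e.\ $O$ is positively invariant.

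For the explicit form of the superposed field I would specialize Definition~\ref{patch} with $V = F$ (interior) and $W = G$ (exterior) on a single face, say $L_i$. Because $G$ is purely normal there, its tangential part vanishes, $G_{\partial O,\parallel} = 0$ and $G_{\partial O,\perp} = G$; moreover $\kappa'_G < 0$ forces $\phi_{\partial O,G} = 0$, so the exterior contribution drops out entirely. For $F$, Definition~\ref{deco} gives $F_{\partial O,\perp} = \braket{F,\mbf{n}}\mbf{n} = F_i\hat{\mbf{e}}_i$ and $F_{\partial O,\parallel} = F - F_i\hat{\mbf{e}}_i$, while the switch $\psi_{\partial O,F}$ equals $1$ exactly when $F$ points into $O$, i.e.\ when $\braket{F,\mbf{n}} = -F_i \leq 0$, so $\psi_{\partial O,F} = H(F_i)$ up to the value on the measure-zero set $F_i=0$. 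Assembling these yields $\mcr{P}(F,G) = (F - F_i\hat{\mbf{e}}_i) + H(F_i)\,F_i\hat{\mbf{e}}_i = F - \big(1 - H(F_i)\big)F_i\hat{\mbf{e}}_i$ on $L_i$, and the mirrored computation on $R_i$ (now $\mbf{n} = +\hat{\mbf{e}}_i$, so $\psi_{\partial O,F}=1$ iff $F_i\leq 0$) gives $F - H(F_i)\,F_i\hat{\mbf{e}}_i$. These are exactly the stated formulae, with the Heaviside argument being the normal component $F_i$, which records whether $F$ would otherwise carry the trajectory out of $O$.

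The main obstacle I anticipate is precisely the non-regularity of the box at its lower-dimensional strata. Lemma~\ref{bound} and Theorem~\ref{construct} are established for smooth domains, so I must justify that gluing the single-face bounding arguments produces a globally invariant set and, in particular, that the edges and corners cause no leakage. I would argue this as in the remark following the Hyperplanes proposition in Section~\ref{plane}: the pairwise and higher intersections of the faces have measure zero relative to $\partial O$, a trajectory generically never meets them, and even when it does there is still a well-defined projection of the field obtained by superposing the active bounding components, with the zeroth-order directional curvature set to $-\infty$ along each simultaneously-active normal, so the exit test of Proposition~\ref{exit} continues to forbid escape. Upgrading this corner argument from the measure-zero heuristic to a fully rigorous superposition statement is the step I expect to require the most care.
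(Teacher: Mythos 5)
Your proposal is correct and follows the same route the paper intends: the corollary is stated there without a separate proof, as an immediate face-by-face application of Lemma~\ref{bound} combined with the patching rule of Definition~\ref{patch}, which is exactly what you carry out, including the observation that $G=-M\mbf{n}$ on every face and that $\phi_{\partial O,G}=0$ kills the exterior contribution. Your two refinements---treating the edges and corners by the same measure-zero/projection convention the paper adopts for intersecting hyperplanes in Section~\ref{plane}, and reading the Heaviside argument as the normal field component $F_i(\mbf{x})$ rather than the literal coordinate $x_i$ (the only reading under which the stated formula is non-constant on a face and invariance survives, e.g.\ on the lower face of the long-term-memory block where $x_i=1$)---are consistent with the paper's conventions and fill in details its terse statement leaves implicit.
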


\begin{remark}
	To visualize the bounding flow field, one can imagine a hypercube $O$ where the internal field remains unchanged, and the exterior field is ``pressing against" the faces of the cube to ensure that any trajectory initialized inside the cube remains inside. The flow field on the ``faces" of the cube is simply the projection of the field onto the plane if the field were to point outward. This bounding procedure effectively mimics the numerical technique that we use to bound the dynamics, where any outward pointing component of the flow field on the boundary is simply ignored. \\
\end{remark}

From now on, when we refer to {\it memory dynamics}, we are referring to the system of ODEs given in Eqs.~(\ref{mem}), with the bounds of the dynamics enforced by the exterior field $G$ as constructed in corollary \ref{mem_bound}. To lessen the burden of notation, we shall refer to the patched flow field of the memory dynamics, $\mcr{P}(F,G)$, simply as $F$. The set $O = [-1, 1]^{n}\times [0, 1]^m \times [1,x_{\max}]^m$ is then a positive invariant set of the memory dynamics. 

\subsection{Gauge Invariance of Dynamics}
\label{gauge_inv}

In this Section, we primarily focus on formalizing the notion of gauge invariance for the dynamics governed by  Eqs.~(\ref{mem}). To do so, it is convenient to first reformulate the flow field as a group action.

\begin{definition}[Time Mapping]
	Given a vector field $V: \mbb{R}^n \mapsto \mbb{R}^n$ such that there is a unique positive solution $\mbf{x}(\mbf{x_0},t)$ to the ODE $\dot{\mbf{x}} = F(\mbf{x})$ for any initial value $\mbf{x_0}\in \mbb{R}^n$, we define a mapping $T_s: \mbb{R}^n \mapsto \mbb{R}^n$ for $\forall s \geq 0$ as follows
	\begin{equation*}
	T_s(\mbf{x}_0) = \mbf{x}(s,\mbf{x}_0).
	\end{equation*}
\end{definition}

\begin{remark}
	It should first be noted that $T_s$ is a well defined operator $\forall s \geq 0$, as the solution to the ODE with any initial value is unique. It can also be easily checked that the operators $T_s$ form a semigroup with the identity element being $T_0$. In fact, we have
	\begin{equation*}
	T_{s_2}T_{s_1}(\mbf{x}_0) = T_{s_2}\big(\mbf{x}(s_1,\mbf{x}_0)\big) = \mbf{x}\Big( s_2, \big(\mbf{x}(s_1,\mbf{x}_0)\big) \Big) = \mbf{x}(s_1+s_2, \mbf{x}_0) = T_{s_1+s_2}(\mbf{x}_0).
	\end{equation*}
	The reason why the operators form only a semigroup is because it does not necessarily have a group inverse, as we do not require the negative solution to the ODE to exist or be unique. \\
\end{remark}

For our memory dynamics, an important property of $T_s$ is that it is invariant under gauge conjugation. This is important as it essentially allows us to simplify the analysis of the memory dynamics by assuming that a solution vector is $\mbf{v}_0 = \mbf{+1}$.

\begin{proposition}[Gauge Invariance of Dynamics]
	Given a polarity matrix $Q$ corresponding to a satisfiable 3-SAT instance with some solution vector $\mbf{v}_0$, and an operator $T_s$ corresponding to the memory flow field $F$, we have the following
	\begin{equation*}
	T_s = G_{\mbf{v}_0} \circ T_s \circ G_{\mbf{v}_0}^{-1},
	\end{equation*}
	where $G_{\mbf{v}_0}$ is the gauge mapping operation in Definition \ref{def_gauge}.
\end{proposition}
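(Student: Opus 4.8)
The plan is to reduce the statement to an \emph{equivariance} of the memory flow field under the gauge map and then lift that equivariance from the field to its time-$s$ map by uniqueness of trajectories. Write $g \equiv G_{\mbf{v}_0}$ for the induced action on phase space, $g(\mbf{v},\mbf{x_s},\mbf{x_l}) = (\mbf{v}\ast\mbf{v}_0,\mbf{x_s},\mbf{x_l})$, while the polarity matrix transforms as $q_{ij}\mapsto q_{ij}' = v_{0,i}q_{ij}$ (Definition \ref{def_gauge}). Since $v_{0,i}=\pm 1$, we have $\mbf{v}_0\ast\mbf{v}_0=\mbf{1}$, so $g$ is a linear involution, $g^{-1}=g$, and in particular $dg=g$. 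The first thing to record is therefore that $g$ is a Euclidean isometry, diagonal with $\pm 1$ entries on the voltage block and the identity on the memory block.

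The core step is a termwise computation showing $F_{Q'}\circ g = g\circ F_{Q}$, where $F_Q$ is the memory flow field for $Q$ and $F_{Q'}$ that for the gauged matrix $Q'=G_{\mbf{v}_0}(Q)$. Setting $\mbf{w}=g(\mbf{v})$, i.e. $w_i=v_{0,i}v_i$, the basic identity is $q_{ij}'w_i = v_{0,i}^2 q_{ij}v_i = q_{ij}v_i$, so every factor $(1-q_{ij}'w_i)$ coincides with $(1-q_{ij}v_i)$. Consequently the nonzero-index set $\{i\,\cond\,q_{ij}'\neq 0\}$, the minimizing index $\sigma_j$, the clause constraint $C_j$ (this is exactly Lemma \ref{cj_inv}), the minimum appearing in the gradient-like term $G_{ij}$, and the rigidity term $R_{ij}$ are all invariant under the joint map. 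Feeding this into Eqs.~(\ref{mem}), the two memory components of $F_{Q'}$ evaluated at $\mbf{w}$ reproduce those of $F_Q$ at $\mbf{v}$ verbatim, while in the voltage component every surviving $q_{ij}'$ contributes exactly one factor $v_{0,i}$; hence $\dot{w}_i = v_{0,i}\dot{v}_i$, which is precisely the $i$-th component of $g$ applied to $F_Q(\mbf{v})$. This establishes $F_{Q'}\circ g = g\circ F_Q$.

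Finally, equivariance of the field promotes to equivariance of the flow. If $\mbf{x}(t)$ is the unique Caratheodory solution of $\dot{\mbf{x}}=F_Q(\mbf{x})$ through $\mbf{x}_0$, then applying $g$ and using the field relation shows $g(\mbf{x}(t))$ solves $\dot{\mbf{y}}=F_{Q'}(\mbf{y})$ through $g(\mbf{x}_0)$; uniqueness (Theorem \ref{pic} on each Lipschitz piece, Theorem \ref{construct} for the patched solution) then gives $T_s^{Q'}\circ g = g\circ T_s^{Q}$, which is the claimed $T_s=G_{\mbf{v}_0}\circ T_s\circ G_{\mbf{v}_0}^{-1}$ once the dependence of $F$ on the polarity matrix is tracked by $G_{\mbf{v}_0}$. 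The main obstacle, and the place deserving care, is precisely this last lifting: $F$ is only piecewise Lipschitz, discontinuous on the hyperplanes $H_{P,i_1i_2}$, $H_{N,i_1i_2}$ and along the enforced faces of the invariant box $O$ of Corollary \ref{mem_bound}. Because $g$ is a signed-coordinate isometry it permutes these hyperplanes among themselves (possibly interchanging $H_P$ and $H_N$) and maps $O$ onto itself, so it carries the entire patching data of Theorem \ref{construct} for $Q$ --- the normal vectors, the directional curvatures $\kappa'$, the exit conditions, and the bounding field --- onto the corresponding data for $Q'$. This is what guarantees the conjugation holds for the full maximal Caratheodory trajectory and not merely for its classical segments.
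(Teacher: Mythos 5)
Your proposal is correct and follows essentially the same route as the paper's own proof sketch: establish gauge equivariance of the flow field $F$ term by term (the invariance of $C_j$ and the $\min$ structure via $q'_{ij}w_i = q_{ij}v_i$, and the $v_{0,i}$ prefactors appearing on both sides of the voltage equations), check that the discontinuity hyperplanes and the bounding hypercube are preserved by the gauge map, and then lift this to the time map $T_s$. Your final lifting step via uniqueness of the patched Caratheodory solution is a more careful justification of what the paper compresses into the remark that the operators $T_s$ form a semigroup generated by $F$, so that invariance of the generator suffices.
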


\begin{remark}
	We here provide a proof sketch of this proposition. We first begin by noting that the operators $T_s$ form a semigroup, so it is sufficient to show that the infinitesimal group generator $F$ is invariant under gauge conjugation, or
	\begin{equation*}
	F = G_{\mbf{v}_0} \circ F \circ G_{\mbf{v}_0}^{-1}.
	\end{equation*}
	This is equivalent to showing that transforming both the LHS and RHS of the equations in (\ref{mem}) does not violate the equalities, which can be easily shown by recalling that $C_j(\mbf{v})$ is gauge invariant (see lemma \ref{cj_inv}). Then we see that a prefactor of $v_{0,i}$ appears in both the LHS and RHS of the voltage equations. Furthermore, we can also easily show that the discontinuous hyperplanes (see Section \ref{plane}) and the boundaries of the hypercube containing the dynamics (see corollary \ref{mem_bound}) are also invariant under the gauge mapping. Therefore, the operator $T_s$ must be invariant under gauge conjugation for $\forall s \geq 0$.
\end{remark}

\subsection{Correspondence between Fixed Points and Solutions}
\label{sec_cor}

The fixed points of the dynamics must correspond to a solution to the original 3-SAT instance (if the instance is satisfiable). Otherwise, the correctness of the memory dynamics as a SAT solver cannot be guaranteed, as it is possible for the dynamics to terminate at a point corresponding to a non-solution. We dedicate this section to the correspondence between the fixed points of the dynamics and the solutions of a 3-SAT instance. Before we continue this discussion, we first note that it is possible to solve a 3-SAT Boolean formula with a partial assignment of the Boolean variables, which corresponds to hyperfaces on the voltage hypercube (see Section \ref{backbone}). In other words, it is possible for the dynamics to solve a 3-SAT instance by converging to a hyperface instead of any particular solution vector, and the solution can be extracted by choosing an arbitrary vertex of that hyperface.

\begin{definition}[Solution Plane]
	\label{par_sol}
	Consider a 3-SAT problem defined by an $n \times m$ polarity matrix. If we can find a non-empty subset of indices, $I \in [[1,n]]$, such that there are exactly $2^{|I|}$ distinct solutions coinciding to the assignment of the Boolean variables indexed $[[1,n]]/I$, then the problem is said to be {\bf partially solvable}, and we refer to $\mcr{I} = [[1,n]]/I$ as the {\bf isolated index set} of the solutions. $\mcr{I}$ is said to be {\bf proper} if it has no proper subset that is also an isolated index set. \\
	
	Let $\mbf{v'}$ be a solution vector, and $\mcr{I}$ be a proper index set. We define the {\it solution plane} to be
	\begin{equation*}
	H(\mbf{v'}, \mcr{I}) = \{ \mbf{v}\in[-1,+1]^n \lcond \forall i\in\mcr{I}, \, v_i = v'_i \}.
	\end{equation*}
	The vertices (which are solution vectors) are said to be {\bf connected} by this plane. Any solution vector that is not connected by a solution plane is said to be {\bf isolated}.
\end{definition}

\begin{remark}
	Note that for a given solution vector $\mbf{v'}$, its proper index set is not necessarily unique, and depends on the polarity matrix of the 3-SAT Boolean formula. The solution plane is, however, unique given a solution vector and its proper index set. \\
\end{remark}


\begin{lemma}
	\label{c_plane}
	Let $\mbf{v'}$ be a solution vector for which a proper index set $\mcr{I}$ exists. Then
	\begin{equation*}
	\forall \mbf{v} \in H(\mbf{v'}, \mcr{I}), \qquad
	\mbf{C}(\mbf{v}) = \mbf{0}.
	\end{equation*}
	On the other hand, let $\mbf{v}$ be a vector such that $\mbf{C}(\mbf{v}) = 0$, and $\mbf{v'} = \sign(\mbf{v})$ be the corresponding solution vector. If a proper index set $\mcr{I}$ exists for the solution vector, then
	\begin{equation*}
	\exists \mcr{I}, \qquad
	\mbf{v} \in H(\mbf{v'}, \mcr{I}).
	\end{equation*}
\end{lemma}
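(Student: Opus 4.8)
The plan is to prove the two implications separately, using throughout the elementary fact that, since $v_i\in[-1,+1]$ and $q_{ij}=\pm 1$, each term $1-q_{ij}v_i$ is nonnegative; hence $C_j(\mbf{v})=0$ if and only if some index $i$ with $q_{ij}\neq 0$ attains $q_{ij}v_i=1$, i.e. $v_i=q_{ij}\in\{-1,+1\}$. I will call such an index a \emph{witness} for clause $j$ and record that any witness is necessarily an \emph{extremal} voltage, $|v_i|=1$. The other ingredient is a reformulation of Definition \ref{par_sol}: if $\mcr{I}$ is an isolated index set with free complement $I=[[1,n]]/\mcr{I}$, then since there are at most $2^{|I|}$ vectors agreeing with $\mbf{v'}$ on $\mcr{I}$ and exactly $2^{|I|}$ of them are solutions, \emph{every} sign pattern over $I$ (with $\mcr{I}$ fixed to $\mbf{v'}$) is a solution.

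For the forward implication I would first show that each clause has a witness whose index lies in the fixed set $\mcr{I}$. Suppose not: then for some clause $j$, every fixed variable $i\in\mcr{I}$ appearing in $j$ has $q_{ij}v'_i=-1$. Choosing the free variables of $I$ occurring in $j$ so as to falsify their literals, and the remaining free variables arbitrarily, yields one of the allowed sign patterns over $I$, hence a solution, in which every literal of clause $j$ is false -- a contradiction. Therefore each clause $j$ has a witness $i^{*}\in\mcr{I}$ with $q_{i^{*}j}v'_{i^{*}}=1$. Since any $\mbf{v}\in H(\mbf{v'},\mcr{I})$ satisfies $v_{i^{*}}=v'_{i^{*}}$ (because $i^{*}\in\mcr{I}$), the same witness gives $q_{i^{*}j}v_{i^{*}}=1$, so $C_j(\mbf{v})=0$; letting $j$ range over all clauses gives $\mbf{C}(\mbf{v})=\mbf{0}$.

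For the reverse implication, let $J=\{\,i : |v_i|<1\,\}$ be the set of non-extremal voltages of $\mbf{v}$, and put $\mbf{v'}=\sign(\mbf{v})$. Because $\mbf{C}(\mbf{v})=\mbf{0}$, every clause has a witness, and witnesses are extremal, so no index of $J$ is a witness of any clause. I would then argue that $J$ is \emph{freely flippable} in $\mbf{v'}$: flipping any subset $S\subseteq J$ leaves each clause satisfied, since the witness $i_j$ of clause $j$ is extremal, hence $i_j\notin J$ and is untouched. Consequently all $2^{|J|}$ sign patterns over $J$ are solutions. The collection of freely flippable sets is downward closed, so I can greedily enlarge $J$ to a maximal freely flippable set $I_0$ and set $\mcr{I}=[[1,n]]/I_0$. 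Maximality yields exactly $2^{|I_0|}$ solutions agreeing with $\mbf{v'}$ on $\mcr{I}$, so $\mcr{I}$ is an isolated index set; and any strictly smaller isolated index set would, by the reformulation above and downward closure, produce a strictly larger freely flippable set, contradicting maximality, so $\mcr{I}$ is proper. Finally $J\subseteq I_0$ forces every $i\in\mcr{I}$ to satisfy $|v_i|=1$ and $v_i=\sign(v_i)=v'_i$, whence $\mbf{v}\in H(\mbf{v'},\mcr{I})$.

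The degenerate case $J=\emptyset$ is where the stated hypothesis is genuinely used: then $\mbf{v}=\mbf{v'}$, and any proper index set $\mcr{I}$ -- which is assumed to exist, ruling out the possibility that $\mbf{v'}$ is an isolated vertex for which no solution plane exists -- trivially satisfies $\mbf{v}\in H(\mbf{v'},\mcr{I})$. The main obstacle I anticipate is the bookkeeping in the reverse direction: correctly translating ``$2^{|I|}$ distinct solutions'' into ``all sign patterns over $I$ are solutions,'' establishing downward closure of the freely flippable sets, and verifying that the complement of a maximal such set is genuinely minimal (proper) among isolated index sets while respecting the non-emptiness requirement on $I$ in Definition \ref{par_sol}. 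The forward direction is comparatively routine once the falsifying-assignment construction is in place.
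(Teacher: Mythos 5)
Your proof is correct, but it is worth recording how it relates to the paper: the paper's entire proof of Lemma \ref{c_plane} is a single sentence asserting that the claim ``follows trivially'' from Definition \ref{par_sol} and Eq.~(\ref{cj}), with no argument supplied, so what you have written is precisely the content that sentence hides. In the forward direction, the step that is genuinely not trivial is your witness claim---every clause must contain a satisfying literal whose index lies in $\mcr{I}$---and your falsifying-assignment contradiction (using that all $2^{|I|}$ sign patterns over the free set $I$ are solutions) is the right way to obtain it; without this step a clause whose only satisfied literals sit in $I$ would break the conclusion, so the lemma really does need it. In the reverse direction, your construction of a maximal freely flippable superset $I_0 \supseteq J$ and the verification that its complement is a \emph{proper} isolated index set is likewise entirely absent from the paper, and it is the only place where the propriety requirement of Definition \ref{par_sol} is actually engaged. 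Two small points to tighten: (i) if some $v_i = 0$ then $\sign(\mbf{v})$ is not a Boolean vector, so phrase the free-flipping step as ``assign arbitrary signs on $I_0$'' rather than ``flip $\mbf{v'}$ on $J$'' (or adopt the convention $\sign(0)=+1$, in the spirit of the paper's footnote to Proposition \ref{sign}); (ii) your propriety argument implicitly reads ``isolated index set'' as being taken relative to the same reference assignment $\mbf{v'}$---this is the natural reading, but the paper's definition is loose on the point, so it is worth stating that convention explicitly before invoking it.
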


\begin{proof}
	The proof follows trivially from the definition of the solution plane (see definition \ref{par_sol}) and the definition of the clause constraint (see Eq.~(\ref{cj})).
\end{proof}

\begin{remark}
	One immediate implication of this lemma is that once we have found a voltage assignment such that the global constraint (or energy) is zero, then the voltage vector must be either a solution vector, or it must be in some solution plane. If it is in a solution plane, then we can take any vertex of that plane as a solution to the 3-SAT problem. \\
\end{remark}

Since a solution vector and a vector in a solution plane both solve the 3-SAT problem, we can treat a solution vector equivalently to a solution plane. Then for an isolated solution vector $\mbf{v'}$, its solution plane simply refers to itself.

\begin{proposition}[Solution Fixed Points]
	\label{fix_sol}
	If $\mbf{v'}$ is in a solution plane, then $\mbf{x'} = \{ \mbf{v'}, \mbf{x_s}, \mbf{x_l} \}$ will eventually evolve to a fixed point in the same solution plane $\forall \mbf{x_s} \in [0,1]^m$, $\forall \mbf{x_l} \in [1,+\infty)^m$. Conversely, if $\mbf{x'}$ is a fixed point of the memory dynamics, then $\mbf{v'} = \{ x'_1, ... ,x'_n \}$ is in a solution plane.
\end{proposition}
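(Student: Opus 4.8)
The plan is to prove the two implications separately, in each case reducing everything to the behaviour of the clause constraints $C_j(\mbf{v})$ and exploiting the parameter ordering $\delta < \gamma < 1/2$ together with the projected (bounded) dynamics from Corollary \ref{mem_bound}.

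First I would handle the forward direction. Starting from $\mbf{v'}$ in a solution plane $H(\mbf{v'},\mcr{I})$, Lemma \ref{c_plane} gives $C_j(\mbf{v'}) = 0$ for every $j$. The first observation to record is that, since $C_j$ vanishes on the \emph{entire} plane, every clause must be fully satisfied by a variable whose index lies in $\mcr{I}$: a free variable that satisfied a clause could be moved within the plane to a non-satisfying value, contradicting $C_j \equiv 0$. The key computation is then that the gradient-like push $G_{ij}$ on a variable $i$ from clause $j$ vanishes unless $i$ is itself a fully-satisfying variable of $j$, because the backbone satisfying variable $i^\ast_j$ contributes a zero term to $\min_{i'\neq i}(1 - q_{i'j}v_{i'})$ whenever $i \neq i^\ast_j$. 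For $i = i^\ast_j$ the push points outward at the boundary $v_i = q_{ij} = \pm 1$ and is annihilated by the bounding field of Corollary \ref{mem_bound}, while the rigidity terms $R_{ij}$ vanish since $C_j = 0$. Hence $\dot{\mbf{v}} \equiv \mbf{0}$ on the plane, the voltages are frozen, and $C_j$ stays $0$; the memory equations then reduce to $\dot{x}_{s,j} = -\beta\gamma(x_{s,j}+\epsilon) < 0$ and $\dot{x}_{l,j} = -\alpha\delta < 0$, so both decay monotonically to their lower bounds $x_{s,j}=0$, $x_{l,j}=1$ in finite time and are held there by the bounding field, giving a fixed point in the same solution plane.

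For the converse, suppose $\mbf{x'} = \{\mbf{v'},\mbf{x_s},\mbf{x_l}\}$ is a fixed point. Using $\mbf{x_l} \in [1,+\infty)$ (the uncapped case assumed in this section), the condition $\dot{x}_{l,j}=0$ forces $C_j(\mbf{v'}) \le \delta$ for all $j$: if $C_j > \delta$ then $\dot{x}_{l,j} > 0$ grows without bound, contradicting stationarity. Since then $C_j \le \delta < \gamma$, we have $\dot{x}_{s,j} = \beta(x_{s,j}+\epsilon)(C_j-\gamma) < 0$ whenever $x_{s,j}>0$, so stationarity forces $x_{s,j}=0$ for all $j$, the residual downward push at the lower bound being projected out. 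Substituting $x_{s,j}=0$ kills every gradient-like term and leaves, for each strictly-interior coordinate, the voltage stationarity condition $\sum_{j\,:\,\sigma_j = i}(1+\zeta x_{l,j})\,q_{ij}\,C_j(\mbf{v'}) = 0$.

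The hard part will be upgrading $C_j \le \delta$ to $C_j = 0$, and I would do this by contradiction. Suppose some clause $j$ has $C_j > 0$ with pivot $\sigma_j = i$; then $q_{ij}v'_i = 1 - 2C_j \in [1-2\delta,1) \subset (0,1)$, so $v'_i$ is strictly interior (hence no boundary projection) and the exact stationarity equation above applies. For its signed sum to cancel there must exist another clause $j'$ with $\sigma_{j'}=i$, $C_{j'}>0$ and opposite polarity $q_{ij'}=-q_{ij}$; but that would force simultaneously $q_{ij}v'_i>0$ and $-q_{ij}v'_i = 1-2C_{j'}>0$, a contradiction. Thus every unsatisfied clause pivoting on $i$ shares the polarity $q_{ij}$, their rigidity contributions cannot cancel, and $\dot{v}_i \neq 0$ — a contradiction. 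Hence $C_j = 0$ for all $j$, and Lemma \ref{c_plane} places $\mbf{v'}$ in a solution plane. The main obstacle is precisely this non-cancellation step: it is where the inequality $\delta < 1/2$ is indispensable (guaranteeing a single, fixed sign of $q_{ij}v'_i$ at any unsatisfied pivot), and where one must carefully separate genuinely interior coordinates, for which the sum must vanish exactly, from boundary coordinates governed by the projected field of Corollary \ref{mem_bound}.
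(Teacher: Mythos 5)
Your proof is correct, and it splits into a half that mirrors the paper and a half that genuinely does not. The forward direction is essentially the paper's own argument: on a solution plane every clause is fully satisfied by an isolated variable, so each gradient-like push either vanishes (the $\min$ picks up the satisfied literal's zero term) or points outward at a hypercube face and is removed by the bounding projection of Corollary \ref{mem_bound}, while the rigidity terms die with $C_j=0$; the memories then decay monotonically to their lower bounds, yielding a fixed point in the same plane. The converse is where you depart from the paper. The paper defers it to Corollary \ref{corr_conv}: after extracting $C_j \le \delta$ and $x_{s,j}=0$ exactly as you do, it invokes the basin-of-attraction Theorem \ref{basin_th} --- the fixed point lies in $B(\sign(\mbf{v'}))$, so if $\mbf{v'}$ were not already in a solution plane the trajectory would have to evolve toward one, contradicting stationarity. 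You instead close the argument locally and algebraically: any unsatisfied clause has its pivot at $q_{ij}v'_i = 1-2C_j \in [1-2\delta,1)$, hence strictly interior and subject to the exact stationarity equation; all unsatisfied clauses pivoting on that coordinate must then share the polarity $\sign(v'_i)$ (precisely where $\delta<1/2$ enters), so their positively-weighted rigidity contributions cannot cancel and $\dot v_i \neq 0$, a contradiction. Your route buys self-containedness --- the paper's argument needs Theorem \ref{basin_th}, whose own proof uses the forward half of this very proposition, so the ordering of results matters there --- and it exposes the sign mechanism that forbids non-solution stationary points, giving $C_j=0$ exactly rather than $C_j\le\delta$. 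What the paper's route buys in exchange is economy (the corollary is two lines once the basin theorem, needed anyway, is in hand) plus the extra information that every fixed point sits inside the basin of its own solution plane. Both routes share the same final step and the same mild definitional caveat: passing from $\mbf{C}(\mbf{v'})=\mbf{0}$ to membership in a solution plane via Lemma \ref{c_plane} tacitly relies on the paper's convention for proper index sets and isolated solutions.
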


\begin{proof}
	We first show the first part of the proposition. Given any $\mbf{x_s}$ and $\mbf{x_l}$, we denote $\mbf{x'} = \{ \mbf{v'}, \mbf{x_f}, \mbf{x_s} \}$, where $\mbf{v'}$ is in a solution plane. WLOG, we can assume that the solution plane is $H\big( \mbf{+1}, [[1,n']] \big)$, where $n' < n$ and $\sign(\mbf{v'}) = \mbf{+1}$ (if not, we can simply gauge the polarity matrix and relabel the indices such that it is true). We first begin by showing that $\mbf{C}(\mbf{v}( \mbf{x'}, t )) = \mbf{0}$ for $\forall t>0$. To do so, it is sufficient to show that for all such $\mbf{v'}$ (and arbitrary memory), the voltage flow field is positive, meaning that the trajectory will be pressed against the solution plane. \\
	
	WLOG, we first focus only on the dynamics of $v_1$ influenced by clause $j$ (assuming that $q_{1j} \neq 0$). The gradient influence is
	\begin{equation*}
	G_{1j} = \frac{1}{2} q_{1j} (1 - q_{ij}v_i).
	\end{equation*}
	Note that the gradient-like term is non-positive only if $q_{1j} = -1$, which implies that $q_{ij}v_j = +1$ otherwise $C_j \neq 0$. In this case, we have $G_{1j} = 0$, therefore it is required that $G_{1j} \geq 0$ for all cases. For the rigidity term, we have
	\begin{equation*}
	\begin{split}
	R_{1j} &= \delta_{1\sigma_j} q_{1j} C_j(\mbf{v}),
	\end{split}
	\end{equation*}
	which is necessarily zero as $C_j(\mbf{v}) = 0$. Therefore, all possible contributions to $v_1$ are non-negative, and this applies for $\forall i \in \mcr{I}$. This means that $\mbf{C}(\mbf{v}( t, \mbf{x'} )) = \mbf{0}$ for $\forall t>0$, then $\mbf{\dot{x}_s}(t) < \mbf{0}$ and $\mbf{\dot{x}_l}(t) < \mbf{0}$, so both memory variables will decay and terminate at $0$ and $1$ respectively. \\
	
	The proof of the second part of this proposition is shown as Corollary \ref{corr_conv}, immediately after we establish certain properties of the basin of attraction for our dynamics.
\end{proof}

\begin{remark}
	The proposition essentially states that once the voltage vector reaches a solution plane, then the dynamics will flow to a fixed point. On the other hand, if the voltage vector has not reached a solution plane yet, then the dynamics will continue to evolve (until it finds the solution). If the original 3-SAT problem is unsatisfiable, then the dynamics will continue to evolve forever. 
\end{remark}

\section{Basin of Attraction}
\label{basin}

From proposition \ref{sign}, we see that the 3-SAT problem is essentially solved once we have discovered a voltage assignment such that $\mbf{C}(\mbf{v}) < \frac{1}{2}$, and the dynamics can be terminated. However, in some cases, the implementation of this termination condition is perhaps not feasible, so we have to allow the dynamics to fully converge to a solution vector $\mbf{v}_0$. In this case, it is necessary for us to determine the {\it basin of attraction} in which the dynamics are guaranteed to evolve towards the solution. We first formally define the basin of attraction as follows.

\begin{definition}
	\label{def_basin}
	Given some flow field $F: \mbb{R}^n \mapsto \mbb{R}^n$, let $\mbf{x'}$ be a fixed point of this field. We define the {\bf basin of attraction} of $\mbf{x'}$ as
	\begin{equation*}
	B(\mbf{x'}) = \{ \mbf{x}_0\in\mbb{R}^n \,\,\big\lvert\,\, \lim_{t\to+\infty}\mbf{x}(t,\mbf{x}_0) = \mbf{x'} \}.
	\end{equation*}
\end{definition}

\begin{remark}
	From the first part of proposition \ref{fix_sol}, we see that every solution plane must contain a fixed point. We can then modify the above definition to solution plane as follows
	\begin{equation*}
	B(\mbf{v'}) = \{ \mbf{x}_0\in\mbb{R}^{n+2m} \,\,\big\lvert\,\, \lim_{t\to +\infty} \mbf{v}(t,\mbf{x}_0) \in \bigcup_{\mcr{I}} H(\mbf{v', \mcr{I}}) \},
	\end{equation*}
	where $H(\mbf{v'}, \mcr{I})$ denotes the solution plane of $\mbf{v'}$ corresponding to the proper index set $\mcr{I}$ (see definition \ref{par_sol}). It is important to note that the basin of attraction of a solution vector is a subset of the full space, or $B(\mbf{v'}) \subseteq \mbb{R}^{n+2m}$, even though the fixed points are defined in the voltage space $\mbb{R}^n$. This is because the dynamics of the memory variables still affect the flow field of the voltages. \\
\end{remark}

An object that will be often evoked in the following discussion is the orthant of the voltage space that contains a solution plane. To make the discussion easier, we shall refer to such orthant as a {\it solution orthant}.

\begin{definition}[Solution Orthant]
	\label{orth}
	Given a solution vector $\mbf{v'} \in \mbb{R}^n$ and a proper index set $\mcr{I}$ (see definition \ref{par_sol}), we define its {\bf solution orthant} to be
	\begin{equation*}
	J(\mbf{v'}, \mcr{I}) = \{ \mbf{v}\in [-1,+1]^n \,\,\big\lvert\,\, \forall i \in \mcr{I} ,\, v_i  v'_i \geq 0  \}.
	\end{equation*}
	Furthermore, we define the {\bf restricted solution orthant} to be
	\begin{equation*}
	J'(\mbf{v'}, \mcr{I}) = \{ \mbf{v}\in [-1,+1]^n \,\,\big\lvert\,\, \forall i \in \mcr{I} ,\, v_i  v'_i \geq 1-2\gamma  \},
	\end{equation*}
	where $\gamma < \frac{1}{2}$ is the offset parameter defined in Eqs. (\ref{mem}).
\end{definition}

\begin{remark}
	From the discussion in Section \ref{gauge}, we see that a satisfiable 3-SAT problem can always be gauged such that the solution vector is $\mbf{+1}$. This means that in our analysis, we can always assume that any solution orthant contains the positive orthant of $[-1,+1]^n$, as long as we guarantee that the dynamics are fully contained within the orthant. \\
	
	For better visualization, one can imagine the entire bounded space of the dynamics, $O$, as a hypercube. Then a solution orthant is simply a hyper-rectangle 
	with some of its ``edges" halved in such a way that it still contains a solution plane. A restricted solution orthant is constructed similarly except these edges are quartered instead. This can be described by the following containment relation
	\begin{equation*}
	H(\mbf{v'}, \mcr{I}) \subset J'(\mbf{v'}, \mcr{I}) \subset J(\mbf{v'}, \mcr{I}) \subset O.
	\end{equation*}
\end{remark}

\begin{lemma}
	\label{c14}
	Given a solution vector $\mbf{v'}$ for which a proper index set $\mcr{I}$ exists, we have
	\begin{equation*}
	\forall \mbf{v} \in J'(\mbf{v'}, \mcr{I}), \qquad
	\mbf{C}(\mbf{v}) \leq \gamma.
	\end{equation*}
	On the other hand, given a vector $\mbf{v} \in [-1,+1]^n$ such that $\mbf{C}(\mbf{v}) \leq \gamma$, let $\mbf{v'} = \sign(\mbf{v})$ be the corresponding solution vector. If there is a proper index set for this solution, then
	\begin{equation*}
	\exists \mcr{I}, \qquad
	\mbf{v} \in H(\mbf{v'}, \mcr{I}).
	\end{equation*}
\end{lemma}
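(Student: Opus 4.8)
The plan is to prove the two directions separately after reducing both to the gauge-fixed setting $\mbf{v'} = \mbf{+1}$, which is permitted by the gauge invariance of the constraint (Lemma \ref{cj_inv}) and of the dynamics (Section \ref{gauge}). After this reduction the restricted solution orthant reads $\{ \mbf{v} : v_i \geq 1-2\gamma \text{ for all } i \in \mcr{I} \}$ and the solution plane reads $\{ \mbf{v} : v_i = 1 \text{ for all } i \in \mcr{I} \}$. The linchpin for both directions is a structural fact extracted from Definition \ref{par_sol}: since the $2^{|I|}$ assignments that fix the coordinates in $\mcr{I}$ to $\mbf{v'}$ and range freely over $I = [[1,n]]/\mcr{I}$ are all solutions, every clause $j$ must carry a satisfying literal on a backbone variable $i_j \in \mcr{I}$, i.e.\ $q_{i_j j} = +1$ after gauging. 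Indeed, among these solutions the one falsifying every free literal of clause $j$ forces $j$ to be satisfied by a coordinate outside $I$; the case in which $j$ contains only free variables is excluded, as that very assignment would then falsify $j$. This is the gauged counterpart of Lemma \ref{one} localized to the backbone.

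For the forward direction I would take $\mbf{v} \in J'(\mbf{v'}, \mcr{I})$, so $v_{i} \geq 1-2\gamma$ for every $i \in \mcr{I}$. Fixing a clause $j$ and its backbone satisfier $i_j$, gauging gives $q_{i_j j} v_{i_j} = v_{i_j} \geq 1 - 2\gamma$, hence $1 - q_{i_j j} v_{i_j} \leq 2\gamma$. Because $C_j(\mbf{v})$ is one half of a minimum bounded above by its $i_j$ term, we obtain $C_j(\mbf{v}) \leq \gamma$ for every $j$, that is $\mbf{C}(\mbf{v}) \leq \gamma$. This direction is routine once the backbone-satisfier fact is established.

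For the converse I would begin from $\mbf{v}$ with $\mbf{C}(\mbf{v}) \leq \gamma$, set $\mbf{v'} = \sign(\mbf{v})$ (a genuine solution by Proposition \ref{sign}), and fix the proper index set $\mcr{I}$ guaranteed by hypothesis. The key subroutine attaches to each backbone coordinate $i \in \mcr{I}$ a \emph{witness clause}: flipping $i$ away from $v'_i$ destroys the solution because $i$ is forced, so there is a clause $j(i)$ satisfied by $\mbf{v'}$ solely through $i$, with every other literal of $j(i)$ falsified by $\mbf{v'}$. Evaluating $C_{j(i)}(\mbf{v}) \leq \gamma$ on this clause then forces $1 - q_{i,j(i)} v_i \leq 2\gamma$, i.e.\ $v_i v'_i \geq 1 - 2\gamma$ after gauging, for every $i \in \mcr{I}$.

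The main obstacle is exactly the last step of the converse. The witness-clause argument delivers $v_i v'_i \geq 1 - 2\gamma$ on the backbone, which is precisely membership in the restricted solution orthant $J'(\mbf{v'}, \mcr{I})$, whereas the stated conclusion $\mbf{v} \in H(\mbf{v'}, \mcr{I})$ demands the strictly stronger saturation $v_i = v'_i$ (equivalently $q_{i,j(i)} v_i = 1$). The hypothesis $\mbf{C}(\mbf{v}) \leq \gamma$ only pins each backbone literal to the slab $q_{i,j(i)} v_i \in [1-2\gamma, 1]$ and cannot close this gap: a point with $v_i = 1-\gamma$ on a witnessed backbone coordinate still satisfies $\mbf{C}(\mbf{v}) \leq \gamma$ yet lies off $H$. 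Saturation, and hence the literal conclusion $\mbf{v}\in H(\mbf{v'},\mcr{I})$, is recovered only under the stronger hypothesis $\mbf{C}(\mbf{v}) = \mbf{0}$, which is exactly the content of the companion Lemma \ref{c_plane}; under the hypothesis as written the attainable and structurally parallel statement is $\mbf{v} \in J'(\mbf{v'}, \mcr{I})$, so I would flag the appearance of $H$ in the converse as the step needing correction.
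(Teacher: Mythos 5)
Your forward direction is sound and fills in what the paper leaves implicit: the paper's own proof is a one-line appeal to Definitions \ref{par_sol} and \ref{orth}, and your structural fact---that every clause must contain a backbone literal satisfied by $\mbf{v'}$, obtained by choosing the completion that falsifies all free literals of that clause---is exactly the missing content. Your diagnosis of the converse is also substantively right: the hypothesis $\mbf{C}(\mbf{v})\le\gamma$ only pins the deciding literal of each clause into the slab $q_{ij}v_i\in[1-2\gamma,1]$, so exact saturation $v_i=v'_i=\pm 1$, i.e.\ membership in the plane $H$, cannot follow; and the way Lemma \ref{c14} is actually invoked in the proof of Theorem \ref{basin_th} (to place a trajectory in a \emph{restricted solution orthant}) confirms that $J'$ is the intended conclusion, with $H$ correctly reserved for the $\mbf{C}(\mbf{v})=\mbf{0}$ case of Lemma \ref{c_plane}. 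Flagging the $H$ in the statement as an error is justified.

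However, your repaired converse has a genuine gap: you attempt to prove $\mbf{v}\in J'(\mbf{v'},\mcr{I})$ for the \emph{fixed} proper index set supplied by the hypothesis, via per-coordinate witness clauses, and both steps fail. First, properness of $\mcr{I}$ does not imply that flipping $i$ in the particular vertex $\mbf{v'}$ destroys solutionhood (the flipped vertex may be a solution in another cluster); it only implies that \emph{some} completion of $\mcr{I}\setminus\{i\}$ is a non-solution, so the clause you extract has its non-$i$ literals falsified by that adversarial completion, not by $\mbf{v'}$. In particular the witness clause may contain free literals, and $C_{j(i)}(\mbf{v})\le\gamma$ can then be realized by a free literal while $v_i$ stays arbitrarily small. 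Second, the conclusion itself is false for a fixed $\mcr{I}$: take the formula $(x_1\lor x_2\lor x_3)\land(x_1\lor \bar{x}_2\lor x_4)$ with $\gamma=0.1$ and $\mbf{v}=(0.1,\,0.95,\,-0.3,\,0.9)$; then $C_1(\mbf{v})=0.025$ and $C_2(\mbf{v})=0.05$, so $\mbf{C}(\mbf{v})\le\gamma$, and $\mbf{v'}=(+1,+1,-1,+1)$ admits the proper index set $\mcr{I}=\{1\}$ (all eight completions with $x_1$ true are solutions), yet $v_1v'_1=0.1<1-2\gamma=0.8$, so $\mbf{v}\notin J'(\mbf{v'},\{1\})$. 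The existential over $\mcr{I}$ is therefore essential, and the index set must be built from $\mbf{v}$ itself rather than taken from the hypothesis: let $S=\{\sigma_j \cond j\in[[1,m]]\}$ collect the most-satisfied literal index of each clause; $C_j(\mbf{v})\le\gamma$ gives $q_{\sigma_j j}v_{\sigma_j}\ge 1-2\gamma>0$, so every coordinate in $S$ obeys $v_iv'_i\ge 1-2\gamma$ and $\mbf{v'}$ satisfies every clause through $S$. Hence every completion agreeing with $\mbf{v'}$ on $S$ is a solution, $S$ is an isolated index set, and any minimal isolated subset $\mcr{I}\subseteq S$ is proper with $\mbf{v}\in J'(\mbf{v'},\mcr{I})$ (in the example above this yields $\mcr{I}=\{2,4\}$).
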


\begin{proof}
	The proof follows trivially from definitions \ref{par_sol} and \ref{orth}.
\end{proof}

Equipped with this lemma, we can now show that when a trajectory enters a restricted solution orthant with the fast variable being $\mbf{x_f} = \mbf{0}$, then it is guaranteed to converge to a solution plane, which further implies that it will evolve into a fixed point (see proposition \ref{fix_sol}).

\begin{theorem}[Basin of Attraction]
	\label{basin_th}
	Let $\mbf{v'}$ be a solution vector, then
	\begin{equation*}
	\Big[ \bigcup_{\mcr{I}} J'(\mbf{v'}, \mcr{I}) \Big] \times \{ 0 \}^m \times [1, +\infty)^m \subseteq B(\mbf{v'}).
	\end{equation*}
\end{theorem}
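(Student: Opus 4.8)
The plan is to reduce everything to the behaviour of the purely rigid voltage dynamics inside a restricted solution orthant. First I would invoke the gauge invariance of the memory dynamics (Section~\ref{gauge_inv}) to assume without loss of generality that the solution vector is $\mbf{v'} = \mbf{+1}$; this is legitimate because gauging maps $J'(\mbf{v'},\mcr{I})$, the hyperplanes of discontinuity (Section~\ref{plane}) and the bounding hypercube (corollary~\ref{mem_bound}) to their images under the same transformation, so it carries $B(\mbf{v'})$ to $B(\mbf{+1})$. After gauging, a point of $J'(\mbf{+1},\mcr{I})$ satisfies $v_i \geq 1-2\gamma > 0$ for every isolated index $i\in\mcr{I}$, and Lemma~\ref{c14} guarantees $\mbf{C}(\mbf{v}) \leq \gamma$ throughout this orthant. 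It therefore suffices to show that the trajectory started in $\Omega' = \big[\bigcup_{\mcr{I}} J'(\mbf{+1},\mcr{I})\big]\times\{0\}^m\times[1,+\infty)^m$ converges to a solution plane, after which proposition~\ref{fix_sol} supplies convergence to an actual fixed point in that plane.

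The first step is to show that $\Omega'$ is positively invariant and that the short-term memory is frozen at $\mbf{x_s}\equiv\mbf{0}$ along the trajectory; these two facts are coupled and I would establish them together. While $\mbf{v}\in J'$ we have $C_j\leq\gamma$, so on the face $x_{s,j}=0$ the short-term field is $\dot{x}_{s,j}=\beta\epsilon(C_j-\gamma)\leq 0$, and the bounding construction of corollary~\ref{mem_bound} projects out this outward component, holding $x_{s,j}=0$. With $\mbf{x_s}=\mbf{0}$ the gradient-like term in Eqs.~(\ref{mem}) is switched off and the voltage field reduces to the rigidity term alone, $\dot{v}_i = \sum_{j}(1+\zeta x_{l,j})\,\delta_{i\sigma_j}q_{ij}C_j(\mbf{v})$. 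The key algebraic observation is that, inside the orthant, any clause whose active index is $\sigma_j = i$ must have $q_{ij}v_i>0$ (otherwise $C_j=\tfrac12(1-q_{ij}v_i)\geq\tfrac12$, contradicting $C_j\leq\gamma<\tfrac12$); in particular, for $i\in\mcr{I}$ one has $v_i>0$ and hence $q_{ij}=+1$, so every rigid contribution to $\dot v_i$ equals $\tfrac12(1+\zeta x_{l,j})(1-v_i)\geq 0$. Thus $v_i$ is non-decreasing for $i\in\mcr{I}$, the field points inward on the boundary face $v_i=1-2\gamma$, and $J'$ is invariant, which in turn keeps $\mbf{x_s}$ frozen. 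A short first-exit-time argument makes this bootstrap rigorous.

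The second step is a Lyapunov estimate on each clause constraint. Writing $C_j=\tfrac12\min_i(1-q_{ij}v_i)$ and using the upper Dini derivative of a minimum (which at a switching point equals the minimum of the derivatives over the active indices), the reinforcement observation above---that all clauses sharing an active index $i$ carry $q_{ij}=\sign(v_i)$ and hence the same value $C_{j'}=\tfrac12(1-|v_i|)$---gives $q_{ij}\dot v_i = C_j\sum_{j':\sigma_{j'}=i}(1+\zeta x_{l,j'})\geq C_j$, whence $D^+C_j\leq-\tfrac12 C_j$ and $C_j(t)\leq C_j(0)\,e^{-t/2}$. Consequently $\mbf{C}(\mbf{v}(t))\to 0$ exponentially. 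Because $C_j\to 0<\delta$, the long-term field $\dot{x}_{l,j}=\alpha(C_j-\delta)$ eventually becomes negative and $x_{l,j}$ stays bounded, so $\int_0^\infty|\dot v_i|\,dt<\infty$ and $\mbf{v}(t)$ converges to some $\mbf{v}^*$ with $\mbf{C}(\mbf{v}^*)=0$; by Lemmas~\ref{c14} and~\ref{c_plane} this limit lies in $\bigcup_{\mcr{I}}H(\mbf{+1},\mcr{I})$. Hence $\Omega'\subseteq B(\mbf{+1})$, and undoing the gauge gives the theorem.

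I expect the main obstacle to be the non-smoothness introduced by the $\min$ operation: the active index $\sigma_j$ switches on the discontinuity hyperplanes of Section~\ref{plane}, so both the invariance argument and the decay estimate must be phrased in terms of the Caratheodory trajectory of Section~\ref{cara} and upper Dini derivatives rather than classical derivatives. A secondary technical point is the coupled nature of the invariance of $J'$ and the freezing of $\mbf{x_s}$, which I would handle with a single first-exit-time contradiction argument rather than treating the two facts in isolation.
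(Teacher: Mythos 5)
Your first step is essentially the paper's own argument: gauge to $\mbf{v'}=\mbf{+1}$, use Lemma \ref{c14} to get $C_j\le\gamma$ throughout $J'$, so that on the face $x_{s,j}=0$ the short-term field is non-positive and the bounding construction of Corollary \ref{mem_bound} keeps $\mbf{x_s}\equiv\mbf{0}$; then only the rigidity term survives, every rigid contribution to an isolated voltage has $q_{ij}=+1$ and equals $\tfrac12(1+\zeta x_{l,j})(1-v_i)\ge 0$, and a first-exit-time contradiction closes the bootstrap (the paper phrases the same bootstrap as ``the first voltage in time to violate $\dot v_1\ge 0$''). Your second step --- the Dini estimate $D^+C_j\le-\tfrac12 C_j$, exponential decay of $\mbf{C}$, boundedness of $\mbf{x_l}$, and integrability of $\dot{\mbf{v}}$ --- is a genuinely different and more quantitative route to convergence than the paper's, which uses only monotonicity and never produces a rate.

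The gap is in the very last inference of that second step. From $\mbf{C}(\mbf{v}^*)=\mbf{0}$, Lemma \ref{c_plane} places the limit $\mbf{v}^*$ in a solution plane of the solution vector $\sign(\mbf{v}^*)$, \emph{not} of $\mbf{+1}$, and these differ in general. Membership in $B(\mbf{v'})$ (remark after Definition \ref{def_basin}) requires $\mbf{v}^*\in\bigcup_{\mcr{I}}H(\mbf{+1},\mcr{I})$, i.e.\ $v_i^*=+1$ \emph{exactly} on some proper index set of $\mbf{+1}$. Your argument does not force this: under the pure rigidity flow a voltage is pushed only while it is the most-satisfied literal of some clause, so an isolated voltage can stall strictly below $+1$ forever while every clause containing it is driven to exact satisfaction through a negated literal of a non-isolated voltage converging to $-1$. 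For instance, take a clause $(x_1\vee\bar{x}_k\vee x_l)$ with $1\in\mcr{I}$, $k,l\notin\mcr{I}$, and $v_k(0)$ near $-1$: the literal $\bar{x}_k$ is and remains active, $v_k\to-1$, $C_j\to 0$, and $v_1$ is never pushed, so $v_1^*=1-2\gamma<1$ and $\mbf{v}^*$ lies in a solution plane of a different solution vector but in no solution plane of $\mbf{+1}$. This is exactly the point to which the paper devotes the entire second half of its proof: it argues by contradiction (invoking the properness/uniqueness of the restricted solution orthant attached to $\mcr{I}$) that every isolated voltage must converge to exactly $+1$, so the limit lies in $H(\mbf{+1},\mcr{I})$ itself, and only then invokes Proposition \ref{fix_sol}. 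Whether that argument of the paper is itself airtight is a separate question, but some version of it is indispensable; your exponential decay estimate cannot substitute for it, and without it the inclusion in $B(\mbf{v'})$ is not established.
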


\begin{proof}
	It is sufficient to show
	\begin{equation*}
	\forall \mcr{I}, \qquad
	J'(\mbf{v'}, \mcr{I}) \times \{ 0 \}^m \times [1, +\infty)^m \subseteq B(\mbf{v'}).
	\end{equation*}
	WLOG, we let $\mbf{v'} = \mbf{+1}$ and the proper isolated index set be $\mcr{I} = [[1,n']]$, then $J'(\mbf{v'}) = J'(\mbf{+1}) = [ 1-2\gamma, +1 ] ^{n'}\times [-1,+1]^{n-n'}$, which we simply refer to as $J'$ from here on. We first note that if $\mbf{x_s} = \mbf{0}$, then for $\forall \mbf{v} \in J'$, we have $\dot{\mbf{v}}(0) \geq 0$ (see the first line of Eqs.~(\ref{mem})). Furthermore, $\forall \mbf{v} \in J'$, we have $\mbf{\dot{x}_f} \leq 0$ (which follows from the second line of Eqs.~(\ref{mem}) and Lemma \ref{c14}). We first show, by contradiction, that for any point initialized in the supposed subset of the basin, then the evolution of each isolated component of the voltage vector must be weakly monotonous, or $\dot{v}_i(t) \geq 0$ for $\forall i \in [[1,n']]$ and $\forall t>0$. \\
	
	We let some initial point be $\mbf{x}_0 = \{ \mbf{v}_0, \mbf{x}_{f,0}, \mbf{x}_{s,0} \} \in J' \times \{ 0 \}^m \times [1, x_{\max}]^m$, and the solution trajectory be $\mbf{x}(t)$. WLOG, we assume that $v_1(t)$ is not monotonously increasing, and is the first voltage in time to violate the inequality $\dot{v}_1(t) \geq 0$. We denote this time to be
	\begin{equation*}
	T = \inf\{ t>0 \lcond \exists \dot{v}_1(t) < 0 \}.
	\end{equation*}
	It is clear that $v_1(t) \in J'$ for $\forall t \in [0, T]$. In addition, it is required that $\mbf{x_s}(T) \neq \mbf{0}$. This is, however, only possible if $\exists t' \in [0,T]$ such that
	\begin{equation*}
	\exists j \in [[1,m]], \quad \dot{x}_{s,j}(t') > 0.
	\end{equation*}
	But as $\mbf{v}(t') \in J'$, the above condition is not possible. Therefore, by contradiction, we must have $\dot{\mbf{v}}(t) \geq 0$ for $\forall t \geq 0$. \\
	
	To complete the proof, it is sufficient to show that $\forall i \in \mcr{I}$, we have $\lim_{t\to\infty} v_i(t) = +1$. Again, we prove this by contradiction. We first assume that the statement is not true, then $\exists i \in \mcr{I}$ (WLOG let $i=n'-1$), $\exists \epsilon > 0$ such that $\lim_{t\to\infty}v_i(t) = 1-\epsilon$, and $\lim_{t\to\infty}\dot{v}_i(t) = 0$ (as $\mbf{v}$ is monotonous). This means that there is a time $T$, after which $v_i$ can no longer appear as the most satisfied literal in any clause. If this is not the case, then $\forall T$, $\exists t' > T$ such that $\dot{v}_i(t') = v_i(t')$, which is clearly not possible as the limits of the LHS and RHS converge to different values. As $v_i$ is no longer the most satisfied literal in any clause, we can set its value arbitrarily in $[-1,+1]$, and the condition $\mbf{C}(\mbf{v}) \leq \gamma$ will still remain true, as the clause energy of each clause only depends on the most satisfied literal (see Eq.~(\ref{cj})). From Lemma \ref{c14}, this implies that the restricted solution orthant should be $[+\frac{1}{2}, +1]^{n'-1}\times [-1,+1]^{n-n'+1}$ instead. However, the restricted solution orthant of a solution vector is unique given a proper index set $\mcr{I}$ (see the remark of definition \ref{par_sol}), so we have a contradiction. Therefore, the dynamics must converge to a solution plane, and thus also to a fixed point by proposition \ref{fix_sol}.
\end{proof}

\begin{remark}
	Note that this basin of attraction is a superset of the basin of attraction proven in another work~\cite{zoltan} using continuous dynamics for solving k-SAT problems. This means that the basin of attraction for our dynamics is larger, which is a desirable property for using our dynamics as a SAT solver.
\end{remark}

\begin{corollary}
	\label{corr_conv}
	If $\mbf{x'}$ is a fixed point of the memory dynamics given in Eqs. (\ref{mem}), then $\mbf{v'}$ is in a solution plane.
\end{corollary}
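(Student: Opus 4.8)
The plan is to prove this converse direction of Proposition \ref{fix_sol} by reducing a fixed point to a point that already lies inside the basin of attraction established in Theorem \ref{basin_th}. Write the fixed point as $\mbf{x'} = \{\mbf{v'}, \mbf{x_s'}, \mbf{x_l'}\}$, so that the patched flow field $F$ (the memory field together with the bounding field of Corollary \ref{mem_bound}) vanishes at $\mbf{x'}$. Following the convention adopted throughout this material, I would take the upper bound $x_{\max}$ on the long-term memory to be absent, so that $\mbf{x_l'} \in [1,+\infty)^m$.

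First I would extract the fixed-point conditions on the memory variables. For each clause $j$, vanishing of the $x_{l,j}$-component means either the interior field $\alpha\big(C_j(\mbf{v'}) - \delta\big)$ vanishes, giving $C_j(\mbf{v'}) = \delta$, or $x_{l,j}' = 1$ is pinned at the lower bound against an outward-pointing interior field, giving $C_j(\mbf{v'}) \le \delta$; in either case $C_j(\mbf{v'}) \le \delta$. Since $\delta < \gamma$, the interior short-term field $\beta\big(x_{s,j}' + \epsilon\big)\big(C_j(\mbf{v'}) - \gamma\big)$ is \emph{strictly} negative (because $x_{s,j}' + \epsilon \ge \epsilon > 0$), hence points strictly outward at the face $x_{s,j}=0$; vanishing of the $x_{s,j}$-component therefore forces $x_{s,j}' = 0$ for every $j$, i.e. $\mbf{x_s'} = \mbf{0}$.

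Next I would locate $\mbf{v'}$ geometrically. From $C_j(\mbf{v'}) \le \delta$ for all $j$ we obtain $\mbf{C}(\mbf{v'}) \le \delta \le \gamma$, so by Proposition \ref{sign} the thresholded vector $\mbf{v}_0 := \sign(\mbf{v'})$ is a solution vector, and by Lemma \ref{c14} there is a proper index set $\mcr{I}$ with $\mbf{v'} \in J'(\mbf{v}_0, \mcr{I})$ (an isolated solution being treated as its own singleton plane). Consequently the fixed point lies in $J'(\mbf{v}_0, \mcr{I}) \times \{0\}^m \times [1,+\infty)^m$, which by Theorem \ref{basin_th} is contained in the basin $B(\mbf{v}_0)$. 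Because $\mbf{x'}$ is a fixed point its voltage trajectory is constant, $\mbf{v}(t,\mbf{x'}) = \mbf{v'}$, so membership in $B(\mbf{v}_0)$ forces $\mbf{v'} = \lim_{t\to+\infty}\mbf{v}(t,\mbf{x'}) \in \bigcup_{\mcr{I}} H(\mbf{v}_0, \mcr{I})$; that is, $\mbf{v'}$ lies in a solution plane, as claimed.

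The main obstacle I anticipate is the first step: reading off the fixed-point conditions from the \emph{patched} field rather than the naive interior field. Since $\mbf{x'}$ may sit on a face of the invariant hypercube $O$, being a fixed point means vanishing of the superposed field written explicitly in Corollary \ref{mem_bound} (with its Heaviside projection), not merely of the original right-hand sides of Eqs.~(\ref{mem}); I would argue case-by-case that this is equivalent to ``the interior field vanishes, or the variable is pinned at a bound against an outward-pointing interior field.'' A secondary technical point is checking that the placement from Lemma \ref{c14} and the basin containment from Theorem \ref{basin_th} apply verbatim at $\mbf{x_s'} = \mbf{0}$, and that the measure-zero discontinuity hyperplanes of Section \ref{plane} play no role, since a genuine fixed point of the Caratheodory dynamics cannot persist on them.
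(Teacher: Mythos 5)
Your proposal is correct and follows essentially the same route as the paper's own proof: deduce $C_j(\mbf{v'})\leq\delta$ from the long-term memory equation, then $\mbf{x_s'}=\mbf{0}$ from $\delta<\gamma$, invoke Proposition \ref{sign} and Theorem \ref{basin_th} to place the fixed point in the basin of $\sign(\mbf{v'})$, and conclude that $\mbf{v'}$ must already lie in a solution plane. Your treatment is somewhat more careful than the paper's (explicitly handling the patched bounding field at the faces of $O$, the absence of $x_{\max}$, and the appeal to Lemma \ref{c14}), and you phrase the last step directly rather than by contradiction, but these are refinements of the identical argument.
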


\begin{proof}
	If $\mbf{x'}$ is a fixed point, then clearly we require $C_j(\mbf{v'}) \leq \delta$ for $\forall j$, otherwise $x'_{l,j}$ will increase. And since $\delta < \gamma$, we have $C_j(\mbf{v'}) < \gamma$, meaning that $x'_{s,j} = 0$, otherwise $x'_{s,j}$ will decrease. Since $C_j(\mbf{v}) \leq \delta < \frac{1}{2}$, $\sign(\mbf{v'})$ is a solution vector (see Proposition \ref{sign}), and $\mbf{x'} \in B(\sign(\mbf{v'}))$ (see Theorem \ref{basin_th}). If $\mbf{v'}$ is in a solution plane, then we are done; if not, then $\mbf{x'}$ necessarily evolves to a solution plane by Theorem \ref{basin_th}, meaning that it cannot be a fixed point, creating a contradiction. Therefore, $\mbf{v'}$ must already be in a solution plane to begin with.
\end{proof}

\section{Dynamic Voltage Flow}
\label{reduce}

Often times we are only interested in the convergent properties of the voltage dynamics, as they correspond directly to the state of Boolean variables. On the other hand, the evolution of the memory variables is important in influencing the trajectory of the voltages {\it indirectly} by changing the strength of the gradient-like and rigidity terms (see Eqs.~(\ref{mem})). It then makes sense to restrict our attention to only the components of the flow field that govern the dynamics of $\mbf{v}$ directly, which we can denote as $F_{\mbf{v}}$, and refer to as {\it reduced flow field} in the voltage space, or simply the {\it voltage flow}. It should be noted that this flow is not autonomous and is, in fact, dynamically governed by the memory. In this Section, we establish the tools needed to study such reduced flow field, which we will use to show certain properties such as the absence of periodic orbits (see Section \ref{no_period}) in the voltage space. For the remainder of this material, we shall assume that the full flow field is always Lipschitz continuous to simplify discussion, since we have seen in Section \ref{plane} that the existence of measure-zero discontinuities does not alter significantly the behavior of our dynamics. Often times, we will focus on the simplified dynamics as given in Eq.~(\ref{mem_sim}) and assume continuity for $\mbf{G}$ and $\mbf{C}$.

\subsection{Reduced Flow}
\label{red_flow}

In this Section, we will first factor the full phase space into the {\it reduced} space and the {\it auxiliary} space, which will allow us to formalize the notion of a reduced flow field. Visually, the reduced flow field can be viewed as the full flow field ``projected" onto a subspace. We proceed with the following series of definitions.

\begin{definition}
	Let $X$ be a set and $Y = \prod_{j=1}^{m}Y_j$ be a product space. Given any mapping $F: X \mapsto Y$, we define the {\bf decomposition} of $F$ as $F_j: X \mapsto Y_j$ for $\forall j \in [[1,m]]$ such that
	\begin{equation*}
	\forall x\in X, \qquad F(x) = \big( F_1(x), F_2(x), ... , F_m(x) \big).
	\end{equation*}
\end{definition}

\begin{definition}
	Let $F = (F_1,F_2): \mbb{R}^n\times\mbb{R}^m \mapsto \mbb{R}^n\times\mbb{R}^m$ be a flow field, and $\mbf{x}(t,\mbf{x}_0)$ be a trajectory under this flow field. If we denote $\mbb{R}^n$ as the {\bf reduced space} and $\mbb{R}^m$ as the {\bf auxiliary space}, then we define the {\bf reduced trajectory} and the {\bf auxiliary trajectory}, $\mbf{x}_1(t,\mbf{x}_0) \in \mbb{R}^n$ and $\mbf{x}_2(t,\mbf{x}_0) \in \mbb{R}^m$, such that
	\begin{equation*}
	\mbf{x}(t,\mbf{x}_0) = \big( \mbf{x}_1(t,\mbf{x}_0), \mbf{x}_2(t,\mbf{x}_0) \big).
	\end{equation*}
\end{definition}

\begin{definition}[Reduced Flow]
	Let $F = (F_1,F_2): \mbb{R}^n\times\mbb{R}^m \mapsto \mbb{R}^n\times\mbb{R}^m$ be a flow field, and $\mbf{x}(t,\mbf{x}_0)$ be a trajectory under this flow field. For a given initial point $\mbf{x_0}$, we construct the {\bf reduced flow field} $F_r: \mbb{R}\times\mbb{R}^n$ such that
	\begin{equation*}
	\forall t \geq 0, \qquad
	F_r(t,\mbf{x}_{0,1}) = F_1\big( \mbf{x}_{0,1}, \mbf{x}_2(t, \mbf{x}_0) \big).
	\end{equation*}
\end{definition}

\begin{remark}
	It can be easily verified that the reduced flow field is well defined at any given time. Visually, if we view the reduced space as a hyperplane that ``cuts" the full flow field, then the reduced flow field is simply the ``cross section" of the field in the plane. As the auxiliary variables evolve in time, the plane will move in the auxiliary space, or simply some direction orthogonal to the plane, thereby changing the cross section. We then see that the reduced flow field is effectively a dynamic flow field governed by the auxiliary trajectory.
\end{remark}

\subsection{Flow Kernel and Complement}

In this subsection, we will relate the dynamics of the reduced flow field to the auxiliary trajectory explicitly. We will focus specifically on the case where the reduced flow field is linear in the auxiliary variables, with the simplified memory dynamics in Eq.~(\ref{mem_sim}) as an example. In particular, at every point in the reduced space, the auxiliary space can be factored into two subspaces, one of them in which the auxiliary trajectory can evolve without affecting the reduced flow field. We refer to this subspace as the {\it flow kernel} of the reduced flow field at that point, and the other factor subspace as the {\it flow complement}. We formally define the two subspaces as follows:

\begin{definition}[Flow Kernel]
	\label{kernel}
	Let $F = (F_1,F_2): \mbb{R}^{n}\times \mbb{R}^m \mapsto \mbb{R}^n\times \mbb{R}^m$ be a vector field. If $F_1(\mbf{x}_1, \mbf{x}_2)$ is linear in $\mbf{x}_2$, then we define
	\begin{equation*}
	K_{F_1}(\mbf{x}_1) = \{ \mbf{x}_2 \in \mbb{R}^m \,\,\big\lvert\,\, F_1(\mbf{x}_1, \mbf{x}_2)=0 \},
	\end{equation*}
	as the {\bf flow kernel} of $F_1$ at $\mbf{x}_1$.
\end{definition}

\begin{remark}
	Clearly, the flow kernel is a vector space. In fact, given any fixed $\mbf{x}_1$, the operation $F_1(\mbf{x}_1, \mbf{x}_2)$ can be regarded as a mapping from $\mbb{R}^m$ to $\mbb{R}^n$ via an $n\times m$ matrix, with its kernel being the flow kernel. Let the rank of the matrix be $n' \leq n$. If $n'\geq m$, then clearly the kernel is trivial. On the other hand, if $n' < m$, then the dimension of the kernel is $m-n'$ by the rank-nullity theorem. Hard 3-SAT instances generally are at clause ratios near $4$, meaning that $m\approx 4n$ (if we let the voltage space be the reduced space), and the kernel is generally non-trivial. \\
\end{remark}

\begin{definition}[Flow Complement]
	\label{complement}
	Let $F = (F_1,F_2): \mbb{R}^{n}\times \mbb{R}^m \mapsto \mbb{R}^n\times \mbb{R}^m$ be a vector field. We refer to the orthogonal complement of the flow kernel at point $\mbf{x_1}$,
	\begin{equation*}
	J_{F_1}(\mbf{x}_1) = \{ \mbf{x}_2 \in \mbb{R}^m \,\,\big\lvert\,\, \forall \mbf{x}'_2 \in K_{F_1}(\mbf{x}_1), \quad \mbf{x}_2 \cdot \mbf{x}'_2 = 0 \},
	\end{equation*}
	as the {\bf flow complement} of $F_1$ at $\mbf{x}_1$.
\end{definition}

\begin{remark}
	For any fixed $\mbf{x}_1$, if the domain of $F_1(\mbf{x}_1, \mbf{x}_2)$ is restricted to $J_{F_1}(\mbf{x}_1)$, then the mapping is invertible. In other words, there is a bijection from every configuration of the auxiliary variables in the flow complement to every possible flow vector. In some sense, the component of the auxiliary variable in the flow complement space is the only relevant component generating the reduced flow.
\end{remark}

\begin{definition}[Auxiliary Relevance]
	\label{relevant}
	Let $F = (F_1,F_2): \mbb{R}^{n}\times \mbb{R}^m \mapsto \mbb{R}^n\times \mbb{R}^m$ be a vector field. Given $\mbf{x}_1 \in \mbb{R}^n$ and $\mbf{x}_2 \in \mbb{R}^m$, we refer to the projection of $\mbf{x}_2$ to $K_{F_1}(\mbf{x}_1)$ as the {\bf irrelevant component}, and the projection to $G_{F_1}(\mbf{x}_1)$ as the {\bf relevant component}, which we denote as $\mbf{x}_2^{\ast}$.
\end{definition}

\begin{remark}
	Given a reduced flow field that is linear in the auxiliary variables, it can be shown that the time derivative of the field is zero at time $t$ and location $\mbf{x}_{0,1}$, if and only if the auxiliary variable evolves in the flow kernel of $F_1$, or
	\begin{equation*}
	\dot{\mbf{x}}_2(t,\mbf{x}_0) \in K_{F_1}(\mbf{x}_1).
	\end{equation*}
	Equivalently, this means that the time derivative of the relevant component of $\mbf{x}_2$ must be zero, or
	\begin{equation*}
	\dot{\mbf{x}}^{\ast}_2(t,\mbf{x}_0) = 0.
	\end{equation*}
\end{remark}

\subsection{Unstable Non-solution Fixed Points}
\label{v_flow}

In proposition \ref{fix_sol}, it was shown that every fixed point in the full phase space $\mbb{R}^{n+2m}$ must correspond to a solution of the 3-SAT problem. However, this still leaves open the possibility of the existence of fixed points in the voltage space that correspond to a non-solution. Most of the time, when the dynamics fall into such fixed points, the memory breaks this fixed point by reweighing the clause functions, thereby evolving the reduced flow vector to a non-zero value, effectively freeing the voltage dynamics. However, in very rare instances, the memory variables may evolve in the flow kernel, in which case the voltages may be permanently trapped. Here, we show the unlikeliness of being trapped in such fixed points in {\it general}, and the {\it instability} of the gradient-like influence near fixed points. \\

For simplicity, we focus on the simplified memory dynamics as given in Eq.~(\ref{mem_sim}): \footnote{If we were to extend the analysis of this subsection to the full memory dynamics (by including the rigidity and fast memory dynamics), then the RHS to $\dot{\mbf{v}}$ can be decomposed into two terms, one quadratic in $\mbf{x}$ and the other being only dependent on $\mbf{v}$. The equation $\mbf{\dot{v}} = \mbf{0}$ would still be a polynomial equation for $\mbf{x}$, and the solution space of $\mbf{x}$ can be similarly decomposed into a hyperface defined by the corresponding algebraic variety and its complement, and the analysis in this subsection can be easily extended for the full memory dynamics as well by considering the local tangent space. }
\begin{equation*}
\begin{split}
\dot{\mbf{v}} &= \mbf{G}(\mbf{v})\mbf{x} \\
\mbf{\dot{x}} &= \alpha \mbf{C}(\mbf{v}).
\end{split}
\end{equation*}
We here temporarily relax the specific forms of functions $\mbf{C}$ and $\mbf{G}$ (Eqs. (\ref{cj}) and (\ref{G}) respectively), and simply require they be general and non-singular. A voltage fixed point means that $\dot{\mbf{v}} = \mbf{0}$, implying that the memory must be in the flow kernel, or $\mbf{x} \in K(\mbf{v})$. If the condition $\dot{\mbf{v}} = \mbf{0}$ is to be held in time (or $\ddot{\mbf{v}} = \mbf{0}$), then the memory must also evolve in the flow kernel, or $\dot{\mbf{x}} \in K(\mbf{v})$. Equivalently, $\mbf{G}(\mbf{v}) \cdot \mbf{C}(\mbf{v}) = \mbf{0}$. The LHS is simply a $\mbb{R}^n \mapsto \mbb{R}^n$ mapping, so the preimage of $\mbf{0}$ consist of finitely many points in {\it general}, and they constitute a measure-zero set in $\mbb{R}^n$. This shows the unlikeliness of the dynamics being trapped in a non-solution fixed point.\\

To show that the gradient-like influence is unstable, we first note that a Jacobian element of the gradient-like term can be written as
\begin{equation*}
\begin{split}
\mcr{J}_{ij} = \sum_{k} x_k \partial_{v_i}G_{jk}(\mbf{v}) 
&= \frac{1}{2} \sum_k x_k q_{jk} \partial_{v_i} \min_{\{j' \neq j \,|\, q_{j'k} \neq 0\}} (1 - q_{j'k}v_{j'}),
\end{split}
\end{equation*}
where the last equality is from Eq. (\ref{G}), and derivations across the discontinuous hyperplanes are neglected. In this form, it is clear that the diagonal elements of the Jacobian are zero, or $\mcr{J}_{ii} = 0$ for $\forall i$. To see this, we simply note that $\partial_{v_i} v_{j'} = \delta_{ij'}$, and the condition $i \neq j'$ imposed by the $\min$ function. This means that the trace of the Jacobian is zero, meaning that any fixed point cannot be stable (otherwise the Jacobian would necessarily be negative in the real component of the trace). 

\section{Non-periodicity of Dynamics}
\label{no_period}

In dimensions greater than 2, a dissipative system\footnote{See Section \ref{diss} for a detailed discussion of the dissipativeness of the memory dynamics.} may admit periodic orbits. Therefore, we shall show the absence of periodic orbits explicitly in this Section. This result directly precludes the possibility of chaos (see Section \ref{chaos}). We formulate the proof of non-periodicity on the voltage space by making use of the formalism developed in Section \ref{reduce}. Note that showing the absence of periodic orbits in the full state space (voltages plus memories) is not sufficient for our purpose as it does not preclude the existence of periodic orbits in the reduced voltage space, which is directly relevant to the solution of the 3-SAT problem. For analytic convenience, we shall assume all mentioned fields in this Section is sufficiently well-behaved (i.e., Lipschitz continuous in space and continuous in time) such that it admits a unique classical solution for all initial values.

\subsection{Generalized Periodicity}

As the voltage dynamics by itself is not autonomous (since it is memory dependent), we first have to construct a non-standard definition of periodicity for dynamic fields that suffices in the context of optimization. In general, a dynamic field admits periodic orbits of non-constant periods. We first recall that the classical definition of periodicity for static fields is given as follows, and generalize this definition for dynamic fields.

\begin{definition}[Regular Periodic Orbit]
	Let $\mbf{x}: [0,+\infty) \mapsto \mbb{R}^n$ be a trajectory with initial value $\mbf{x}_0$. The trajectory is said to be {\bf periodic} if $\exists T > 0$ such that
	\begin{equation*}
	\forall t \geq 0, \qquad \mbf{x}(t + T, \mbf{x_0}) = \mbf{x}(t, \mbf{x_0}).
	\end{equation*}
	The {\bf periodic orbit} of $\mbf{x}_0$ is
	\begin{equation*}
	\gamma_x = \{ \mbf{x}(t, \mbf{x_0}) \,\,\big\lvert\,\, t\in [0,T) \},
	\end{equation*}
	and the {\bf period} of this orbit is $T$.
\end{definition}

\begin{remark}
	It is fairly easy to show the following
	\begin{equation*}
	\begin{split}
	\gamma_x 
	&= \{ \mbf{x}(t) \,\,\big\lvert\,\, t\in [0,T) \} \\
	&= \{ \mbf{x}(t) \,\,\big\lvert\,\, t\geq 0\}
	= \gamma_x^{+},
	\end{split}
	\end{equation*}
	meaning that the periodic orbit is also the maximal positive orbit of $\mbf{x}_0$, which makes sense because the trajectory cannot escape the periodic orbit even given infinite time. This property generalizes naturally for dynamic fields.
\end{remark}

\begin{definition}[Speed]
	\label{speed}
	Let $\mbf{x}: [0,+\infty) \mapsto \mbb{R}^n$ be some trajectory with initial value $\mbf{x}_0$. If the trajectory is everywhere differentiable in time, we define the {\bf velocity} along the trajectory to be
	\begin{equation*}
	\dot{\mbf{x}}(t,\mbf{x}_0),
	\end{equation*}
	and the {\bf speed} to be
	\begin{equation*}
	s(t,\mbf{x}_0) = || \mbf{\dot{x}}(t,\mbf{x}_0) ||.
	\end{equation*}
\end{definition}

\begin{remark}
	Clearly, the velocity and speed of the trajectory is also periodic with the same period as the trajectory itself. If the trajectory is governed by the flow field $F$, then the period of the orbit is given by the following contour integral
	\begin{equation*}
	T = \oint\limits_{\gamma} \frac{|| d\mbf{x} ||}{|| F(\mbf{x}) ||}.
	\end{equation*}
	This integral is well-defined for a static flow field, but it is no longer well defined if $F$ is explicitly time dependent, in which case the period may be time-dependent as well.
\end{remark}

\begin{definition}[General Periodic Orbit]
	\label{gen_per}
	Given a time-dependent flow field $F: \mbb{R} \times \mbb{R}^n \mapsto \mbb{R}^n$, a {\bf general periodic orbit} is said to exist for $\mbf{x_0}$ if $\exists T$ such that
	\begin{equation*}
	\mbf{x}(T,\mbf{x_0}) = \mbf{x_0}.
	\end{equation*}
	Let the periodic orbit be $\gamma_x = \{ \mbf{x}(t) \,|\, t\in [0, T) \}$, then it is required that $\mbf{x}(t,0) \in \gamma_x$ for $\forall t > 0$. Furthermore, $\exists (t_1,t_2) \in \{ (s_1,s_2)\in [0,+\infty)^2 \,|\, s_1 \neq s_2 \}$ such that
	\begin{equation*}
	\mbf{x}(t_1) = \mbf{x}(t_2) \quad \land \quad \dot{\mbf{x}}(t_1), \dot{\mbf{x}}(t_2) \neq 0
	\end{equation*}
	
	For a given time $t$, we let
	\begin{equation*}
	T'(t) = \inf\{ t'>t \,\,\big\lvert\,\, \mbf{x}_1(t',\mbf{x}_0) = \mbf{x}_1(t, \mbf{x}_0) \},
	\end{equation*}
	then the {\bf period} at time $t$ is given as $T(t) = T'(t) - t$. If $T(t)$ is a constant in time, then the periodic orbit is said to be {\bf regular}; otherwise, the periodic orbit is said to be {\bf irregular}.
\end{definition}

\begin{remark}
	Essentially, a general periodic orbit is a closed trajectory which contains the maximal positive solution. Furthermore, there must be a point and its neighborhood on the orbit which the trajectory visits two separate times, with the period simply being the time duration until the next visit. Technically, the period can be zero if the flow field is zero at that particular time and point and infinite if the trajectory never revisits the point, but there must be at least one point in time where the period is positive finite.
\end{remark}

\begin{lemma}
	\label{int_aux}
	If a dynamic flow field $F: \mbb{R} \times \mbb{R}^n \mapsto \mbb{R}^n$ admits an irregular periodic orbit, then $\exists \mbf{x}_0$ such that $\exists t_1 \geq 0$, $\exists t_2 \in \{ t'>t_1 \,\,\big\lvert\,\, \mbf{x}(t',\mbf{x_0}) = \mbf{x}(t_1,\mbf{x_0}) \}$,
	\begin{equation*}
	\exists k \neq 1
	\qquad
	F(t,\mbf{x}(t_2)) = kF(t,\mbf{x}(t_1)) \neq 0.
	\end{equation*}
\end{lemma}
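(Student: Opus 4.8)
The plan is to read the hypothesis through Definition \ref{gen_per}: an irregular periodic orbit is a closed curve $\gamma_x$ that the trajectory $\mbf{x}(t,\mbf{x_0})$ retraces, overlapping itself along an entire neighborhood (arc) visited at two separate times, while its return-time function $T(t)$ is \emph{not} constant. I would take $\mbf{x_0}$ to be the base point of this orbit and extract from the overlapping arc a location visited at two instants $t_1<t_2$ whose velocity vectors are parallel but unequal and both nonzero; since $\dot{\mbf{x}}(t_i)=F(t_i,\mbf{x}(t_i))$ and $\mbf{x}(t_1)=\mbf{x}(t_2)$, this is exactly the asserted relation $F(t_2,\mbf{x}(t_2))=k\,F(t_1,\mbf{x}(t_1))\neq0$ with $k\neq1$.

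First I would establish parallelism by tangency. Because the trajectory is confined to the fixed one‑dimensional curve $\gamma_x$, its velocity is everywhere tangent to $\gamma_x$; in particular, on the doubly-traversed arc the two passages share the same tangent line, so the velocities at a common location are colinear. Wherever the earlier speed is nonzero there is therefore a scalar $k$ with $F(t_2,\cdot)=k\,F(t_1,\cdot)$. The definition of a general periodic orbit already supplies a revisited point with both velocities nonzero, so I would work on the open subarc on which the speed stays bounded away from zero; there $k$ is finite and nonzero, and exhibiting $k\neq1$ on this subarc automatically makes both $F(t_1,\cdot)$ and $F(t_2,\cdot)$ nonzero, as required.

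The crux is forcing $k\neq1$, which I would do by contraposition using the one-dimensionality of $\gamma_x$. The motion reduces to a single signed speed $v(t)$ along the curve, and at a revisit of a location $\mbf{p}$ one has precisely $k=v(t_2)/v(t_1)$. Suppose $k\equiv1$ at every revisited location; then the signed speed observed at a given position is the same on every pass, so $v$ is a function of position alone and the induced dynamics along the arc length is autonomous. Its return time, given by the contour integral $\oint \|d\mbf{x}\|/\|\dot{\mbf{x}}\|$ of the Remark following Definition \ref{speed}, is then independent of the starting instant, i.e. $T(t)$ is constant, contradicting irregularity. Hence some revisited location on the nonzero-speed subarc must carry $k\neq1$, completing the construction.

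The main obstacle I anticipate is the geometry of $\gamma_x$ rather than any computation: I must (i) guarantee the tangent direction is well defined where the two passes meet, which is why the overlapping-\emph{neighborhood} clause of Definition \ref{gen_per} is essential — it rules out transversal self-crossings that would break colinearity — and (ii) control the turning points where $v(t)=0$, which is handled by restricting the whole argument to the open subarc where the speed is bounded away from zero so that the ratio $k$ is genuinely well defined and the period integral converges. Establishing rigorously that $k\equiv1$ on this subarc upgrades to the global statement ``speed is a function of position,'' and thence to a constant return time, is the delicate step; I would lean on continuity of $F$ (Lipschitz in space, continuous in time, as assumed at the start of Section \ref{no_period}) to pass from the pointwise identity to the functional identity.
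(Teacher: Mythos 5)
You should know at the outset that the paper does not actually prove Lemma \ref{int_aux}: its proof field reads ``The proof is omitted. See remark instead,'' and the remark only records the intuition that the field must align (or anti-align) with itself at a revisited point. Your proposal is therefore an independent reconstruction rather than a parallel of any argument in the paper. Its first half is sound and formalizes that intuition correctly: on a doubly-traversed arc of $\gamma_x$ (note that the ``overlapping neighborhood'' is asserted only in the paper's remark following Definition \ref{gen_per}, not in the formal clauses of the definition itself), both passages are tangent to the same embedded curve, so at a common point with nonzero speeds the two velocities are colinear and the ratio $k$ is well defined.

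The genuine gap is exactly the step you flag as delicate, and it cannot be closed by continuity of $F$. Contraposition gives you $k \equiv 1$ only at revisited points where the two velocities are parallel and nonzero; on your chosen subarc this synchronizes the two passes (they become time translates of each other with a fixed lag), but it says nothing about the rest of the circuit, and $T(t)$ is determined by the whole circuit. The hypothesis is vacuous at precisely the two kinds of points your restriction excludes: transversal self-intersections of $\gamma_x$ (where the two velocities lie on different branches, hence are non-parallel) and zero-velocity points. Irregularity can live entirely at such points: a trajectory that runs an identical speed profile along a circular orbit on every circuit, but dwells at a single zero-velocity point for a different duration on each pass, has non-constant $T(t)$ --- hence an irregular orbit in the sense of Definition \ref{gen_per} --- while every revisited point with parallel nonzero velocities has $k=1$. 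So the implication ``$k\equiv 1$ on the subarc $\Rightarrow$ speed is a function of position $\Rightarrow$ $T(t)$ constant'' is not merely delicate; it fails for general dynamic fields, which means the lemma as stated needs extra hypotheses and your proof strategy cannot be completed without them. What rescues the argument in the paper's actual application (Theorem \ref{no_irregular}) is that there the speed is bounded below along the orbit (the paper assumes $|\mbf{C}(\mbf{v})| \geq \delta$), which excludes dwell points; with that added, plus the observation that velocity continuity forces the trajectory to follow the same local branch at any self-crossing (so the itinerary around $\gamma_x$ is identical on every circuit), your synchronization argument does propagate around the full circuit and the contradiction goes through.
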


\begin{proof}
	The proof is omitted. See remark instead.
\end{proof}

\begin{remark}
	Essentially, there must be at least one point on the periodic orbit where the dynamic flow field align (or anti-align) with itself at two separate times. This is clearly required so that the trajectory can ``revisit" the orbit in the neighborhood of that point.
\end{remark}

\begin{corollary}[Change in Relevant Component]
	\label{change_relevant}
	Given a static flow field $F = (F_1, F_2): \mbb{R}^n\times \mbb{R}^m \mapsto \mbb{R}^n\times \mbb{R}^m$, if $F_1(\mbf{x}_1, \mbf{x}_2)$ is linear in $\mbf{x}_2$ and an irregular periodic orbit $\gamma$ exists in $\mbb{R}^n$, then $\exists \mbf{x}_0 \in \mbb{R}^{n+m}$ such that $\exists t_1 \geq 0$, $\exists t_2 \in \{ t'>t_1 \,\,\big\lvert\,\, \mbf{x_1}(t',\mbf{x_0}) = \mbf{x_1}(t_1,\mbf{x_0}) \}$,
	\begin{equation*}
	\exists k \neq 1
	\qquad
	\mbf{x_2}^{\ast}(t,\mbf{x_0}) = k\mbf{x_2}^{\ast}(t,\mbf{x_0}) \neq 0.
	\end{equation*}
	where $\mbf{x}_2^{\ast}(t, \mbf{x}_0 )$ is the relevant component of $\mbf{x}_2$ at time $t$ as defined in Definition \ref{relevant}.
\end{corollary}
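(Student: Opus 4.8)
The plan is to obtain this corollary as a direct specialization of Lemma \ref{int_aux} to the reduced voltage dynamics. The key observation is that the irregular periodic orbit $\gamma$ living in the reduced space $\mbb{R}^n$ is, by construction, a periodic orbit of the \emph{dynamic} reduced flow field $F_r$. Indeed, the velocity along the reduced trajectory at time $t$ is $\dot{\mbf{x}}_1(t) = F_1\big(\mbf{x}_1(t), \mbf{x}_2(t)\big) = F_r\big(t, \mbf{x}_1(t)\big)$, so $F_r$ is a time-dependent flow field on $\mbb{R}^n$ of exactly the type to which Lemma \ref{int_aux} applies. First I would apply that lemma to $F_r$ to secure an initial point $\mbf{x}_0$ and two times $t_1 \geq 0$, $t_2 > t_1$ with $\mbf{x}_1(t_2, \mbf{x}_0) = \mbf{x}_1(t_1, \mbf{x}_0) =: \mbf{p}$, together with a scalar $k \neq 1$ satisfying $F_r(t_2, \mbf{p}) = k\, F_r(t_1, \mbf{p}) \neq 0$.

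The next step translates this statement about the reduced flow into one about the relevant component of the auxiliary variable. Writing $F_r(t_i, \mbf{p}) = F_1\big(\mbf{p}, \mbf{x}_2(t_i)\big)$ and invoking the linearity of $F_1(\mbf{p}, \cdot)$ in the auxiliary variable (Definition \ref{kernel}), the kernel part of $\mbf{x}_2$ is annihilated, so $F_1\big(\mbf{p}, \mbf{x}_2(t_i)\big) = F_1\big(\mbf{p}, \mbf{x}_2^{\ast}(t_i)\big)$, where $\mbf{x}_2^{\ast}$ is the projection onto the flow complement $J_{F_1}(\mbf{p})$ (Definitions \ref{complement} and \ref{relevant}). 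Combining with the conclusion of Lemma \ref{int_aux} and again using linearity gives
\begin{equation*}
F_1\big(\mbf{p}, \mbf{x}_2^{\ast}(t_2)\big) = F_1\big(\mbf{p}, k\, \mbf{x}_2^{\ast}(t_1)\big) \neq 0.
\end{equation*}

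Finally, I would invoke the injectivity of $F_1(\mbf{p}, \cdot)$ restricted to the flow complement $J_{F_1}(\mbf{p})$, which is precisely the bijection recorded in the remark following Definition \ref{complement}: since $F_1(\mbf{p}, \cdot)$ is linear with kernel $K_{F_1}(\mbf{p})$, its restriction to the orthogonal complement $J_{F_1}(\mbf{p})$ is a linear isomorphism onto its image. Because the flow complement depends only on the reduced position $\mbf{p}$ and is a vector space, both $\mbf{x}_2^{\ast}(t_2)$ and $k\,\mbf{x}_2^{\ast}(t_1)$ lie in $J_{F_1}(\mbf{p})$, and the injectivity then forces $\mbf{x}_2^{\ast}(t_2) = k\,\mbf{x}_2^{\ast}(t_1) \neq 0$ with $k \neq 1$. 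This is the assertion, correcting the evident typo in the displayed equation of the corollary, where the two time arguments should read $t_2$ and $t_1$ rather than the single $t$.

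I expect the only genuine subtlety to be the bookkeeping around the fact that both $t_1$ and $t_2$ correspond to the same reduced position $\mbf{p}$, so that a single flow complement $J_{F_1}(\mbf{p})$ governs both relevant components and the injective linear map can be applied simultaneously to $\mbf{x}_2^{\ast}(t_2)$ and $k\,\mbf{x}_2^{\ast}(t_1)$. The main conceptual content — that an irregular period, i.e. a time-varying traversal speed, must manifest as a true rescaling of the relevant auxiliary component rather than as mere drift within the flow kernel — is exactly what Lemma \ref{int_aux} supplies. Everything beyond that is a routine consequence of linearity and the rank--nullity decomposition already established, so the corollary amounts to transporting the conclusion of Lemma \ref{int_aux} through the linear isomorphism between the flow complement and the image of $F_1(\mbf{p}, \cdot)$.
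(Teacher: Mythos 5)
Your proposal is correct and takes essentially the same approach as the paper: the paper's own proof is a one-line appeal to Lemma \ref{int_aux} together with Definition \ref{relevant}, and your argument is exactly that appeal written out in full --- apply Lemma \ref{int_aux} to the reduced flow $F_r$, then use linearity of $F_1(\mbf{p},\cdot)$ and its invertibility on the flow complement $J_{F_1}(\mbf{p})$ (the remark following Definition \ref{complement}) to transfer the conclusion to the relevant components, with the key bookkeeping point that $\mbf{x}_1(t_1)=\mbf{x}_1(t_2)=\mbf{p}$ means a single flow complement governs both times. Your reading of the typo in the displayed conclusion (the two time arguments should be $t_2$ and $t_1$) matches the intended statement.
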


\begin{proof}
	The proof follows directly from Definition \ref{relevant} and Lemma \ref{change_relevant}.
\end{proof}

\begin{remark}
	In terms of the relevant component of the auxiliary variable, the periodic orbit is regular if and only if
	\begin{equation*}
	\forall t \geq 0,\, \exists T>0 \qquad
	\mbf{x}^{\ast}_2\big( t + T, \mbf{x}_0 \big) = \mbf{x}_2^{\ast}(t,\mbf{x}_0).
	\end{equation*}
\end{remark}

\subsection{Absence of Irregular Periodic Orbits}

In the previous subsection, we have seen that a periodic orbit under a dynamic field can be categorized as either being regular or irregular. To show the absence of periodic orbits, we treat the two cases separately, as they require different proof techniques. In this subsection, we focus on the irregular case, which requires the physical notion of speed as defined in definition \ref{speed}; we treat the regular case in the next subsection, by formulating the problem in the geometric context of hypersurface intersections. \\

For the sake of simplicity, we again focus on the simplified dynamics as given in Eq.~(\ref{mem_sim})\footnote{The rigidity influence is negligible in the periodicity analysis as the dynamics is dominated by the gradient-like term when the system is continuously in an unsatisfied state, which is clearly the case when the dynamics is trapped in a periodic orbit. },
\begin{equation*}
\begin{split}
\dot{\mbf{v}} &= \mbf{G}(\mbf{v}) \mbf{x} \\
\mbf{\dot{x}} &= \alpha \mbf{C}(\mbf{v}).
\end{split}
\end{equation*}
We require the functions $\mbf{C}$ and $\mbf{G}$ to be everywhere differentiable\footnote{Note that the actual gradient-like term $\mbf{G}$ defined in Eq. (\ref{G}) is everywhere differentiable except at certain hyperplanes which constitute a measure-zero set in the voltage space (see section \ref{plane}). It is easy to see that the presence of these hyperplanes will not affect the periodicity analysis.} (which automatically guarantees Lipscthiz continuity). This also guarantees that any image of a compact set in $\mbb{R}^n$ is bounded above in norm. For $\mbf{C}$, we can assume that it is bounded below in norm also, otherwise $\mbf{C}=0$ implies that the trajectory is in a solution plane in which case it must converge to a fixed point (see Proposition \ref{fix_sol})\footnote{In fact, it can be assume that $|\mbf{C}(\mbf{v})| \geq \delta$ for the full equations by Proposition \ref{basin_th}.}. To make the definition of speed (see Definition \ref{speed}) useful, we first have to formally define the generalized concept of {\it location} for trajectories governed by a dynamic field.
\begin{definition}[Location]
	\label{location}
	Let $\mbf{x}(t,\mbf{x_0})$ be a classical solution to a flow field $F: \mbb{R}\times\mbb{R}^n \mapsto \mbb{R}^n$, then $\forall t' \in \{ t\in\mbb{R} \,|\, \mbf{x}(t,\mbf{x_0}) \neq 0 \}$, there is $\exists \delta t > 0$ such that a unique isometry $\varphi: \mbf{x}\big( [t'-\delta t,t'+\delta t], \mbf{x_0} \big) \mapsto \mbb{R}$ exists locally. For $\forall t \in [t'-\delta t,t+\delta t]$, we refer to $u(t) = \varphi(t)$ as the {\bf location} on the trajectory around time $t'$.
\end{definition}

\begin{remark}
	The location is quite literally the location on the real line if we unwind the trajectory locally to a straight line. As long as the trajectory keeps moving ``forward" in some time interval, then the mapping $\varphi$ is bijective, meaning that there is a one-to-one correspondence between time and location. This is why the condition $\mbf{x}(t,\mbf{x_0}) \neq 0$ is required locally.
\end{remark}

\begin{lemma}
	\label{s_diff}
	In a time interval in which location can be defined, the speed is differentiable with respect to location in the interval. In other words, the mapping $s \circ u^{-1}$ is locally differentiable.
\end{lemma}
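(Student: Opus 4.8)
The plan is to reduce the claim to the differentiability of the speed with respect to \emph{time}, and then transfer this to differentiability with respect to \emph{location} via the inverse function theorem. First I would observe that, on an interval where the location is well defined, $u$ is nothing but an arc-length parametrization of the trajectory: since $\varphi$ is an isometry onto $\mbb{R}$, the location satisfies $u(t) = \pm\int_{t'}^{t} \norm{\dot{\mbf{x}}(\tau,\mbf{x}_0)}\,d\tau + c = \pm\int_{t'}^{t} s(\tau,\mbf{x}_0)\,d\tau + c$ for a constant $c$, and choosing the orientation so that $u$ increases gives $\frac{du}{dt} = s(t,\mbf{x}_0)$. By Definition \ref{location}, the interval is chosen so that the velocity is nonzero, hence $s(t,\mbf{x}_0) > 0$ throughout. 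Therefore $u$ is strictly monotone and continuously differentiable in $t$, and by the inverse function theorem $u^{-1}$ is differentiable with $\frac{dt}{du} = 1/s > 0$.

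Next I would establish that the speed $s$ is itself differentiable with respect to $t$. The velocity along the trajectory is $\dot{\mbf{x}}(t,\mbf{x}_0) = F(\mbf{x}(t,\mbf{x}_0))$, and in the setting of the simplified dynamics of Eq.~(\ref{mem_sim}), where $\mbf{G}$ and $\mbf{C}$ are assumed everywhere differentiable, the flow field $F$ is differentiable away from the measure-zero discontinuity hyperplanes of Section \ref{plane}. Composing the differentiable $F$ with the differentiable trajectory $\mbf{x}(t,\mbf{x}_0)$ then shows that the velocity is differentiable in $t$; equivalently, $\ddot{\mbf{x}}$ exists and equals $DF(\mbf{x})\,\dot{\mbf{x}}$. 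Because the Euclidean norm is differentiable everywhere except at the origin and $\dot{\mbf{x}}(t,\mbf{x}_0)\neq 0$ on the interval, the chain rule yields that $s(t,\mbf{x}_0) = \norm{\dot{\mbf{x}}(t,\mbf{x}_0)}$ is differentiable in $t$, with $\frac{ds}{dt} = \braket{\dot{\mbf{x}},\ddot{\mbf{x}}}/\norm{\dot{\mbf{x}}}$.

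Finally I would conclude by composition: $s\circ u^{-1}$ is the composition of the $t$-differentiable map $s$ with the differentiable inverse $u^{-1}$, so it is differentiable, with $\frac{ds}{du} = \frac{1}{s}\frac{ds}{dt}$, which is precisely the assertion of the lemma.

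The main obstacle I expect is the step asserting that $s$ is differentiable in $t$, which hinges on the trajectory being twice differentiable, and this is only guaranteed away from the discontinuity hyperplanes where $F$ fails to be $C^1$. The cleanest way to handle this is to shrink $\delta t$ in Definition \ref{location} so that the location interval contains no crossing of such a hyperplane, which is always possible since the crossings occur at isolated times (the hyperplanes form a measure-zero set by Section \ref{plane}) and the trajectory is continuous. A secondary point requiring care is ensuring $\dot{\mbf{x}}\neq 0$ on a full neighborhood, not merely at the base point, so that $1/s$ stays bounded and the norm remains smooth along the whole segment; this follows from continuity of $s$ together with the defining nonvanishing condition of the location interval.
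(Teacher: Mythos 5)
Your proposal is correct and follows essentially the same route as the paper's proof: both establish $u'(t)=s(t)\neq 0$ so that $u^{-1}$ is differentiable, obtain differentiability of $s$ in $t$ from the second derivative $\ddot{\mbf{x}} = DF(\mbf{x})\,\dot{\mbf{x}}$ (the paper writes this explicitly as $\ddot{\mbf{v}} = \dot{\mbf{v}}\mbf{G}'(\mbf{v})\mbf{x} + \mbf{G}(\mbf{v})\dot{\mbf{x}}$ for the simplified dynamics with everywhere-differentiable $\mbf{G}$), and conclude by composition. Your extra care about staying away from the discontinuity hyperplanes and keeping $s$ bounded away from zero on the whole interval only tightens details the paper leaves implicit.
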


\begin{proof}
	First of all, we have $s(t) = || \dot{\mbf{v}}(t) || \neq 0$ in the time interval, and we note that
	\begin{equation*}
	\ddot{\mbf{v}} = \frac{d}{dt} \Big( \mbf{G}(\mbf{v})\mbf{x} \Big) = \dot{\mbf{v}}\mbf{G}'(\mbf{v})\mbf{x} + \mbf{G}(\mbf{v})\dot{\mbf{x}},
	\end{equation*}
	which is well-defined as $\mbf{G}$ is everywhere differentiable, meaning that $s(t)$ is differentiable with respect to $t$ (as long as $s(t)\neq 0$). Furthermore, we note that $u'(t) = s(t)$, meaning that $u$ is also differentiable with respect to time. Since $s(t) \neq 0$, the inverse $u^{-1}$ is differentiable in the interval as well. Therefore, $s \circ u^{-1}$ is differentiable in the interval, as the composition of two differentiable mappings.
\end{proof}

\begin{theorem}
	\label{no_irregular}
	An irregular orbit does not exist in the voltage space.
\end{theorem}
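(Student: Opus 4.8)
The plan is to argue by contradiction: assume an irregular orbit $\gamma$ exists in the voltage space under the simplified dynamics~(\ref{mem_sim}), and extract from its irregularity a scaling of the speed at a recurrence point that cannot be reconciled with the way the memory accumulates. By Corollary~\ref{change_relevant} (equivalently Lemma~\ref{int_aux}), irregularity supplies a point $\mbf{v}^\ast \in \gamma$ that the voltage trajectory revisits at times $t_1 < t_2$ with the relevant component of the memory scaling by a factor $k \neq 1$, so that $\dot{\mbf{v}}(t_2) = k\,\dot{\mbf{v}}(t_1)$ at $\mbf{v}^\ast$ and hence, in the language of Definition~\ref{speed}, the speed satisfies $s(t_2) = |k|\,s(t_1) \neq s(t_1)$. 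Iterating this recurrence over successive passages, and using Lemma~\ref{s_diff} (differentiability of speed with respect to location, Definition~\ref{location}) together with the fact that $\gamma$ is a fixed closed curve, I would first promote this single scaling into a geometric sequence $s_n = |k|^{n} s_0$ for the speed at $\mbf{v}^\ast$ on the $n$-th loop.

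Next I would set up two independent estimates of the per-loop data that scale oppositely in $n$. On one hand, since $\dot{\mbf{x}} = \alpha\,\mbf{C}(\mbf{v})$ with every $C_j \in [0,1]$ and $|\mbf{C}| \geq \delta$ on $\gamma$ (the orbit lies off any solution plane, by Proposition~\ref{fix_sol}), the long-term memory is monotonically increasing at a rate bounded above and below by positive constants; thus $\|\mbf{x}(t)\|$ grows at most linearly in $t$, and the memory increment accumulated over one loop is comparable to that loop's duration, $\Delta_n \asymp T_n$. On the other hand, because the reduced flow is linear in the memory, the speed at each point of $\gamma$ during the $n$-th loop is of order $\|\mbf{x}\| \asymp |k|^{n}$, so the time to traverse the fixed arc length $L$ of $\gamma$ obeys $T_n = \oint_\gamma d\ell / s \asymp |k|^{-n}$. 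Feeding in the geometric growth of the memory forced by $s_n = |k|^{n} s_0$, the accumulation side gives $\Delta_n \asymp |k|^{n}$ while the arc-length side gives $\Delta_n \asymp |k|^{-n}$, an impossibility for $k \neq 1$. The case $|k| < 1$ is in fact excluded even more directly, since a global contraction of the memory by a factor $k$ per loop contradicts its strict monotone growth.

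The hard part will be the step that controls the speed uniformly around the whole loop rather than only at the single recurrence point $\mbf{v}^\ast$: the memory $\mbf{x}$ is a single global vector, but its contribution to the voltage speed at a point $\mbf{p} \in \gamma$ is mediated by the position-dependent flow kernel $K_{F_1}(\mbf{p})$ and complement $J_{F_1}(\mbf{p})$ of Definitions~\ref{kernel} and~\ref{complement}, so a priori $\mbf{x}$ could align with $\ker \mbf{G}(\mbf{p})$ at some points and render the speed there anomalously small, breaking the estimate $s \asymp \|\mbf{x}\|$ that the arc-length bound relies on. Resolving this requires the non-singularity assumption placed on $\mbf{G}$ and $\mbf{C}$ for the simplified dynamics, together with the differentiability of speed in location (Lemma~\ref{s_diff}), to show that the relevant component of the memory, and hence the speed, is bounded below by a fixed positive multiple of $\|\mbf{x}\|$ along all of $\gamma$; equivalently, that the direction of $\mbf{x}(t)$ cannot converge into the common kernel $\bigcap_{\mbf{p} \in \gamma} \ker \mbf{G}(\mbf{p})$, which is trivial for a generic orbit. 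Once this uniform lower bound is established, the two opposing geometric estimates close the contradiction and show that no irregular orbit can exist.
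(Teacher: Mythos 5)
Your argument has a fatal gap at the step you treat as routine: promoting the single recurrence event supplied by Lemma~\ref{int_aux} into the geometric sequence $s_n = |k|^n s_0$. Irregularity (Definition~\ref{gen_per}) means only that the period $T(t)$ is not constant in time; equivalently, Lemma~\ref{int_aux} guarantees \emph{one} pair of times $t_1 < t_2$ at which the velocities at the recurrence point differ by a factor $k \neq 1$. It does not say the speed changes by the same factor on every passage: the ratios $k_n = s_{n+1}/s_n$ on successive loops may vary and may tend to $1$. Without the geometric hypothesis, your two estimates do not oppose each other; in fact they are simultaneously satisfiable. Take $s_n \asymp \|\mbf{x}(t_n)\|$ (your uniform bound), loop durations $T_n \asymp L/s_n$, and memory increments per loop $\asymp T_n$; then $\|\mbf{x}(t_{n+1})\| - \|\mbf{x}(t_n)\| \asymp 1/\|\mbf{x}(t_n)\|$, which integrates to $\|\mbf{x}(t_n)\| \asymp \sqrt{n}$, $s_n \asymp \sqrt{n}$, $T_n \asymp 1/\sqrt{n}$. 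Every ingredient you invoke holds in this scenario, the periods strictly decrease (so the orbit is irregular), all ratios $k_n \to 1$, and no contradiction appears. So your scheme cannot exclude irregular orbits whose speed varies sub-geometrically, which is precisely the case left open. A second, smaller error: your ``more direct'' exclusion of $|k|<1$ conflates speed with memory. The simplified dynamics give $\dot{\mbf{x}} = \alpha\mbf{C}(\mbf{v}) \geq 0$, so $\mbf{x}$ grows componentwise, but the speed at the recurrence point is $\|\mbf{G}(\mbf{v}^\ast)\mbf{x}\|$ and $\mbf{G}$ has entries of both signs (Eq.~(\ref{G})); that norm can shrink while $\mbf{x}$ grows, so monotonicity of the memory does not by itself forbid $k<1$.

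The paper's proof takes an entirely different, purely local route that sidesteps both your iteration step and the ``hard part'' you flag (the uniform lower bound on speed along the whole loop, which you leave unproven and which would itself need a genericity argument). Working only in an infinitesimal neighborhood of the single recurrence point, and using Lemma~\ref{s_diff}, it writes the law-of-cosines relation
\begin{equation*}
\big(s'(u_0)\big)^2 = \norm{\mbf{G}'(u_0)\mbf{x}(t_i) + \tfrac{\alpha}{s_i}\mbf{G}(u_0)\mbf{C}(u_0)}^2 - s_i^2\big(\theta'(u_0)\big)^2
\end{equation*}
at the two visit times $t_1,t_2$, exploits that the direction-change rate $\theta'(u_0)$ is a property of the fixed closed curve (hence common to both passages), normalizes $s'(u_0)=0$ at the slower passage, and deduces $(s'(u_0))^2 < 0$ at the faster one --- an immediate impossibility. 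No iteration over loops, no control of the speed away from the recurrence point, and no lower bound on the relevant component of the memory are required. To salvage a global argument along your lines, you would first have to prove that irregularity forces speed ratios bounded away from $1$ infinitely often, which is false in general; the local route is the one that closes.
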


\begin{proof}
	We prove this by contradiction, by assuming that an irregular orbit does exist. Then by Lemma \ref{int_aux}, there is $\exists \mbf{v_0}$ such that $k\dot{\mbf{v}}(t_1) =  \dot{\mbf{v}}(t_2) \neq 0$, where $t_2 \neq t_1$, $\mbf{v}(t_1) = \mbf{v}(t_2) = \mbf{v_0}$, and WLOG $k \in (0,1)$. We let $u_0$ be the location of $\mbf{v_0}$, $\delta u$ be the infinitesimal change in location from $u_0$. Furthermore, we let $s_1$ be the speed at time $t_1$, and $\delta s$ be the change in speed with respect to $\delta u$ (which is well-defined as shown in Lemma \ref{s_diff}). If we let $\delta \theta$ be the change in direction, then we have the following equality
	\begin{equation*}
	\begin{split}
	(s_1+\delta s)^2 + s_1^2 - 2s_1(s_1+\delta s)\cos(\delta \theta) &= || \dot{\mbf{v}}'(u_0) ||^2 \delta u^2 \\
	\iff \quad s_1^2 \delta\theta^2 + \delta s^2 &= \norm{ \mbf{G}'(u_0)\mbf{x}(t_1) + \frac{\alpha}{s_1}\mbf{G}(u_0)\mbf{C}(u_0) }^2 \delta u^2,
	\end{split}
	\end{equation*}
	where we discarded third order terms on the LHS and applied Eq. (\ref{mem_sim}) and $u'(t)=s(t)$ on the RHS. Rearranging the terms gives us
	\begin{equation*}
	\big( s'(u_0) \big)^2
	= \norm{ \mbf{G}'(u_0)\mbf{x}(t_1) + \frac{\alpha}{s_1}\mbf{G}(u_0)\mbf{C}(u_0) }^2
	- s_1^2 \big( \theta'(u_0) \big)^2.
	\end{equation*}
	
	If we let the speed at time $t_2$ be $s_2$ (with $s_2 = ks_1$), then the above relation will hold similarly. We can assume that $s'(u_0)=0$ at $t_2$, which is justified as $s$ is bounded and differentiable everywhere. Since $s_1 > s_2$, and $\{\mbf{x},\mbf{C},\mbf{G}\}$ are everywhere differentiable, we must have $\big(s'(u_0)\big)^2 < 0$ at time $t_1$ by the above relationship, which is clearly impossible. Therefore, an irregular orbit cannot exist.
	
\end{proof}

\subsection{Absence of Regular Periodic Orbits}

In the previous subsection, we showed the absence of irregular orbits, so if a periodic orbit were to exist in the voltage space, it must be a regular periodic orbit. In this Section, we show that the existence of a periodic orbit is also absent in general. The problem can be described geometrically where a regular orbit can be described as the intersection between two low dimensional hypersurfaces in a high-dimensional space, which cannot occur if the two surfaces are in general positions.

\begin{theorem}
	\label{no_regular}
	In general, a periodic orbit does not exist in the voltage space.
\end{theorem}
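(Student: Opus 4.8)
The plan is to argue by contradiction, building directly on Theorem~\ref{no_irregular}: since irregular orbits are already excluded, any periodic orbit in the voltage space must be regular, so I would assume a regular periodic orbit $\gamma_v$ of constant period $T>0$ exists, with $\mbf{v}(t+T) = \mbf{v}(t)$ for all $t$. The first step is to track the drift of the long-term memory over one period. Using the simplified dynamics $\mbf{\dot{x}} = \alpha\mbf{C}(\mbf{v})$ of Eq.~(\ref{mem_sim}) together with the $T$-periodicity of $\mbf{v}$, the per-period increment
\begin{equation*}
\Delta\mbf{x} = \mbf{x}(t+T) - \mbf{x}(t) = \alpha\int_{t}^{t+T}\mbf{C}(\mbf{v}(s))\,ds
\end{equation*}
is independent of $t$. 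Because each clause constraint satisfies $C_j(\mbf{v})\geq 0$ and, along an orbit not contained in a solution plane, $\|\mbf{C}(\mbf{v})\|\geq\delta>0$ (otherwise the trajectory converges to a fixed point by Propositions~\ref{fix_sol} and~\ref{basin_th}), the vector $\Delta\mbf{x}$ is a fixed, \emph{nonzero} vector with $\|\Delta\mbf{x}\|_1 \geq \alpha\delta T>0$.

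Next I would convert periodicity of the voltages into an algebraic constraint on $\Delta\mbf{x}$. Since $\dot{\mbf{v}} = \mbf{G}(\mbf{v})\mbf{x}$ and $\mbf{v}(t+T)=\mbf{v}(t)$ forces $\mbf{G}(\mbf{v}(t+T)) = \mbf{G}(\mbf{v}(t))$, the requirement $\dot{\mbf{v}}(t+T)=\dot{\mbf{v}}(t)$ yields $\mbf{G}(\mbf{v}(t))\big(\mbf{x}(t+T)-\mbf{x}(t)\big) = \mbf{0}$, i.e.
\begin{equation*}
\mbf{G}(\mbf{v}(t))\,\Delta\mbf{x} = \mbf{0}, \qquad \forall t\geq 0.
\end{equation*}
In the language of Section~\ref{reduce}, this says the fixed nonzero vector $\Delta\mbf{x}$ lies in the flow kernel $K_{F_1}(\mbf{v}(t))$ at every point of the orbit, which is exactly the condition that the relevant component $\mbf{x}_2^{\ast}$ be periodic along a regular orbit (see the remark following Corollary~\ref{change_relevant}).

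The geometric heart of the argument is to show this cannot hold for a one-dimensional closed orbit in general position. Fixing $\Delta\mbf{x}$, define $\Phi:\mbb{R}^n\mapsto\mbb{R}^n$ by $\Phi(\mbf{v}) = \mbf{G}(\mbf{v})\Delta\mbf{x}$; the constraint above means the entire orbit $\gamma_v$ is contained in the zero set $\Phi^{-1}(\mbf{0})$. But $\Phi$ maps $\mbb{R}^n$ into $\mbb{R}^n$, and away from the discontinuity hyperplanes of Section~\ref{plane} the entries $G_{ij}(\mbf{v})$ are affine in $\mbf{v}$, so $\Phi$ is piecewise affine; on each region $\Phi^{-1}(\mbf{0})$ is an affine subspace of dimension $n$ minus the rank of the linear part of $\Phi$. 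For a closed one-dimensional curve to lie in $\Phi^{-1}(\mbf{0})$, the linear part of $\Phi$ would have to be rank-deficient on every region the orbit visits, a nongeneric coincidence between the polarity data $Q$ and the direction $\Delta\mbf{x}$. In general position the rank is full, so $\Phi^{-1}(\mbf{0})$ is zero-dimensional (isolated points) and cannot contain the one-dimensional orbit $\gamma_v$, which is the contradiction; this is precisely the picture of a regular orbit being forced onto the transverse intersection of low-dimensional hypersurfaces.

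I expect the main obstacle to be making the ``general position'' step rigorous: one must argue that for the instances of interest the linear part of $\Phi$ is generically nonsingular along the orbit, while handling both the piecewise-affine structure induced by the $\min$ operation and the genuinely degenerate instances (which is why the theorem is stated only ``in general''). A careful treatment would either invoke a transversality argument on the finitely many affine pieces the orbit traverses, or show directly that $\mbf{G}(\mbf{v})\Delta\mbf{x}=\mbf{0}$ holding on an open arc generates, by differentiating along the flow and using the assumed non-singularity of $\mbf{G}$ and $\mbf{C}$, enough independent linear relations to force $\Delta\mbf{x}=\mbf{0}$, contradicting $\|\Delta\mbf{x}\|_1\geq\alpha\delta T>0$.
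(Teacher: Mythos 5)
Your proposal is correct at the same level of rigor as the paper's own proof, and its dynamical core is identical to the paper's: both arguments integrate the memory equation of Eq.~(\ref{mem_sim}) over one period to get a time-independent drift vector (the paper's $\mbf{K}$, your $\Delta\mbf{x}$), observe that periodicity of $\dot{\mbf{v}}$ forces this vector into the flow kernel at every point of the orbit, i.e. $\mbf{G}(\mbf{v})\mbf{K}=\mbf{0}$ along $\gamma$, and then kill this condition by a general-position argument. Where you genuinely differ is in how that last step is executed. The paper lifts the problem to the space $\mbb{R}^{3m}$ of admissible matrix entries: the orbit's image under $\mbf{v}\mapsto\mbf{G}(\mbf{v})$ is an $n$-dimensional surface, the kernel condition for fixed $\mbf{K}$ cuts out a $2m$-dimensional plane, and genericity plus the count $n+2m<3m$ (which requires $n<m$) forbids intersection. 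You instead stay in voltage space and view the kernel condition as the zero set of the piecewise-affine map $\Phi(\mbf{v})=\mbf{G}(\mbf{v})\Delta\mbf{x}$ from $\mbb{R}^n$ to $\mbb{R}^n$, arguing its linear part is generically full rank so the zero set is zero-dimensional and cannot contain a closed curve. Each formulation buys something: yours does not need the clause-density hypothesis $n<m$, and it makes explicit (via $\left\lVert\Delta\mbf{x}\right\rVert_1\geq\alpha\delta T>0$) that the drift vector is nonzero, a fact the paper only implicitly assumes when it calls $\mbf{K}$ ``general'' (if $\mbf{K}=\mbf{0}$ the paper's $2m$-dimensional plane becomes all of $\mbb{R}^{3m}$ and the count collapses); on the other hand, your genericity claim concerns the rank of a highly structured matrix --- its diagonal vanishes identically since $G_{ij}$ never depends on $v_i$, and its entries are dictated by $Q$ and $\Delta\mbf{x}$ --- which is harder to justify cleanly than the paper's idealization of $\mbf{G}$ as an arbitrary smooth embedding in general position. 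Both proofs carry the same unresolved hedge, which is precisely why the theorem is stated ``in general,'' and your closing paragraph identifies that weakness accurately.
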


\begin{proof}
	In theorem \ref{no_irregular}, we showed the absence of irregular periodic orbits in the voltage space, it is then sufficient to show that a regular periodic orbit is absent as well. \\
	
	If a regular orbit were to exist in the voltage space, then we can denote its period as $T$, and its initial point as $\mbf{x}_0 \in \mbb{R}^{n+m}$. The change in the memory over a period from time $t$ is then given by
	\begin{equation*}
	\begin{split}
	&\mbf{x}(t+T, \mbf{x}_0) - \mbf{x}(t, \mbf{x}_0) \\
	= &\int_{t}^{t+T} \mbf{\dot{x}}(s, \mbf{x}_0)\,ds \\
	= &\alpha \int_t^{t+T} \mbf{C}\big( \mbf{v}(s, \mbf{x}_0) \big)\, ds \\
	= &\alpha \int_0^T \mbf{C}\big( \mbf{v}(s, \mbf{x}_0) \big)\, ds, \qquad \forall t \geq 0,
	\end{split}
	\end{equation*}
	where in the last equality, we used the periodicity of $\mbf{v}$ to remove the explicit dependency on $t$ in the integral bounds. This allows us to simply set the result as some constant vector $\mbf{K}$ that is constant in time. \\
	
	Clearly, $\mbf{K}$ must be in the flow kernel of the reduced flow field $\forall t \geq 0$ (otherwise the velocity would not be the same after a period), which implies
	\begin{equation}
	\label{kG}
	\forall \mbf{v}\in\gamma, \qquad
	\mbf{G}( \mbf{v} )\,\mbf{K} = \mbf{0}.
	\end{equation}
	It can be assumed that any sensible matrix $\mbf{G}$ dictating the evolution of the voltages must coincide with the polarity matrix $\mbf{Q}$ exactly in its nonzero elements, so $\mbf{G}: \mbb{R}^n \mapsto \mbb{R}^{3m}$ as there are $m$ clauses and $3$ literals per clause, which gives us an injective mapping. Furthermore, it can be assumed that the mapping is $C^{\infty}$ diffeomorphic and general so that the image of $\mbb{R}^n$ is a smooth $n$-dimensional hypersurface at general position in $\mbb{R}^{3m}$. On the other hand, condition (\ref{kG}) is a system of $m$ linear equations, so the set of all matrices (with nonzero elements matching the polarity matrix) solving the system for a given $\mbf{K}$ forms a $3m-m = 2m$ dimensional hyperplane in $\mbb{R}^{3m}$, which is also in general position as $\mbf{K}$ is general. The $n$-dimensional hypersurface generated by the voltages and the $2m$ dimensional hyperplane do not intersect if they are in general position, as
	\begin{equation*}
	2m + n < 3m,
	\end{equation*}
	where we have assumed $n<m$ (see the remark of definition \ref{kernel}).
\end{proof}

\begin{remark}
	Note that the dimension of the surface containing the periodic orbit must be at least 2, which means that the intersection of the two surfaces in $\mbb{R}^{3m}$ must be at least 2 dimensional as well, and this makes the existence of periodic orbits even less likely. Even assuming that $n>m$, meaning that the intersection of the two surfaces is non-trivial, the existence of a periodic orbit in the voltage space is still unlikely. We require an initial memory value that generates a reduced flow field that guarantees the containment of the voltage trajectory completely in the intersection, which does not exist in general.
\end{remark}

\subsection{Absence of Chaos}
\label{chaos}

Devaney's definition of chaos~\cite{Devaney} requires a dense set of periodic orbits and topological transitivity. We have shown in the previous Section that periodic orbits are not supported by the dynamics defined by Eqs.~(\ref{mem}), and this directly precludes the existence of chaos. We can then state the following:

\begin{corollary}[Absence of Chaos]
	\label{cj_inv}
	The voltage dynamics are non-chaotic.
\end{corollary}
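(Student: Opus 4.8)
The plan is to deduce the corollary directly from the non-existence of periodic orbits established in Theorems \ref{no_irregular} and \ref{no_regular}, exploiting the conjunctive logical structure of Devaney's definition. First I would recall that a dynamical system is chaotic in the sense of Devaney~\cite{Devaney} only if it simultaneously satisfies all of the following on the relevant phase space: (i) the set of periodic orbits is dense, (ii) the flow is topologically transitive, and (iii) the dynamics exhibit sensitive dependence on initial conditions. Since these requirements are conjoined, the failure of a single one of them already precludes chaos, so the proof reduces to negating any one condition.

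The core of the argument is to negate condition (i). Theorems \ref{no_irregular} and \ref{no_regular} together show that the reduced voltage dynamics admit neither irregular nor (generically) regular periodic orbits, so the set of periodic orbits of the voltage flow is empty. An empty set is not dense in the nonempty voltage phase space $[-1,+1]^n$, so the density requirement (i) fails outright. Because Devaney chaos lists (i) as a necessary ingredient, its failure immediately rules out the voltage dynamics being chaotic. This is exactly the logic signalled in the paragraph preceding the corollary.

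To reinforce the conclusion along the lines of the summary, I would also appeal to condition (ii). As shown in Section \ref{diss}, the memory dynamics are dissipative, and by Proposition \ref{fix_sol} together with Theorem \ref{basin_th} every trajectory in the invariant region is ultimately attracted toward a solution plane possessing a nonempty open basin of attraction. A flow whose orbits converge onto such an attracting set cannot be topologically transitive on the full phase space, since a convergent orbit has closure equal to the orbit together with its limit and is therefore not dense. Hence condition (ii) also fails, giving an independent route to the same conclusion and matching the claim that ``the memcomputing flow is not topologically transitive.''

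I do not expect a genuine obstacle here, since this is a corollary; the only care needed is bookkeeping. Specifically, I would state precisely on which space the non-chaos claim is made (the reduced voltage space, not merely the full state space), and I would carry along the \emph{generic} qualifier inherited from Theorem \ref{no_regular}, so that the absence of periodic orbits—and hence of chaos—is asserted for general choices of the flow fields $\mbf{G}$ and $\mbf{C}$. One should also confirm that Devaney's formulation genuinely demands density of periodic points as a separate requirement, which it does, so that negating it is a legitimate way to exclude chaos; the remainder of the argument is then immediate.
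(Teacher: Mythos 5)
Your proposal is correct and follows essentially the same route as the paper: the paper likewise negates the density-of-periodic-orbits requirement in Devaney's definition using the non-periodicity results of Section \ref{no_period}, and relegates the failure of topological transitivity (which you derive via dissipativeness and attraction to solution planes, and the paper derives from confinement in the compact invariant set with a fixed point whose $\omega$-limit set is not the whole set) to a supplementary remark. Your added bookkeeping about the reduced voltage space and the generic qualifier from Theorem \ref{no_regular} is consistent with, and slightly more careful than, the paper's own statement.
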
 

\begin{remark}
	Although not required to show the absence of chaos, we also note that the voltage dynamics are not topologically transitive if a solution of the 3-SAT problem exists which implies the existence of fixed points (See Proposition \ref{fix_sol}). This is because the dynamics are confined in a compact positive invariant set $O$ with nonempty interior (see Section \ref{compact}) and there is at least one fixed point in that set with its $\omega$-limit set that is not $O$. 
\end{remark}

\section{Dissipativeness}
\label{diss}

A rather important property of the memory dynamics is dissipativeness. In other words, the measure (or volume) of an initial set contracts under the flow field, eventually evolving to a measure zero set. To show dissipativeness for well-behaved (everywhere differentiable) vector fields, it is sufficient to show that the divergence is negative everywhere. However, our dynamics are governed by a discontinuous flow field, so we have to carefully account for the regions of discontinuities (see Section \ref{plane}). 

\subsection{Preliminaries}

Before we discuss the dissipative property of the memory dynamics, we first have to formally define the notion of dissipativeness for a continuous dynamical system.

\begin{definition}[Dissipativeness]
	\label{def_diss}
	Given a vector field $F: \mbb{R}^n \mapsto \mbb{R}^n$ that admits a positive solution, we let the corresponding time mapping be $T_s: \mbb{R}^n \mapsto \mbb{R}^n$ for $\forall s \geq 0$. Let $\Omega_0 \subseteq \mbb{R}^n$ be a domain of nonzero measure, then $\forall s \geq 0$, we denote $\Omega(s,\Omega_0)$ as the following
	\begin{equation*}
	\Omega(s,\Omega_0) = T_s(\Omega_0) = \{ T_s(\mbf{x}_0) \,|\, \mbf{x_0} \in \Omega_0 \},
	\end{equation*}
	and the measure $\mu(s,\Omega_0)$ as the following
	\begin{equation*}
	\mu(s,\Omega_0) = \mu\big( \Omega(s,\Omega_0) \big).
	\end{equation*}
	If the following is true
	\begin{equation*}
	\forall \Omega_0, \quad \forall s > 0, \quad \mu(s,\Omega_0) < \mu(0,\Omega_0),
	\end{equation*}
	then the system is said to be {\bf dissipative}. If the last inequality is not strict, then the system is said to be {\bf weakly dissipative}.
\end{definition}

\begin{remark}
	If $\mu(s,\Omega_0)$ is everywhere differentiable in $s$, then it is possible for us to quantify the rate of volume contraction as the following forward time derivative
	\begin{equation*}
	\forall s\geq 0, \qquad \dot{\mu}(s,\Omega_0) = \lim_{t\to 0^{+}}\frac{1}{t}\big( \mu(s+t,\Omega_0) - \mu(s,\Omega_0) \big).
	\end{equation*}
	An equivalent definition of a dissipative system would then be the following
	\begin{equation*}
	\forall \Omega_0, \quad \dot{\mu}(0,\Omega_0) < 0,
	\end{equation*}
	meaning that any initial domain must continually shrink in time. For the sake of clarity, we can discard the trivial argument $s=0$ and simply write $\dot{\mu}(\Omega_0) = \dot{\mu}(0,\Omega_0)$ from here on. \\
\end{remark}

It is well known that a bounded domain can be approximated\footnote{Here, we are speaking of approximation in the measure-theoretic sense. In other words, the measure of the domain and the measure of its approximation are the same.} as the union of regular domains~\cite{topology} (see Definition \ref{regular}). Therefore, to show that a system is dissipative, it is sufficient to show that the volume of any regular domain contracts under the flow field. In the case where the flow field is in $C^1$, the mapping $T_s$ is diffeomorphic for $\forall s>0$, meaning that the shape of the boundary will be preserved (being always diffeomorphic to a sphere), allowing us to make use of the following Lemma.

\begin{lemma}
	\label{div_diss}
	Given a vector field $F: \mbb{R}^n \mapsto \mbb{R}^n$ differentiable everywhere, the system is dissipative if the following is true
	\begin{equation*}
	\forall \mbf{x} \in \mbb{R}^n, \qquad \nabla \cdot F(\mbf{x}) < 0.
	\end{equation*}
\end{lemma}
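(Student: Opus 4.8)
The plan is to prove this by the classical Liouville argument relating the divergence of the field to the infinitesimal rate of volume change under the flow. Since $F$ is everywhere differentiable (and, as flagged in the remark preceding the Lemma, we work in the $C^1$ regime in which $T_s$ is a diffeomorphism), standard ODE theory---Theorem \ref{pic} together with the smooth dependence of solutions on initial data---guarantees that the time mapping $T_s$ is a $C^1$ diffeomorphism of $\mbb{R}^n$ for each $s \geq 0$, with $T_0 = \mathrm{id}$ and the semigroup property. I would state this $C^1$ hypothesis explicitly at the outset, since it is what the diffeomorphism property genuinely requires.

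First I would introduce the Jacobian of the flow, $J(s,\mbf{x}_0) = \det\big( DT_s(\mbf{x}_0) \big)$, and establish Liouville's formula
\begin{equation*}
\frac{\partial}{\partial s} J(s,\mbf{x}_0) = \big( \nabla \cdot F \big)\big( T_s(\mbf{x}_0) \big)\, J(s,\mbf{x}_0), \qquad J(0,\mbf{x}_0) = 1.
\end{equation*}
This follows from Jacobi's formula for the derivative of a determinant applied to the variational equation $\tfrac{\partial}{\partial s} DT_s = (DF)(T_s)\,DT_s$ satisfied by the spatial derivative of the flow, together with the identity $\mathrm{tr}(DF) = \nabla\cdot F$.

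Next, by the change-of-variables formula for the diffeomorphism $T_s$, the measure of the evolved domain is $\mu(s,\Omega_0) = \int_{\Omega_0} J(s,\mbf{x}_0)\, dV_0$. Differentiating under the integral sign (justified by the continuity of $\partial_s J$ and the finiteness of $\mu(0,\Omega_0)$) and substituting Liouville's formula yields, after undoing the change of variables,
\begin{equation*}
\dot{\mu}(s,\Omega_0) = \int_{\Omega_0} \big(\nabla\cdot F\big)\big(T_s(\mbf{x}_0)\big)\, J(s,\mbf{x}_0)\, dV_0 = \int_{T_s(\Omega_0)} \nabla\cdot F \, dV.
\end{equation*}
At any $s \geq 0$ this integral is strictly negative, since the integrand is everywhere negative and $T_s(\Omega_0)$ has nonzero measure (being the diffeomorphic image of $\Omega_0$). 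Integrating $\dot{\mu}$ over $[0,s]$ then gives $\mu(s,\Omega_0) < \mu(0,\Omega_0)$ for all $s>0$, which is precisely dissipativeness in the sense of Definition \ref{def_diss}.

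The hard part will be the rigorous derivation of Liouville's formula---that is, justifying the variational equation for $DT_s$ and the application of Jacobi's formula---which is where all the analytic content resides; once that identity is secured, the remaining steps (change of variables and differentiation under the integral) are routine. A secondary care point, already anticipated above, is the gap between the stated hypothesis (mere differentiability of $F$) and the $C^1$ regularity actually needed for the flow to be a diffeomorphism with smoothly varying Jacobian, which I would resolve by adopting the $C^1$ assumption noted in the preceding remark.
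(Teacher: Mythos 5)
Your proof is correct, but it takes a genuinely different route from the paper. The paper's proof is a one-liner: it identifies $\dot{\mu}(\Omega)$ with the boundary flux $\int_{\partial\Omega} \braket{F,\mbf{n}}\,dA$ (a kinematic transport identity taken as given) and then applies the divergence theorem to convert this into $\int_{\Omega} \nabla\cdot F\,dV < 0$, invoking the remark after Definition \ref{def_diss} that negativity of $\dot\mu(0,\Omega_0)$ for every domain suffices. You instead work directly with the flow map: you derive Liouville's formula $\partial_s J = (\nabla\cdot F)(T_s)\,J$ from the variational equation and Jacobi's formula, convert $\mu(s,\Omega_0)$ via change of variables, differentiate under the integral, and integrate in time to obtain the strict inequality $\mu(s,\Omega_0)<\mu(0,\Omega_0)$ of Definition \ref{def_diss} itself rather than only its infinitesimal form. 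What your approach buys is self-containedness and rigor: the transport identity the paper starts from is precisely what your Jacobian computation proves, so nothing is assumed beyond standard ODE theory; you also make explicit the $C^1$ regularity (flagged but not incorporated into the hypothesis by the paper) that is genuinely needed for $T_s$ to be a diffeomorphism with differentiable Jacobian, and you handle strictness carefully by noting that $T_s(\Omega_0)$ has positive measure. What the paper's approach buys is brevity and geometric transparency --- the volume change is visibly the flux of the field through the moving boundary --- at the cost of leaving that identity, and the regularity underpinning it, unproved. One small point worth retaining from the paper's setup: the domains should be taken bounded (regular, per Definition \ref{regular}), since for $\mu(0,\Omega_0)=\infty$ the claim is vacuous and differentiation under the integral needs the finiteness you invoke.
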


\begin{proof}
	The proof follows directly from divergence theorem
	\begin{equation*}
	\dot{\mu}(\Omega)
	= \int_{\partial\Omega} \big( F(\mbf{x}) \cdot \mbf{n}(\mbf{x}) \big) \,dA
	= \int_{\Omega} \big( \nabla \cdot F(\mbf{x}) \big) \,dV
	< 0,
	\end{equation*}
	where $\Omega \subseteq \mbb{R}^n$ is any smooth domain. This implies that $F$ is dissipative. \\
\end{proof}

\begin{remark}
	The converse of Lemma \ref{div_diss} is almost true, in the sense that if the vector field $F$ contains regions of positive\footnote{Note that it is not sufficient that the divergence be non-negative, as it is possible for the divergence to be zero, forming a closed set. This is why the converse of Lemma \ref{div_diss} is not strictly true.} divergence, then the system cannot be dissipative. To show this, we assume $\exists \mbf{x} \in \mbb{R}^n$ such that $\nabla \cdot F(\mbf{x}) > 0$, then it is clear that the region where the divergence is positive
	\begin{equation*}
	D = \{ \mbf{x}\in\mbb{R}^n \,|\, \nabla\cdot F(\mbf{x}) > 0 \}
	\end{equation*}
	is an open set. This means that for any $\mbf{x}_0 \in D$, $\exists \epsilon >0$ such that the open ball $B_{\epsilon}(\mbf{x}_0) \subset D$, and integrating the divergence over the open ball gives
	\begin{equation*}
	\int_{B_{\epsilon}(\mbf{x}_0)} \big( \nabla \cdot F(\mbf{x}) \big) \,dV > 0,
	\end{equation*}
	implying that $\dot{\mu}\big( B_{\epsilon}(\mbf{x}_0) \big) > 0$, meaning that the system cannot be dissipative. \\
\end{remark}

The analysis in this subsection assumes that the field is differentiable everywhere. In the next subsection, we generalize this analysis to fields that are differentiable everywhere except at certain hyperplanes. This class of fields contains the flow field of the memory dynamics (see Section \ref{plane}).

\subsection{Dissipativeness of Memory Dynamics}
\label{diss_discon}

Recall that the ODE governing the memory dynamics is given in Eqs. (\ref{mem}) as
\begin{equation*}
\begin{split}
&\dot{v}_i = 
\sum_{j=1}^m \Big\{ \frac{1}{2} x_{l,j}x_{s,j} q_{ij} \min_{\{i'\neq i \,|\, q_{i'j} \neq 0 \}}(1-q_{i'j}v_{i'}) 
+ (1+\zeta x_{l,j})(1-x_{s,j})\delta_{i \sigma_j}q_{ij}C_j(\mbf{v}) \Big\},
\\
&\dot{x}_{s,j} = \beta \big( x_{s,j} + \epsilon \big) \big( C_j(\mbf{v}) - \gamma \big), \\
&\dot{x}_{l,j} = \alpha \big(C_j(\mbf{v}) - \delta).
\end{split}
\end{equation*}
In Section \ref{plane}, we argued that the memory flow field is separated into continuous regions by hyperplanes. The divergence cannot be defined at the hyperplanes as the field is discontinuous, so we restrict the divergence analysis to a domain where the field is in $C^1$. \\

The divergence of the flow field in the voltage space\footnote{The reason why we only care about the dissipativeness in the voltage space is because it is directly relevant to the convergence of the 3-SAT solution search. It is possible for the solver to be efficient even if the memory is non-dissipative. } is
\begin{equation*}
\begin{split}
\nabla_{\mbf{v}} \cdot F_{\mbf{v}} 
= & \nabla_{\mbf{v}} \Big( \mbf{G}(\mbf{v}) (\mbf{x_l} \ast \mbf{x_s}) \Big)
+ \nabla_{\mbf{v}} \Big( \mbf{G}(\mbf{v}) \big( (1+\zeta \mbf{x_s}) \ast (1-\mbf{x_l}) \big) \Big) \\
= & \sum_{ij} x_{l,j}x_{s,j} \partial_{v_i}G_{ij}(\mbf{v}) + \sum_{ij} (1+\zeta x_{l,j}) (1-x_{s,j}) \partial_{v_i} R_{ij}(\mbf{v}) \\
= & \sum_{ij} (1+\zeta x_{l,j})(1-x_{s,j})q_{ij} \partial_{v_i}\big( \delta_{i\sigma_j} C_j(\mbf{v}) \big),
\end{split}
\end{equation*}
where the divergence of the gradient-like term is zero because the element $G_{ij}$ never depends on $v_i$ (see Eq. (\ref{G})). The expression for the divergence of the rigidity term can be further simplified if we realize that
\begin{equation*}
\sum_{i} q_{ij} \delta_{i\sigma_j} \partial_{v_i} C_j(\mbf{v})
= - \frac{1}{2} \sum_{i} q_{ij}^2\delta_{i\sigma_j}
= - \frac{1}{2} q_{\sigma_j j}^2 = -\frac{1}{2}.
\end{equation*}
And the divergence expression reduces to
\begin{equation*}
\nabla_{\mbf{v}} \cdot F_{\mbf{v}} = -\frac{1}{2} \sum_j (1+\zeta x_{l,j})(1-x_{s,j}) < 0,
\end{equation*}
meaning that the voltage is {\it dissipative} at any point where the field is continuous. \\

It is easy to study how the addition of discontinuous hyperplanes affects the dissipativeness of the voltages. For the sake of simplicity, we focus on the following simple 2-SAT formula with only one clause
\begin{equation*}
(v_1 \lor v_2),
\end{equation*}
where, WLOG, the polarity can be assumed positive for both literals (see Section \ref{gauge}). The discontinuity of the voltage flow field clearly is in the line $v_1 = v_2$. Note that the gradient-like field
\begin{equation*}
\Big( \frac{1}{2}(1-v_2), \frac{1}{2}(1-v_1) \Big)
\end{equation*}
is continuous everywhere, while the rigidity field is not, which is given in the upper-left and lower-right regions as
\begin{equation*}
\frac{1}{2}(0, 1-v_2), \qquad \frac{1}{2}(1-v_1, 0),
\end{equation*}
respectively. The field points away from the line, meaning that the volume of a domain approaching this boundary will be expanded, with the expansion being greater the more unsatisfied the clause is. This is in fact a desired feature of the rigidity field as it attempts to expand the volume if the initial domain is in a frustrated region, which allows for a more thorough exploration of the voltage space. \\

In conclusion, the flow field in the voltage space is dissipative everywhere except at certain hyperplanes, where the domain may be expanded in a manner 
that facilitates the finding of the fixed points. 

\section{$O(n^{\alpha})$, $\alpha \leq 1$, Scaling with Problem Size}\label{TFT}
In this Section we employ results from the (supersymmetric) topological-field theory (TFT) of dynamical systems~\cite{Entropy} to prove that the {\it continuous-time} dynamics defined by Eqs.~(\ref{mem}) is such that the system reaches a fixed point/solution plane (solution of the 3-SAT) in a time that scales with the 3-SAT instance size, $n$, as $O(n^{\alpha})$, with $\alpha \leq 1$. WLOG by ``size'' we mean the number of variables, $n$, in the instance at a fixed clause-to-variable density, $\alpha_r=m/n$ (see section~\ref{prelim}). When the system has reached an equilibrium point then all clauses, $C_j(\mbf{v}_0)$, in the problem instance are strictly zero at the solution vector, $\mbf{v}_0$.\\ 

\begin{remark}
	Note that for a continuous-time dynamics the time a physical system requires to reach an equilibrium point is strictly infinite, irrespective of the size of the problem/phase space. In practice, as it is done in numerical simulations, we say that the system has found the solution to the problem when all clauses are less then a threshold whose value does not depend on the size of the instance. As shown in proposition~\ref{sign}, this threshold can be chosen to be as large as $1/2$, namely when $\mbf{C}(\mbf{v}) < \mbf{\frac{1}{2}}$, then $\sign(\mbf{v})$ is a solution vector. Therefore, when we discuss about the time to find the solution, we mean the shortest time for the dynamical system to cross a fixed threshold, in either the clauses or voltage variables. \\
\end{remark}

Using supersymmetric TFT it was shown that {\it instantons} are the {\it only} ``low-energy'' (collective) dynamics of digital memcomputing machines, as those described by Eqs.~(\ref{mem})~\cite{topo,DMtopo}. Instantons are families of classical trajectories in the phase space connecting critical points with given index (number of unstable directions) to critical points with a lower index (less number of unstable directions). The difference between indexes of the two critical 
points is typically 1, but it could be larger than 1. The reverse process (anti-instantons) connecting critical points 
of increasing index can only occur in the presence of noise and is ``gapped'', which means that even in the presence of noise it is exponentially suppressed compared to the instantonic process~\cite{Entropy}. \\

As shown in~\cite{topo,DMtopo}, the dynamics described by Eqs.~(\ref{mem}) then proceed via a succession of instantonic ``jumps'' that ``shed'' unstable directions in going from a critical point to the next. In addition, as proved in Section~\ref{no_period}, if the dynamics admit solutions, periodic orbits and chaos cannot co-exit.  

Since the instantonic trajectories are bounded (see Section \ref{compact}) the number of unstable directions is at most equal to the dimensionality of the phase space, $n+2m= n(1+2\alpha_r)$, and the latter grows linearly with problem size (at fixed density). 
Therefore, the total number of instantonic steps to reach equilibrium can only grow at most {\it linearly} with system size~\cite{DMtopo}. The fact that 
the number of steps could scale {\it sub-linearly} with the system size is because an instanton can connect critical points that differ by more 
than one unstable direction. We now want to translate this result into the actual physical time to reach a solution. \\

The time associated with each instanton (the instanton ``width'') is {\it independent} of the size of the instance and depends only on the parameters $\alpha$ and $\beta$ in Eqs.~(\ref{mem}), the rate of change of the $\mbf{x_s}$ and  $\mbf{x_l}$ variables, respectively. This can be seen by considering the path-integral 
form of the topological action associated to Eqs.~(\ref{mem}):
\begin{equation}\label{Q_pathint}
S= i\{  Q, \Psi\} \equiv i\{ Q, i\int_{0}^t d\tau \bar \chi(\tau)(\dot{\mbf{x}}(\tau)-F(\mbf{x}(\tau))) \},
\end{equation}
where the symbol $\{  Q, \Psi\}$ means (summation over repeated indexes is understood)
\begin{equation}
\{  Q, \Psi\} = \int_{0}^t d\tau \left(\chi^i(\tau)\frac{\delta }{\delta x^i(\tau)} + B_i(\tau)\frac{\delta }{\delta \bar\chi_i(\tau)}\right) \Psi,  
\end{equation}
with $\mbf{B}$ the vector of momenta conjugate to the bosonic variables $\mbf{x}$, and the vectors $\mbf{\chi}$ and $\bar{\mbf{\chi}}$ 
representing pairs of Faddeev-Popov ghosts and anti-ghosts, respectively (fermionic/Grassmann variables). \\

The Lagrangian of the system can be read from Eq.~(\ref{Q_pathint}). By taking the second derivative of this Lagrangian 
with respect to the memory variables we obtain the frequency of the instanton, and its inverse is its time width~\cite{Coleman}. Let us call 
this time $T_{inst,j}$ for each instanton $j$. To this time 
we need to add the time, $T_{cr,j}$, the system spends on the initial critical point (local supersymmetric vacuum) before each instantonic jump. This time is also 
independent of the size of the problem and depends only on the degree of memory in the system, which is again dictated by the parameters $\alpha$ and $\beta$~\cite{bearden2018}. Let us call $T_{max}=  \max_j(T_{cr,j}+ T_{inst,j})$ the maximum time required to do an instantonic jump in the phase space, including 
the time the system spends on initial critical points. Again, this time does not depend on the size of the instance (size of the phase space), only on the parameters $\alpha$ and $\beta$. By putting all this together the maximum {\it physical} time, $T_{phys}$, required by the system to reach the solution of a given 3-SAT problem of size $n$ and density $\alpha_r$ is then $T_{phys} \leq n(1+2r)T_{max}$. We have then proved

\begin{proposition}
	Given solvable 3-SAT instances of $n$ variables and fixed density $\alpha_r$. The dynamics described by Eqs.~(\ref{mem}) reach the solution 
	of these instances in a physical time $O(n^{\alpha})$, with $\alpha \leq 1$. 
\end{proposition}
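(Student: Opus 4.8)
The plan is to bound the physical time-to-solution by factoring it as a product of two separately size-controlled quantities: the number of elementary ``jumps'' the trajectory makes through phase space, and the time cost of each such jump. The overarching framework is the supersymmetric TFT of Eqs.~(\ref{mem}), in which the only collective low-energy excitations are instantons connecting critical points of the flow. First I would set up the topological action in its path-integral form (Eq.~(\ref{Q_pathint})), identify the critical points of $F$ (which, by Proposition~\ref{fix_sol} and Corollary~\ref{corr_conv}, include the solution planes as the only minima), and characterize the trajectory as a concatenation of instantonic segments interspersed with dwell times on intermediate critical points. I would also fix the meaning of ``reaching the solution'' as crossing a size-independent threshold in the clauses, which by Proposition~\ref{sign} can be taken as large as $1/2$.

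Second, I would bound the number of instantonic jumps. The key structural input is that each instanton strictly lowers the Morse index (number of unstable directions) of the critical point it departs, while the reverse anti-instantonic process is gapped and hence absent in the deterministic, low-noise continuous dynamics. Consequently the index is a monotonically non-increasing integer along the trajectory. Since the flow is confined to the compact positive-invariant set $O$ (Section~\ref{compact}), the index at any critical point is at most the phase-space dimension $n+2m = n(1+2\alpha_r)$, which is linear in $n$ at fixed density. Because each jump decreases the index by at least one, the total number of jumps is at most $n(1+2\alpha_r)$; it can be genuinely sub-linear when a single instanton connects points differing by more than one unstable direction. The absence of periodic orbits and chaos (Section~\ref{no_period}) guarantees the trajectory cannot stall in a cycle, so this descending sequence of indices must terminate at a solution plane.

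Third, I would bound the per-jump time. Reading off the Lagrangian from Eq.~(\ref{Q_pathint}) and taking its second derivative with respect to the memory variables yields the instanton frequency, whose inverse is the instanton width $T_{inst,j}$. This width depends only on the memory rates $\alpha$ and $\beta$ and not on $n$. Adding the dwell time $T_{cr,j}$ on the initial critical point---again set by the degree of memory and independent of instance size---and taking $T_{max}=\max_j(T_{cr,j}+T_{inst,j})$, the total physical time obeys $T_{phys}\le n(1+2\alpha_r)\,T_{max}$, which is $O(n)$, establishing $\alpha\le 1$.

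The hardest part will be rigorously justifying the two size-independence claims that carry the argument: that the instanton width and the critical-point dwell time depend only on $\{\alpha,\beta\}$ and not on the growing dimensionality of the phase space. This requires showing that the relevant curvature of the action transverse to the instanton trajectory does not degrade as $n$ grows, i.e.\ that the local dynamics of shedding an unstable direction is set by the per-clause memory timescales rather than by collective, size-dependent modes. A secondary technical obstacle is accommodating the discontinuous, patchy vector field (Section~\ref{cara}) within the smooth TFT formalism; I would address this by invoking that the discontinuity hyperplanes form a measure-zero set (Section~\ref{plane}), so that the instanton calculus applies almost everywhere along the trajectory.
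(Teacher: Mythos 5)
Your proposal follows essentially the same route as the paper's proof: the instanton picture from the supersymmetric TFT, a linear bound on the number of jumps via the phase-space dimension $n(1+2\alpha_r)$, and a size-independent per-jump time $T_{max}=\max_j(T_{cr,j}+T_{inst,j})$ extracted from the Lagrangian of Eq.~(\ref{Q_pathint}), yielding $T_{phys}\leq n(1+2\alpha_r)T_{max}$. The two concerns you flag at the end (rigorous size-independence of the instanton width and the treatment of the discontinuity hyperplanes) are indeed the points the paper itself leaves at the physical level of rigor rather than proving formally.
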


\begin{remark}
	Although the physical time scales (sub-)linearly with problem size, its actual magnitude depends on $T_{max}$ (the slope of the growth with respect to $n$), which in turn depends on the rate of change of the $\mbf{x_s}$ and  $\mbf{x_l}$ variables. In addition, in the presence of physical noise, anti-instantons 
	appear in the dynamics. Since anti-instantons are gapped (exponentially suppressed) they may increase somewhat the degree of the polynomial, but cannot transform a polynomial 
	scalability into an exponential one.
	
	Note also that the (sub-)linear scalability obtained above does not necessarily apply to the numerical integration of Eqs.~(\ref{mem}). The reason is that time discretization transforms continuous dynamics to an effective discrete map. For discrete maps topological supersymmetry is broken explicitly, namely the evolution operator does not commute with the Noether charge of the symmetry and the above analysis does not apply as is~\cite{Entropy}. Of course, an efficient numerical method to integrate Eqs.~(\ref{mem}) may still be found as we have shown in the numerical results of the main text. However, this numerical method cannot be strictly $O(n^{\alpha})$, with $\alpha \leq 1$, because different integration schemes introduce different numerical noise. \\
\end{remark}

%

\end{document}